\documentclass[a4paper,11pt,DIV12]{scrartcl}

\usepackage{kMeans}

\title{Improved Smoothed Analysis of the $k$-Means Method\thanks{An extended
       abstract of this work will appear in \emph{Proc.\ of the 20th ACM-SIAM
       Symposium on Discrete Algorithms (SODA 2009)}.}}

\author{Bodo Manthey\thanks{Work done in part at Yale University, Department of
        Computer Science, supported by the Postdoc-Program of the German
        Academic Exchange Service (DAAD).} \\
        \footnotesize Saarland University \\[\addressminus]
        \footnotesize Department of Computer Science \\[\addressminus]
        \footnotesize \texttt{manthey@cs.uni-sb.de}
   \and
        Heiko R\"oglin\thanks{Supported by a fellowship within the
        Postdoc-Program of the German Academic Exchange Service (DAAD).} \\
        \footnotesize Boston University \\[\addressminus]
        \footnotesize Department of Computer Science \\[\addressminus]
        \footnotesize \texttt{heiko@roeglin.org}}

\begin{document}

\maketitle

\begin{abstract}
The $k$-means method is a widely used clustering algorithm. One of its
distinguished features is its speed in practice. Its worst-case running-time,
however, is exponential, leaving a gap between practical and theoretical
performance. Arthur and Vassilvitskii~\cite{ArthurVassilvitskii:ICP:2006} aimed
at closing this gap, and they proved a bound of $\poly(n^k, \sigma^{-1})$ on the
smoothed running-time of the $k$-means method, where $n$ is the number of data
points and $\sigma$ is the standard deviation of the Gaussian perturbation. This bound,
though better than the worst-case bound, is still much larger than the
running-time observed in practice.

We improve the smoothed analysis of the $k$-means method by showing two upper
bounds on the expected running-time of $k$-means. First, we prove that the
expected running-time is bounded by a polynomial in $n^{\sqrt k}$ and
$\sigma^{-1}$. Second, we prove an upper bound of
$k^{kd} \cdot \poly(n, \sigma^{-1})$, where $d$ is the dimension of the data
space. The polynomial is independent of $k$ and $d$, and we obtain a
polynomial bound for the expected running-time for
$k, d \in O(\sqrt{\log n/\log \log n})$.

Finally, we show that $k$-means runs in smoothed polynomial time for one-di\-men\-sional
instances.
\end{abstract}

\section{Introduction}

The $k$-means method is a very popular algorithm for clustering high-dimensional
data. It is based on ideas by Lloyd~\cite{Lloyd}. It is a local search
algorithm: Initiated with $k$ arbitrary cluster centers, it assigns every data
point to its nearest center, and then readjusts the centers, reassigns
the data points, \ldots\ until it stabilizes. (In Section~\ref{ssec:kmeans}, we
describe the algorithm formally.) The $k$-means method terminates in a local
optimum, which might be far worse than the global optimum.
However, in practice it works very well. It is particularly popular
because of its simplicity and its speed: ``In practice, the number
of iterations is much less than the number of samples'', as Duda et
al.~\cite[Section 10.4.3]{Duda} put it. According to Berkhin~\cite{Berkhin},
the $k$-means method ``is by far the most popular clustering tool used in
scientific and industrial applications.''

The practical performance and popularity of the $k$-means method is at stark
contrast to its performance in theory. The only upper bounds for its
running-time are based on the observation that no clustering appears twice in a
run of $k$-means: Obviously, $n$ points can be distributed among $k$
clusters in only $k^n$ ways. Furthermore, the number of Voronoi partitions of $n$ points in
$\RR^d$ into $k$ classes is bounded by a polynomial in
$n^{kd}$~\cite{InabaEA:WeightedVoronoi:2000}, which yields an upper bound of
$\poly(n^{kd})$. On the other hand,
Arthur and Vassilvitskii~\cite{ArthurVassilvitskii:HowSlow:2006} showed that
$k$-means can run for $2^{\Omega(\sqrt n)}$ iterations in the worst case.

To close the gap between good practical and poor theoretical performance of
algorithms, Spielman and Teng introduced the notion of smoothed
analysis~\cite{SpielmanTeng:SmoothedAnalysisWhy:2004}: An adversary specifies an
instance, and this instance is then subject to slight random perturbations. The
smoothed running-time is the maximum over the adversarial choices of the
expected running-time. On the one hand, this rules out pathological, isolated
worst-case instances. On the other hand, smoothed analysis, unlike average-case
analysis, is not dominated by random instances since the instances are not
completely random; random instances are usually not typical instances and have
special properties with high probability. Thus, smoothed analysis also
circumvents the drawbacks of average-case analysis. For a survey of smoothed
analysis, we refer to Spielman and Teng~\cite{SpielmanTeng:SmoothedFoCM:2006}.

The goal of this paper is to bound the smoothed running-time of the $k$-means
method. There are basically two reasons why the smoothed running-time of the
$k$-means method is a more realistic measure than its worst-case running-time:
First, data obtained from measurements is inherently noisy. So even if the
original data were a bad instance for $k$-means, the data measured is most
likely a slight perturbation of it. Second, if the data possesses a meaningful
$k$-clustering, then slightly perturbing the data should preserve this
clustering. Thus, smoothed analysis might help to obtain a faster $k$-means
method: We take the data measured, perturb it slightly, and then run $k$-means
on the perturbed instance. The bounds for the smoothed running-time carry over
to this variant of the $k$-means method.

\subsection[k-Means Method]{$k$-Means Method}
\label{ssec:kmeans}

An instance for $k$-means clustering is a point set $\points \subseteq \RR^d$
consisting of $n$ points. The aim is to find a clustering
$\cluster 1, \ldots, \cluster k$ of $\points$, i.e., a partition of $\points$,
as well as cluster centers such that the potential
\[
  \sum_{i=1}^k \sum_{x \in \cluster i} \norm{x-c_i}^2
\]
is minimized. Given the cluster centers, every data point should obviously be
assigned to the cluster whose center is closed to it. The name $k$-means stems
from the fact that, given the clusters, the centers $c_1, \ldots, c_k$
should be chosen as the centers of mass, i.e., $c_i =
\frac{1}{|\cluster i|}\sum_{x \in \cluster i} x$ of~$\cluster i$.
The $k$-means method proceeds now as follows:
\begin{enumerate}
\item Select cluster centers $c_1, \ldots, c_k$.
\item \label{next} Assign every $x \in \points$ to the cluster
$\cluster i$ whose cluster center $c_i$ is closest to it.
\item Set $c_i = \frac{1}{|\cluster i|}\sum_{x \in \cluster i} x$.
\item If clusters or centers have changed, goto~\ref{next}. Otherwise,
      terminate.
\end{enumerate}

Since the potential decreases in every step, no clustering occurs twice, and the
algorithm eventually terminates.

\subsection{Related Work}

The problem of finding a good clustering can be approximated arbitrarily
well: B{\u{a}}doiu et al.~\cite{Badoiu}, Matou{\v{s}}ek~\cite{Matousek},
and Kumar et al.~\cite{Kumar} devised polynomial time approximation
schemes with different dependencies on the approximation ratio $(1+\eps)$
as well as $n$, $k$, and $d$: 
$O(2^{O(k \eps^{-2} \log k)} \cdot nd)$, $O(n\eps^{-2k^2d} \log^k n)$,
and $O(\exp(k/\eps) \cdot nd)$,
respectively.

While the polynomial time approximation schemes show that $k$-means clustering
can be approximated arbitrarily well, the method of choice for finding a
$k$-clustering is the $k$-means method due to its performance in practice.
However, the only polynomial bound for $k$-means holds for $d=1$, and
only for instances with polynomial spread~\cite{HarPeled}, which is the
maximum distance of points divided by the minimum distance.

Arthur and Vassilvitskii~\cite{ArthurVassilvitskii:ICP:2006} have analyzed
the running-time of the $k$-means method subject to Gaussian perturbation: The
points are drawn according to independent $d$-dimensional Gaussian distributions
with standard deviation $\sigma$. Arthur and Vassilvitskii proved that the
expected running-time after perturbing the
input with Gaussians with standard deviation $\sigma$ is polynomial in
$n^k$, $d$, the diameter of the perturbed point set, and $1/\sigma$.

Recently, Arthur~\cite{Arthur:Comm:2008} showed that the probability that the
running-time of $k$-means subject to Gaussian perturbations exceeds a polynomial
in $n$, $d$, the diameter of the instance, and $1/\sigma$ is bounded by 
$O(1/n)$. However, his argument does not yield any significant bound
on the expected running-time of $k$-means: The probability of $O(1/n)$ that
the running-time exceeds a polynomial bound is too large to yield an upper bound
for the expected running-time, except for the trivial upper bound of
$\poly(n^{kd})$.

\subsection{New Results}

We improve the smoothed analysis of the $k$-means method by
proving two upper bounds on its running-time.
First, we show that the smoothed running-time of $k$-means
is bounded by a polynomial in $n^{\sqrt k}$ and $1/\sigma$.

\begin{theorem}
\label{thm:sqrtk}
Let $\points \subseteq \RR^d$ be a set of $n$ points drawn according to
independent Gaussian distributions whose means are in $[0,1]^d$. Then the
expected running-time of the $k$-means method on the instance $\points$ is
bounded from above by a polynomial in $n^{\sqrt k}$ and $1/\sigma$.
\end{theorem}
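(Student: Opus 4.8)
The plan is to bound the number of iterations in which the $k$-means potential decreases by only a tiny amount, since the number of iterations in which it decreases substantially is easy to control. First I would set up the deterministic scaffolding. After the Gaussian perturbation, standard tail bounds show that with probability at least $1-n^{-c}$ all $n$ points lie in a ball of radius $O(\sqrt{d\log n})$, so the diameter $D$ of $\points$ is $\poly(n,d)$ and the initial potential $\Psi_0$ is at most $nD^2=\poly(n,d)$. Conditioning on this event, the number of iterations in which the potential drops by at least $\eps$ is at most $\Psi_0/\eps$. Hence it suffices to show that, with high probability, the $k$-means method never performs a long run of consecutive iterations whose combined potential decrease is smaller than some $\eps=1/\poly(n^{\sqrt k},1/\sigma)$; the $\poly(n^{kd})$ worst-case bound then absorbs the negligible failure probabilities in the passage to the expectation.

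Next I would analyze what a single iteration does to the potential. When a data point $x$ is reassigned from $\cluster i$ to $\cluster j$ and the centers are then recomputed as centers of mass, the potential decreases by an amount governed by $\norm{x-c_i}^2-\norm{x-c_j}^2$, the gap of $x$ at the bisecting hyperplane of $c_i$ and $c_j$, together with a term $|\cluster i|\cdot\norm{c_i-c_i'}^2$ for every center $c_i$ that moves to $c_i'$. The heart of the matter is an anti-concentration statement: for a \emph{fixed} combinatorial description of an iteration — essentially the clustering before and after — the probability over the perturbation that the total decrease is less than $\eps$ is at most $(\eps\cdot\poly(n)/\sigma)^{\Omega(1)}$. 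The delicate points here are that the center of mass of a large cluster has small variance (handled by observing that only a few points enter or leave it, and those points carry the full perturbation variance) and that the point $x$ whose gap we track may itself be one of the points defining the relevant bisector (handled by a leave-one-out / conditioning argument that restores independence).

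The $n^{\sqrt k}$ savings over the $n^{k}$ of Arthur and Vassilvitskii comes from a balancing argument over blocks of consecutive iterations. A naive union bound would range over all $n^{\Theta(kd)}$ (or $k^{n}$) clusterings, which is far too lossy. Instead I would fix a parameter $t\approx\sqrt k$ and establish a dichotomy for each block of a suitable length: either the block already contains enough activity among the clusters to force a potential decrease of at least $\eps$ outright, or at most $t$ clusters change during the block, in which case the block is described by choosing which $t$ clusters are active together with the point-membership changes among them — only $n^{O(t)}$ possibilities. Taking $\eps\approx\sigma/n^{O(t)}$ makes the union-bounded failure probability for the second case $o(1)$, and then the number of low-decrease blocks is $\Psi_0/\eps=\poly(n)\cdot n^{O(\sqrt k)}/\sigma$, with the first case contributing $\Psi_0/\eps$ as well. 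Optimizing the exponent $t$ against the block length balances at $t=\Theta(\sqrt k)$, yielding the claimed bound on the expected running-time.

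The main obstacle, I expect, is making this dichotomy precise while proving the anti-concentration bound uniformly enough to survive the $n^{O(\sqrt k)}$-fold union bound: one must (i) define the blocks (or epochs) so that ``few active clusters'' provably follows from ``small decrease'', (ii) show that a block with many active clusters must lose at least $\eps$ in potential — which seems to require tracking how reassignments among many distinct clusters each contribute a guaranteed decrease — and (iii) push the Gaussian anti-concentration through the dependencies created by centers being centers of mass. The remaining ingredients — the diameter bound, the per-iteration potential-drop identities, and the bookkeeping that turns a high-probability per-run statement into a bound on the expectation — are routine.
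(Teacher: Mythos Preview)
Your high-level plan matches the paper's: split iterations by whether at most or at least $\sqrt{k}$ clusters are active, and bound the number of iterations until the potential drops by $1$ separately in each regime. The scaffolding (diameter bound, initial potential, passage from high-probability to expectation via the worst-case $\poly(n^{kd})$ bound) is also the same.

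There is, however, a real gap in your counting for the few-active-clusters case. You write that if at most $t$ clusters are active then the block is described by ``which $t$ clusters are active together with the point-membership changes among them --- only $n^{O(t)}$ possibilities.'' But nothing in ``$t$ clusters are active'' bounds how many \emph{points} change membership; in principle $\Theta(n)$ points could shuffle among those $t$ clusters, and then the number of descriptions is not $n^{O(t)}$. The paper closes this with a further sub-case: if any cluster gains or loses more than $2d\sqrt{k}$ points in an iteration with at most $\sqrt{k}$ active clusters, then two clusters exchange more than $2d$ points, and $\eps$-separatedness (at most $2d$ points within $\eps$ of any hyperplane) forces one exchanged point to be far from the bisector, giving a direct potential drop. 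Only in the remaining sub-case do at most $O(d\sqrt{k})$ points move, and then the Arthur--Vassilvitskii key-value\,/\,$\delta$-sparseness machinery applies with the smaller $n^{O(d\sqrt{k})}$ union bound. A short combinatorial lemma (every epoch has length less than four) replaces your unspecified ``blocks.''

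For the many-active-clusters case your sketch is in the right direction but underspecifies the mechanism. The paper first proves a separate lemma (its Lemma~2.1) bounding, with very high probability, the number of reassigned points that can all lie within $\eps$ of their bisectors; this yields a dichotomy \emph{within} the many-clusters case: either some reassigned point is $\eps$-far from its bisector (direct drop of $2\eps\Delta$), or fewer than $kd$ points move. In the latter branch one union-bounds over the $\le kd$ movers and their source/target clusters, then uses a dominating-set argument on the exchange graph to extract $\ge \sqrt{k}/2$ clusters whose center displacements depend on distinct unrevealed Gaussian coordinates, giving the $(n\beta/\sigma)^{d\sqrt{k}/2}$ anti-concentration you want. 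Your leave-one-out intuition is right, but the dominating-set step is what makes the independence go through simultaneously for $\Theta(\sqrt{k})$ centers.
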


Thus, compared to the previously known bound, we decrease
the exponent by a factor of $\sqrt k$.
Second, we show that the smoothed running-time of $k$-means
is bounded by $k^{kd} \cdot \poly(n, 1/\sigma)$. In particular,
this decouples the exponential part of the bound from the number $n$
of points.

\begin{theorem}
\label{thm:o1}
Let $\points$ be drawn as described in Theorem~\ref{thm:sqrtk}. Then the
expected running-time of the $k$-means method on the instance $\points$ is
bounded from above by $k^{kd} \cdot \poly\bigl(n, 1/\sigma\bigr)$.
\end{theorem}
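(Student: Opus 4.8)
The plan is to work inside the usual potential framework for smoothed analysis of $k$-means, but to estimate the \emph{tail} of the running-time directly and, crucially, to replace the union bound over all (roughly $n^{kd}$) Voronoi clusterings by a union bound over a much coarser collection of combinatorial \emph{blueprints} of transitions, of which there are only $k^{kd}\cdot\poly(n)$. Write $\Psi$ for the $k$-means potential $\sum_{i=1}^{k}\sum_{x\in\cluster i}\norm{x-c_i}^2$. It never increases and is non-negative, so the number $T$ of iterations is at most $\Psi_0/\delta$ once we know that every block of $B$ consecutive iterations decreases $\Psi$ by at least $\delta$, for a block length $B=O(k)$. As a preliminary step I would record the standard Gaussian tail bound: since the means lie in $[0,1]^d$, with probability $1-n^{-\omega(1)}$ all perturbed points lie within distance $O(\sqrt d\,(1+\sigma\sqrt{\log n}))$ of the origin, so on this event $\Psi_0\le\Psi^{\max}:=\poly(n,d,\sigma)$; on the complementary event the trivial worst-case bound of $\poly(n^{kd})$ iterations (from the number of Voronoi partitions) applies.

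The expectation is then handled through its tail, $\mathrm{E}[T]=\sum_{t\ge 1}\Pr[T\ge t]$, and it suffices to prove $\Pr[T\ge t]\le n^{-\omega(1)}+k^{kd}\cdot\poly(n,1/\sigma)/t$: since $T$ is always at most $\poly(n^{kd})$, summing over $t$ up to this worst-case value introduces only a harmonic-sum factor $O(kd\log n)$ together with a negligible $\poly(n^{kd})\cdot n^{-\omega(1)}$ term, which yields $\mathrm{E}[T]=k^{kd}\cdot\poly(n,1/\sigma)$. To bound $\Pr[T\ge t]$, observe that if $T\ge t$ then, cutting the run into blocks of length $B$, some block decreases $\Psi$ by at most $\epsilon_t:=B\Psi^{\max}/t$ (on the high-probability event for $\Psi_0$). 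Hence it remains to show, for a fixed $\epsilon$, that the probability that \emph{some} block of $B$ consecutive iterations decreases $\Psi$ by at most $\epsilon$ is $O\bigl(k^{kd}\cdot\poly(n,1/\sigma)\cdot\epsilon\bigr)$.

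For this I would first establish a purely combinatorial fact: every block of $B=O(k)$ consecutive non-terminal iterations contains a ``nice'' event — either a single-point reassignment between two clusters whose current centers of mass are at distance at least $\rho:=\sigma/\poly(n)$, or a block in which the cumulative movement of some cluster's center of mass is so large that the term $|\cluster i|\cdot\norm{c_i^{\mathrm{old}}-c_i^{\mathrm{new}}}^2$ alone forces a drop of at least $\rho^2/n$. The first alternative comes from a pigeonhole argument over the $k$ clusters: if no well-separated single-point reassignment occurs, point swaps happen only between nearly coincident centers, and after $O(k)$ iterations this is possible only if some center has been dragged a long way, which is the second alternative. Consequently, if a block's drop is at most $\epsilon<\rho^2/n$, the block must contain a well-separated single-point reassignment of some point $x$ from cluster $i$ to cluster $j$ whose gain $\norm{x-c_i}^2-\norm{x-c_j}^2=2\langle x-\frac12(c_i+c_j),\,c_j-c_i\rangle$ is at most $\epsilon$; since $\norm{c_i-c_j}\ge\rho$, this forces the distance of $x$ to the bisecting hyperplane of $c_i$ and $c_j$ to be at most $\epsilon/(2\rho)$.

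The main obstacle — the step I expect to absorb essentially all of the work — is bounding the probability of this last event without paying the $n^{kd}$ cost of enumerating clusterings. The idea is to attach to each well-separated reassignment a blueprint $\beta$ recording only coarse data: which of the at most $k$ clusters are involved and, for each, a constant-resolution combinatorial descriptor of the relative positions of the centers taking $k^{O(d)}$ values, together with the identity of the migrating point — so that there are at most $k^{O(kd)}\cdot\poly(n)$ blueprints — and then to prove, for each fixed $\beta$, a \emph{robust} anti-concentration bound: the probability that \emph{there exists any clustering realizing $\beta$} in which the corresponding reassignment gains less than $\epsilon$ is $O(\poly(n,1/\sigma)\cdot\epsilon)$. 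The point is that the gain can be bounded below, simultaneously for every clustering consistent with $\beta$, by the absolute value of a single affine function $Z_\beta$ of the Gaussian perturbations whose coefficient vector is determined by $\beta$ alone and has Euclidean norm $\Omega(\rho)$ — after conditioning on all points other than a suitably chosen one, the gain is affine in that point's perturbation with gradient of norm at least $\norm{c_i-c_j}\ge\rho$ — so that $\Pr[|Z_\beta|<\epsilon]=O(\sqrt n\,\epsilon/(\sigma\rho))$ with no further union over clusterings. A union bound over the $k^{O(kd)}\poly(n)$ blueprints, with $\epsilon=\epsilon_t=B\Psi^{\max}/t$, combined with the bookkeeping above, delivers the bound $k^{kd}\cdot\poly(n,1/\sigma)$ on $\mathrm{E}[T]$. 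The delicate tension is to choose the blueprint coarse enough that only $k^{O(kd)}\poly(n)$ of them exist, yet informative enough that the clustering-independent lower bound $|Z_\beta|$ on the gain still goes through, and to verify that the pigeonhole step always produces a transition of one of the blueprinted types.
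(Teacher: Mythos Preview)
Your approach is genuinely different from the paper's, and the two central steps --- the $O(k)$-block pigeonhole and the blueprint anti-concentration --- are both real gaps, not just details to be filled in.

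On the pigeonhole: you assert as a ``purely combinatorial fact'' that every block of $B=O(k)$ non-terminal iterations contains either a well-separated single-point reassignment or a large cumulative center movement. I do not see why $O(k)$ is the right timescale. If all reassignments occur between $\rho$-close center pairs, there is no obvious mechanism forcing any center to drift by $\Omega(\rho)$ within $O(k)$ steps; many points can oscillate between nearby clusters indefinitely without large net center motion. The paper's corresponding step (Lemma~\ref{lem:upper}) needs blocks of length $k^{kd/a}+1$, and that exponent is exactly where its $k^{kd}$ factor originates: at most $kd/a$ points are ever $\eps$-close to bisectors, those points have at most $k^{kd/a}$ cluster-assignment patterns, so a pigeonhole over \emph{those patterns} forces a far point to switch within $k^{kd/a}+1$ steps. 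Your $O(k)$ would, if true, be a substantial improvement, but it needs an argument.

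On the blueprint: this is the crux, and you say so yourself, but as stated it cannot work. You want a descriptor $\beta$ with only $k^{O(kd)}\poly(n)$ values such that the reassignment gain is lower-bounded, \emph{simultaneously for every clustering realizing $\beta$}, by $|Z_\beta|$ for a single affine function $Z_\beta$ of the perturbations. But the gain $2\langle x-\tfrac12(c_i+c_j),\,c_j-c_i\rangle$ is a \emph{different} affine function of $x$ for each clustering, since $c_i,c_j$ are averages over cluster contents that vary across the $\poly(n^{kd})$ clusterings consistent with any coarse $\beta$. A ``constant-resolution descriptor of center positions taking $k^{O(d)}$ values'' cannot pin these centers down to within less than the diameter divided by $k^{O(1)}$, so the family of gain functions indexed by clusterings within one $\beta$-cell is a continuum of affine functions whose zero-hyperplanes sweep out a thick slab --- and then $\Pr[\exists\text{ clustering in }\beta:\text{gain}<\eps]$ is not $O(\eps)$ but $O(1)$. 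To make a blueprint scheme work you would need the descriptor to determine the \emph{exact} gain function (not just up to discretization), which forces it to encode the full cluster contents and brings back the $n^{kd}$ count. The paper sidesteps this entirely: its Lemma~\ref{lemma:NumberOfClosePoints} does perform a union bound over all $W\le n^{\kappa kd}$ Voronoi partitions of the non-moving points, but the anti-concentration probability it multiplies against is small enough (order $(\eps/\sigma)^{\ell/4}$ with $\ell=kd/a$) to absorb that factor while keeping $\eps$ polynomial in $n,\sigma$; the $k^{kd}$ then appears only later, in the block-length pigeonhole of Lemma~\ref{lem:upper}, not in the union bound.
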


An immediate consequence of Theorem~\ref{thm:o1} is the following corollary, which
proves that the expected running-time is polynomial in $n$ and $1/\sigma$ if $k$
and $d$ are small compared to $n$. This result is of particular interest
since $d$ and $k$ are usually much smaller than $n$.

\begin{corollary}
\label{cor:poly}
Let $k,d \in O(\sqrt{\log n/\log \log n})$. Let $\points$ be drawn as described
in Theorem~\ref{thm:sqrtk}. Then the expected running-time of $k$-means
on the instance $\points$ is bounded by a polynomial in $n$ and~$1/\sigma$.
\end{corollary}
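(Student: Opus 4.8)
The plan is to derive the corollary directly from Theorem~\ref{thm:o1} by showing that the factor $k^{kd}$ becomes polynomially bounded in $n$ under the hypothesis $k,d \in O(\sqrt{\log n / \log\log n})$. First I would write $k^{kd} = \exp(kd \ln k)$ and bound the exponent. Since $k, d \le c\sqrt{\log n / \log\log n}$ for some constant $c$, we have $kd \le c^2 \log n / \log\log n$, and $\ln k \le \frac{1}{2}\ln(\log n) + O(1) \le \log\log n$ for $n$ large enough (here I am using $\log$ and $\ln$ interchangeably up to constant factors, being a little careful about which base is meant — the natural choice is to fix everything to natural logarithm and absorb the conversion constants into $c$). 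Multiplying, $kd \ln k \le c^2 \frac{\log n}{\log\log n} \cdot \log\log n = c^2 \log n$, so $k^{kd} \le \exp(c^2 \log n) = n^{c^2}$, which is polynomial in $n$.

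The second step is simply to combine this with Theorem~\ref{thm:o1}: the expected running-time is at most $k^{kd} \cdot \poly(n, 1/\sigma) \le n^{c^2} \cdot \poly(n, 1/\sigma)$, and since the polynomial in the statement of Theorem~\ref{thm:o1} is independent of $k$ and $d$, the product is again a fixed polynomial in $n$ and $1/\sigma$. This completes the argument.

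I do not expect any real obstacle here; the only point requiring mild care is the treatment of the logarithms — one must check that the $\log\log n$ in the denominator of the hypothesis genuinely cancels the $\ln k$ factor, which it does precisely because $\ln k = \Theta(\log\log n)$ when $k = \Theta(\sqrt{\log n/\log\log n})$. If instead $k$ or $d$ were as small as a constant, the bound $k^{kd}$ is trivially constant and the claim is immediate; the stated regime is the largest one for which the exponent $kd\ln k$ stays $O(\log n)$, so the corollary is essentially tight with respect to the form of Theorem~\ref{thm:o1}. It is worth remarking that one could alternatively route the argument through Theorem~\ref{thm:sqrtk}, bounding $n^{\sqrt k}$, but that only gives a polynomial bound for $k \in O((\log n/\log\log n)^{?})$-type regimes in a less clean form, so Theorem~\ref{thm:o1} is the natural starting point.
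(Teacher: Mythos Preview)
Your proposal is correct and follows exactly the paper's approach: the corollary is deduced immediately from Theorem~\ref{thm:o1} by observing that $k^{kd}\le\poly(n)$ when $k,d\in O(\sqrt{\log n/\log\log n})$. The paper gives this in a single line without spelling out the logarithm computation, but your expansion of $kd\ln k = O(\log n)$ is the intended reasoning.
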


David Arthur~\cite{Arthur:Comm:2008} presented an insightful proof that
$k$-means runs in time polynomial in $n$, $1/\sigma$, and
the diameter of the instance with a probability of at least $1-O(1/n)$.
It is worth pointing out that his result is orthogonal to our results:
neither do our results imply polynomial running time with probability
$1-O(1/n)$, nor does Arthur's result  
yield any non-trivial bound on the expected running-time
(not even $\poly(n^k, 1/\sigma)$) since the success probability of
$1-O(1/n)$ is way too small. The exception is our result for $d = 1$, which
yields not only a bound on the expectation, but also a bound that holds
with high probability. However, the original definition of smoothed
analysis~\cite{SpielmanTeng:SmoothedAnalysisWhy:2004} is in terms
of expectation, not in terms of bounds that hold with a probability of
$1-o(1)$.

To prove our bounds, we prove a lemma about
perturbed point sets (Lemma~\ref{lemma:NumberOfClosePoints}).
The lemma bounds the number of points close to the boundaries of
Voronoi partitions that arise during the execution of $k$-means.
It might be of independent interest, in particular for smoothed analyses
of geometric algorithms and problems.

Finally, we prove a polynomial bound for the running-time of $k$-means
in one dimension. 

\begin{theorem}
\label{thm:d1}
Let $\points \subseteq \RR$ be drawn according to $1$-dimensional
Gaussian distributions as described in Theorem~\ref{thm:sqrtk}.
Then the expected running-time of $k$-means on $\points$ is polynomial
in $n$ and $1/\sigma$. Furthermore, the probability that the running-time
exceeds a polynomial in $n$ and $1/\sigma$ is bounded by $1/\poly(n)$.
\end{theorem}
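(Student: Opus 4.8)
The plan is to exploit the rigid structure of $k$-means in one dimension. Sort the (almost surely distinct) input points as $x_1 < x_2 < \dots < x_n$. In $\RR$, a Voronoi partition into $k$ cells is simply a partition into intervals, so every clustering occurring during a run of $k$-means breaks $x_1, \dots, x_n$ into at most $k$ consecutive blocks, each cluster center is the mean of such a block, and the boundary between two neighbouring clusters is the midpoint of the means of the two adjacent blocks. The elementary but crucial observation is this: if one cluster occupies $x_a, \dots, x_m$ and its right neighbour occupies $x_{m+1}, \dots, x_b$, then their centers $c < c'$ satisfy $c' - c \ge x_{m+1} - x_m \ge \gamma$, where $\gamma := \min_i (x_{i+1} - x_i)$ denotes the smallest gap of $\points$.

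From this I would first derive a \emph{deterministic} bound on the number of iterations in terms of the spread $\mathrm{spread}(\points) = \mathrm{diam}(\points)/\gamma$. Whenever an iteration other than the first changes the clustering, some point $x$ is reassigned from a cluster $C$ with center $c$ to a neighbouring cluster $C'$ with center $c'$; since $x$ lies strictly on the $c'$-side of the bisector $(c+c')/2$, we get $|x-c| \ge |c-c'|/2 \ge \gamma/2$. The subsequent re-centering step moves the center of $C$ by $|x-c|/(|C|-1)$, and hence decreases the potential by at least $|x-c|^2/(|C|-1) \ge \gamma^2/(4n)$. Since after the first iteration the potential never exceeds $n\cdot\mathrm{diam}(\points)^2$, the number of iterations is $O\bigl(n^2\cdot\mathrm{spread}(\points)^2\bigr)$ — essentially the bound of Har-Peled and Sadri~\cite{HarPeled}.

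It then remains to control the spread under perturbation. A union bound over the $\binom{n}{2}$ point pairs, using that each difference $x_i - x_j$ is a Gaussian of variance $2\sigma^2$, gives $\Pr[\gamma < \delta] \le n^2\delta/\sigma$; a standard Gaussian tail bound gives $\mathrm{diam}(\points) = O\bigl(1 + \sigma\sqrt{\log n}\bigr)$ with probability $1 - 1/\poly(n)$. Choosing $\delta = \sigma/\poly(n)$ therefore shows that, with probability $1 - 1/\poly(n)$, the spread and hence the number of iterations is bounded by a polynomial in $n$ and $1/\sigma$; this proves the high-probability part of the theorem.

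For the bound on the \emph{expectation}, merely integrating this tail does not suffice: it only shows that $k$-means makes more than $T$ iterations with probability $O\bigl(\poly(n)\cdot\mathrm{diam}(\points)/(\sigma\sqrt T)\bigr)$, a tail of order $T^{-1/2}$, and because the number of distinct interval clusterings can be exponential in $n$ when $k$ is large, the quadratic-in-spread estimate is too lossy by itself. Surmounting this is the main obstacle, and it is precisely where Lemma~\ref{lemma:NumberOfClosePoints} enters: along any maximal block of $\ell$ consecutive iterations whose potential drops all stay below a threshold $\epsilon^2$, every center — and hence every cluster boundary — moves by at most $\ell\epsilon$ in total, so within that block only the few points that Lemma~\ref{lemma:NumberOfClosePoints} certifies to lie that close to a boundary can ever be reassigned; amortizing the potential decrease over the whole block, rather than charging each iteration its own tiny (quadratic-in-displacement) drop, should replace the $T^{-1/2}$ tail by a summable one, and optimizing $\epsilon$ as a function of $T$ and then taking the expectation over $\gamma$ and $\mathrm{diam}(\points)$ would yield the claimed polynomial bound. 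Establishing this amortized estimate uniformly over all such blocks, and checking that the resulting bound is genuinely polynomial in $n$ and $1/\sigma$, is the technical heart of the argument.
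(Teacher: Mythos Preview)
Your high-probability argument is fine and is essentially the Har-Peled--Sadri bound you cite; the paper argues similarly for that part. The trouble, as you correctly diagnose, is the expectation: the minimum-gap event $\{\gamma<\delta\}$ has probability of order $\delta/\sigma$, which turns into a non-summable $T^{-1/2}$ tail.

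Your proposed rescue through Lemma~\ref{lemma:NumberOfClosePoints} is where the plan breaks. That lemma concerns a \emph{single} reassignment step: it says that if at least $kd/a$ points cross bisectors in one step, then one of them is at distance $\ge\eps$ from its bisector. It does not hand you a uniform bound on the set of points lying near the (moving) boundaries over a block of $\ell$ steps, which is what your amortization sketch assumes. Moreover, the $\eps$ in that lemma scales like $(\sigma/n)^{\Theta(a)}$; to make $kd/a$ a bounded number of mobile points you would have to take $a$ of order $k$, and then $\eps$ becomes exponentially small in $k$, destroying any hope of a bound polynomial in $n$. Indeed, the general machinery of Sections~\ref{sec:property}--\ref{sec:upper} applied with $d=1$ yields only $k^{k}\cdot\poly(n,1/\sigma)$, not the claimed polynomial.

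The paper's actual fix is purely probabilistic and much lighter. Instead of the minimum gap (an event about a \emph{pair} of points, failure probability $\Theta(\eps)$), it introduces \emph{$\eps$-spreadedness}: no three points share an interval of length $\eps$, and no two pairs of points are simultaneously $\eps$-close. This is an event about three or four points and hence fails with probability $O(n^4\eps^2/\sigma^2)$. A short case analysis on the leftmost and rightmost active clusters (which, in one dimension, exchange points only on one side) then shows that if $\points$ is $\eps$-spreaded, \emph{every} iteration drops the potential by at least $\eps^2/(4n^2)$. The resulting tail on the number of iterations until a unit potential drop is $O\bigl(n^6/(\sigma^2 t)\bigr)$, which summed up to the worst-case cutoff $W\le n^{\kappa k}$ gives a polynomial expectation. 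So the missing idea is simply to upgrade the pair condition to a triple/quadruple condition so as to square the failure probability; Lemma~\ref{lemma:NumberOfClosePoints} plays no role in the one-dimensional proof.
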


We remark that this result for $d = 1$ is not implied by the result
of Har-Peled and Sadri~\cite{HarPeled} that the running-time of one-dimensional
$k$-means is polynomial in $n$ and the spread of the instance. The reason
is that the expected value of the square of the spread is unbounded.

The restriction of the adversarial points to be in $[0,1]^d$
is necessary: Without any bound, the adversary can place the points arbitrarily
far away, thus diminishing the effect of the perturbation.
We can get rid of this restriction and obtain the same results
by allowing the bounds to be polynomial in the diameter of the adversarial
instance.
However, for the sake of clarity and to avoid another parameter, we have chosen
the former model.

\subsection{Outline}

To prove our two main theorems, we first prove a property of perturbed point
sets (Section~\ref{sec:property}):
In any step of the $k$-means algorithm, there are not too many points close
to any of the at most $k^2$ hyperplanes that bisect the centers and that
form the Voronoi regions.
To put it another way: No matter how $k$-means partitions the point set
$\points$ into $k$ Voronoi regions, the number of points close to any boundary
is rather small with overwhelming probability.

We use this lemma in Section~\ref{sec:upper}: First, we use it to prove
Lemma~\ref{lem:upper}, which bounds the expected number of iterations
in terms of the smallest possible distance of two clusters.
Using this bound, we derive a first upper bound for the expected number of
iterations (Lemma~\ref{lem:GeneralDropBy1}), which will result in
Theorem~\ref{thm:o1} later on.

In Sections~\ref{sec:atmostsqrtk} and~\ref{sec:atleastsqrtk},
we distinguish between iterations in which at most $\sqrt k$ or
at least $\sqrt k$ clusters gain or lose points.
This will result in Theorem~\ref{thm:sqrtk}.

We consider the special case of $d=1$ in Section~\ref{sec:1d}. For this
case, we prove an upper bound polynomial in $n$ and $1/\sigma$ until
the potential has dropped by at least $1$.

In Sections~\ref{sec:upper}, \ref{sec:atmostsqrtk}, \ref{sec:atleastsqrtk},
and~\ref{sec:1d} we are only concerned with bounding the
number of iterations until the potential has dropped by at least $1$.
Using these bounds and an upper bound on the potential after the first round, we
will derive Theorems~\ref{thm:sqrtk}, \ref{thm:o1}, and~\ref{thm:d1} as well as
Corollary~\ref{cor:poly} in Section~\ref{sec:putting}.

\subsection{Preliminaries}

In the following, $\points$ is the perturbed instance on which we run
$k$-means, i.e., $\points = \{x_1, \ldots, x_n\} \subseteq \RR^d$ is a set of
$n$ points, where each point $x_i$ is drawn according to a $d$-dimensional
Gaussian distribution with mean $\mu_i \in [0,1]^d$ and standard deviation
$\sigma$.

Inaba et al.~\cite{InabaEA:WeightedVoronoi:2000} proved that the number of
iterations of $k$-means is $\poly\bigl(n^{kd}\bigr)$ in the worst case.
We abbreviate this bound by $W \leq n^{\kappa kd}$ for some constant $\kappa$ in
the following.

Let $D\ge1$ be chosen such that, with a probability of at least $1-W^{-1}$,
every data point from $\points$ lies in the hypercube $\cube:=[-D,1+D]^d$
after the perturbation. In Section~\ref{sec:putting}, we prove that $D$ can be
bounded by a polynomial in $n$ and $\sigma$, and we use this fact in the
following sections. We denote by $\calF$ the \emph{failure event} that there
exists one point in $\points$ that does not lie in the hypercube $\cube$ after
the perturbation. We say that a cluster is \emph{active} in an iteration if it
gains or loses at least one point.

We will always assume in the following that $d \leq n$ and $k \leq n$, and we
will frequently bound both $d$ and $k$ by $n$ to simplify calculations. Of
course, $k \leq n$ holds for every meaningful instance since it does not make
sense to partition $n$ points into more than $n$ clusters. Furthermore, we
can assume $d \leq n$ for two reasons: First, the dimension is usually much
smaller than the number of points, and, second, if $d > n$, then we can project
the points to a lower-dimensional subspace without changing anything.

Let $\clusterSet = \{\cluster 1, \ldots, \cluster k\}$ denote the set of clusters.
For a natural number $k$, let $[k] = \{1,\ldots, k\}$.
In the following, we will assume that number such as $\sqrt k$ are integers. For
the sake of clarity, we do not write down the tedious floor and ceiling
functions that are actually necessary. Since we are only interested in the
asymptotics, this does not affect the validity of the proofs.
Furthermore, we assume in the following sections that $\sigma\le 1$.
This assumption is only made to simplify the arguments and we describe
in Section~\ref{sec:putting} how to get rid of it.

\section{A Property of Perturbed Point Sets}
\label{sec:property}

The following lemma shows that, with high probability,
there are not too many points close to the hyperplanes dividing
the clusters. It is crucial for our bounds for the smoothed running-time:
If not too many points are close to the bisecting hyperplanes, then, eventually,
one point that is further away from the bisecting hyperplanes must go
from one cluster to another, which causes a significant decrease of the potential.

\begin{lemma}
\label{lemma:NumberOfClosePoints}
Let $a \in [k]$ be arbitrary. With a probability of at least
$1-2W^{-1}$, the following holds:
In every step of the $k$-means algorithm (except for the first one) in
which at least $kd/a$ points change their assignment, at least one of
these points has a distance larger than
\[
  \eps :=\frac{\sigma^4}{32n^2dD^2}
           \cdot\left(\frac{\sigma}{3Dn^{3+2\kappa}}\right)^{4a}
\]
from the bisector that it crosses.
\end{lemma}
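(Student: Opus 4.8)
The plan is to prove Lemma~\ref{lemma:NumberOfClosePoints} via a union bound over all ``bad configurations'' that the $k$-means algorithm could possibly encounter, showing that each such configuration is unlikely because it would force many independent Gaussian-perturbed points to land simultaneously near a hyperplane. The key observation is that the bisecting hyperplane between two cluster centers is determined entirely by the two clusters (as point sets), since each center is the centroid of its cluster. Hence, although there are exponentially many possible partitions, the relevant geometric objects — bisectors — are determined by combinatorial data, and we can fix a candidate ``bad step'' by specifying: (i) which two clusters $\cluster i, \cluster j$ are involved, (ii) the set $S$ of at least $kd/a$ points that change assignment in this step, and (iii) the partitions of the clusters before and after. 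The number of such candidates is at most $\poly(n^{kd}) = W$ times a factor counting the choice of $S$, which is at most $2^n \le n^n$, so overall still bounded by a polynomial in $n^{kd}$ — crucially this matches the $W^{-1}$ failure probability we are allowed.

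First I would set up the failure event precisely: we fail if there is \emph{some} step (other than the first) in which at least $kd/a$ points change assignment, yet \emph{all} of them lie within distance $\eps$ of the bisector they cross. Conditioning on the clusters being fixed, I would observe that a point $x_\ell$ crossing the bisector $H$ between $\cluster i$ and $\cluster j$ means $x_\ell$ switches from the $i$-side to the $j$-side (or vice versa) between consecutive Voronoi partitions; being within $\eps$ of $H$ is then the event that a fixed one-dimensional marginal (the projection of $x_\ell$ onto the normal direction of $H$) lies in an interval of length roughly $\eps$. A Gaussian with standard deviation $\sigma$ puts probability at most $O(\eps/\sigma)$ on any such interval. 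The subtlety — and this is where the exponent $4a$ and the somewhat baroque form of $\eps$ come from — is that $H$ itself depends on the positions of the points in $\cluster i$ and $\cluster j$, so $x_\ell$'s own coordinates may influence $H$; one must argue that even after revealing all points except a carefully chosen subset, enough randomness remains in the normal-direction projections of the $\ge kd/a$ crossing points to make the joint event have probability at most $(\eps/\poly)^{kd/a}$ or so, which then beats the union-bound factor of $W \cdot n^n \le n^{O(kd)}$.

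The technical heart is therefore a conditioning/independence argument. I would proceed as follows. Fix clusters $\cluster i, \cluster j$ and the set $S$ of $m \ge kd/a$ crossing points. Reveal the positions of all points not in $S$, and also of all but, say, $d+1$ appropriately chosen points of $S$ — enough to determine the centroids $c_i, c_j$ and hence the hyperplane $H$ up to the contribution of the remaining free points. Actually, since every point of $S$ contributes to a centroid, a cleaner route is: pick one ``pivot'' point $y \in S$ whose coordinates we will \emph{not} reveal, reveal everything else; then $H$ is a fixed affine function of $y$'s position, but the constraint ``$y$ is within $\eps$ of $H$'' and ``$y$ is on the correct side before/after'' becomes a constraint on $y$ confined to a slab of width $O(n\eps)$ (the factor $n$ accounting for the $1/|\cluster|$ weighting of $y$'s own influence on $c_i$ or $c_j$), which has probability $O(n\eps/\sigma)$. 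That handles one point. To get the product over \emph{many} points cheaply, iterate: reveal $y$, then the next pivot $y'$ has its slab determined, etc., so the probability of the joint event is at most $\prod (O(n\eps/\sigma)) = (O(n\eps/\sigma))^{m}$. Choosing $\eps$ as in the statement makes $(O(n\eps/\sigma))^{kd/a}$ small enough that multiplying by the number $W \cdot 2^n \cdot k^2 \le n^{O(kd)}$ of configurations, and by a further factor $W$ for the number of steps, still yields at most $W^{-1}$; together with the $W^{-1}$ from the failure event $\calF$ (all points in $\cube$) this gives the claimed $2W^{-1}$.

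The main obstacle I anticipate is handling the dependence of the bisector $H$ on the points that are themselves being tested for closeness to $H$: naively each crossing point can ``move'' $H$ toward itself, so one cannot simply treat the $m$ closeness events as independent slab events with a fixed slab. The fix is the sequential revealing argument above — at each stage the next pivot point's slab is a deterministic function of already-revealed data — but making this rigorous requires care that (a) the slab width stays $O(n\eps)$ uniformly (using that any active cluster has between $1$ and $n$ points, and that $D \ge 1$ bounds the geometry, which is where the $D^2$ and $n^2 d$ denominators in $\eps$ enter), and (b) the ``correct side before and after'' conditions, which involve two \emph{different} partitions hence two slightly different hyperplanes, can both be absorbed into a single slab of comparable width. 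A secondary nuisance is that ``$\eps$ close to the bisector it crosses'' must be quantified over all $k^2$ bisectors and both orientations of crossing, but that only costs a polynomial factor in $n$ which is swallowed by the $W^{-1}$ slack. I would also double-check the exclusion of the first step: after the first Lloyd iteration the clusters are Voronoi partitions, so $x_\ell$ crossing a bisector really does correspond to the geometric event described; in step~1 the centers are adversarial and this structure need not hold, hence the exception.
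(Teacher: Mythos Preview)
Your overall architecture (union bound over combinatorial configurations, then bound the probability that many perturbed points simultaneously lie near their bisectors) matches the paper's, but two of your key steps do not go through as written.

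\textbf{The slab claim is false.} You assert that after revealing everything except one pivot $y$, the event ``$y$ is within $\eps$ of its bisector $H$'' is a slab event of width $O(n\eps)$ in $y$-space, hence has probability $O(n\eps/\sigma)$. But $H$ is the bisector of $c_i$ and $c_j$, and $c_i$ depends \emph{linearly} on $y$; writing out $\dist(y,H)\le\eps$ gives a \emph{quadratic} constraint in $y$, not a linear one. The paper isolates exactly this situation in a separate lemma and shows the correct bound is $O(\sqrt{n\delta\eps}/\sigma)$, i.e.\ with $\sqrt{\eps}$ rather than $\eps$. This is not cosmetic: it is precisely why the final $\eps$ carries the exponent $4a$ rather than $a$, and why the paper's probability bound in the auxiliary lemma is $(\cdots\eps)^{\ell/4}$ rather than $(\cdots\eps)^{\ell}$.

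\textbf{The sequential revealing does not decouple points sharing a cluster.} If several crossing points lie in the same $B_i$, each of them influences the same centroid $c_i$, so after revealing one pivot the bisector relevant to the next pivot still depends on that next pivot. Your ``reveal $y$, then the slab for $y'$ is determined'' step fails here. The paper's remedy is different and is the real technical idea: it conditions on the \emph{sums} $g_i=\sum_{b\in B_i} b$, which fixes every centroid and hence every bisector, and then exploits the remaining $|B_i|-1$ degrees of freedom inside each $B_i$. Clusters with $|B_i|=1$ have no residual randomness after fixing $g_i$, so they are handled separately via the $\sqrt{\eps}$ lemma above together with a small graph argument that breaks cyclic dependencies among such singleton clusters. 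None of this machinery appears in your sketch, and without it the product bound $(O(n\eps/\sigma))^{m}$ is unjustified.

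A smaller point: your union-bound count is muddled. The factor $2^n$ is not $\poly(n^{kd})$, and in any case once you fix the before/after partitions the set $S$ is determined, so it is redundant. The paper instead enumerates only the $\ell=kd/a$ crossing points together with their source and target clusters ($n^{3\ell}$ choices) and, separately, the Voronoi partition of the \emph{remaining} points (at most $W$ choices), treating the latter's positions as adversarially fixed; this is what keeps the count at $n^{O(kd)}$ while preserving the randomness of the crossing points.
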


\begin{proof}
We consider a step of the $k$-means algorithm, and we refer to the
configuration before this step as the \emph{first configuration} and to
the configuration after this step as the \emph{second configuration}. To
be precise, we assume that in the first configuration the positions of
the centers are the centers of mass of the points assigned to them in
this configuration. The step we consider is the reassignment of the
points according to the Voronoi diagram in the first configuration.

Let $B \subseteq \points$ with $|B|=\ell:=kd/a$ be a set of points that change their
assignment during the step. There are at most $n^{\ell}$ choices for the
points in $B$ and at most $k^{2{\ell}}\le n^{2{\ell}}$ choices for the clusters they
are assigned to in the first and the second configuration. We apply a
union bound over all these at most $n^{3\ell}$ choices.

The following sets are defined for all $i,j \in [k]$ and $j \neq i$.
Let $B_i \subseteq B$ be the set of points that leave cluster $\cluster i$.
Let $B_{i,j}\subseteq B_i$ be the set of points assigned to cluster $\cluster i$ in
the first and to cluster $\cluster j$
in the second configuration, i.e., the points in $B_{i,j}$ leave $\cluster i$ and
enter $\cluster j$.
We have $B = \bigcup_i B_i$ and $B_i = \bigcup_{j \neq i} B_{i,j}$.

Let $A_i$ be the set of points that are in $\cluster i$ in the first configuration
except for those in $B_i$.
We assume that the positions of the points in $A_i$ are determined by an adversary.
Since the sets $A_1,\ldots,A_k$ form a partition of the points in
$\points \setminus B$ that has been obtained in the previous step on the basis
of a Voronoi diagram, there are at most $W$ choices for this
partition~\cite{InabaEA:WeightedVoronoi:2000}. We also
apply a union bound over the choices for this partition.

In the first configuration, exactly the points in
$A_i\cup B_i$ are assigned to cluster $\cluster i$.
Let $c_1,\ldots,c_k$ denote the positions of the cluster centers in the first
configuration, that is, $c_i$ is the center
of mass of $A_i\cup B_i$. Since the positions of the points in
$\points \setminus B$ are assumed to be fixed by an adversary, and since we
apply a union bound over the partition
$A_1,\ldots,A_k$, the impact of the set
$A_i$ on position $c_i$ is fixed. However, we want to exploit the
randomness of the points in $B_i$ in the following. Thus, the positions of
the centers are not fixed yet but they depend on the randomness of the points in $B$.
In particular, the bisecting hyperplane $H_{i,j}$ of the clusters $\cluster i$
and $\cluster j$ is not fixed but depends on $B_i$ and $B_j$. 

In order to complete the proof, we have to estimate the
probability of the event
\renewcommand{\theequation}{$\calE$}
\begin{equation}\label{eqn:ProbAllPointsClose}
   \forall i,j\colon \forall b\in B_{i,j}\colon
       \dist(b, H_{i,j}) \leq \eps\:,
\end{equation}%
\addtocounter{equation}{-1}%
where $\dist(x, H) = \min_{y \in H} \norm{x-y}$ denotes the shortest
distance of a point $x$ to a hyperplane $H$. In the following, we
denote this event by $\calE$.
If the hyperplanes $H_{i,j}$ were fixed, the 
probability of $\calE$ could readily be seen to be at most
$\bigl(\frac{2\eps}{\sigma \sqrt{2\pi}}\bigr)^\ell \leq \bigl(\frac\eps\sigma\bigr)^{\ell}$. But the
hyperplanes are not fixed since their positions and orientations depend on the points in the
sets $B_{i,j}$. Therefore, we are only able to prove the following weaker bound in
Lemma~\ref{lemma:ProbAllPointsClose}:
\[
  \Pr{\calE\wedge\neg\calF} \le 
  \left(\frac{3D}{\sigma}\right)^{kd}\cdot
  \left(\frac{32n^2dD^2\eps}{\sigma^4}\right)^{\ell/4}\:,
\]
where $\neg\calF$ denotes the event that, after the perturbation, all
points of $\points$ lie in the hypercube $\cube = [-D, D+1]^d$.
Now the union bound yields the following upper bound on the probability that a
set $B$ with the stated properties exists:
\begin{align*}
  \Pr{\calE} & \le \Pr{\calE\wedge\neg\calF}+\Pr{\calF} \\
  & \le n^{3\ell} W \cdot \left(\frac{3D}{\sigma}\right)^{kd}\cdot
  \left(\frac{32n^2dD^2\eps}{\sigma^4}\right)^{\ell/4}+W^{-1}\\
& =
  n^{3\ell} W \cdot \left(\frac{1}{n^{3+2\kappa}}\right)^{kd}+W^{-1} \\
& \le n^{3\ell+\kappa kd} \cdot
\left(\frac{1}{n^{3+2\kappa}}\right)^{kd}+W^{-1} \\ 
& \leq n^{-\kappa kd} +W^{-1}
  \: \leq \: 2W^{-1}\:.
\end{align*}
The equation is by our choice of $\eps$, the inequalities are due to
some simplifications and $W \leq n^{\kappa kd}$.
\end{proof}

\begin{lemma}
\label{lemma:ProbAllPointsClose}
 The probability of the event $\calE\wedge\neg\calF$ is bounded from
 above by
 \[
\left(\frac{3D}{\sigma}\right)^{kd}\cdot
  \left(\frac{32n^2dD^2\eps}{\sigma^4}\right)^{\ell/4}\:\enspace.
\]
\end{lemma}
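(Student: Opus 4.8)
Throughout, the sets $A_1,\dots,A_k$ and $B_{i,j}$ are fixed as in the proof of Lemma~\ref{lemma:NumberOfClosePoints}; it suffices to bound $\Pr{\calE\wedge\neg\calF}$ for one such configuration, as the relevant union bounds have already been applied there. The plan begins with a deterministic reduction. Since $\dist(b,H_{i,j})=\bigl|\,\norm{b-c_i}^2-\norm{b-c_j}^2\,\bigr|\,/\,(2\norm{c_i-c_j})$, and since on $\neg\calF$ every center of mass lies in $\cube$, so that $\norm{c_i-c_j}\le 3D\sqrt d$, the event $\calE\wedge\neg\calF$ implies $\bigl|\,\norm{b-c_i}^2-\norm{b-c_j}^2\,\bigr|\le 6D\sqrt d\,\eps$ for every $b\in B_{i,j}$ and all $i,j$. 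Hence it is enough to bound the probability that all these quadratic inequalities hold.

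The core difficulty, around which the whole argument revolves, is that the bisectors $H_{i,j}$ are not fixed: writing $c_i=\frac{1}{n_i}\bigl(\sum_{x\in A_i}x+\sum_{b\in B_i}b\bigr)$ shows that each $c_i$, hence every $H_{i,j}$, depends on the very points of $B$ whose randomness we want to exploit. To break this circularity I would first condition on the barycenters $\bar b_i:=\frac{1}{|B_i|}\sum_{b\in B_i}b$ of all clusters $i$ with $|B_i|\ge 2$. This fixes all such $c_i$ (the contribution of $A_i$ being frozen by the adversary) and thereby fixes the corresponding bisectors. On $\neg\calF$ each coordinate of each $\bar b_i$ is confined to an interval of length at most $3D$ and has density at most $1/(\sigma\sqrt{2\pi})$, so integrating out these at most $kd$ scalars costs a factor of at most $(3D/\sigma)^{kd}$ --- this is the first factor in the bound.

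After this conditioning I would identify a set $F\subseteq B$ of at least $\ell/2$ ``free'' points that can be processed one at a time, so that by the chain rule the probability in question is at most the product of the conditional probabilities that the points of $F$ individually satisfy their quadratic inequality. Two mechanisms supply the per-point estimate. For a cluster $i$ with $|B_i|\ge 2$: conditioned on $\bar b_i$, the points of $B_i$ are jointly Gaussian with fixed mean, and revealing them one by one leaves each but the last with an isotropic Gaussian conditional law of variance at least $\sigma^2/2$; since the relevant bisectors are already fixed, each such point is $\eps$-close to its bisector with conditional probability at most $2\eps/(\sigma\sqrt\pi)$, and we harvest $|B_i|-1\ge|B_i|/2$ free points (the inequality for the last point is discarded). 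For a cluster $i$ with $B_i=\{b\}$ a singleton: $c_i$ equals a fixed vector plus $b/n_i$ and hence depends on $b$, so after conditioning, if necessary, on the single point of the destination cluster, the quantity $\norm{b-c_i}^2-\norm{b-c_j}^2$ becomes a polynomial in $b$ whose degree-two part is $-\frac{2n_i-1}{n_i^2}\norm{b}^2$, a nonzero multiple of $\norm{b}^2$ with coefficient in $[1/n,1]$; thus $\{\,b:|\norm{b-c_i}^2-\norm{b-c_j}^2|\le 6D\sqrt d\,\eps\,\}$ is a spherical shell whose squared radii differ by $O(Dn\sqrt d\,\eps)$, and a Gaussian lands in it with probability at most $\poly(n,d,D)\cdot\sqrt\eps/\sigma^2$ (the worst case being $d=1$, where one only gains a square root of $\eps$ rather than a full factor). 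Since every per-point probability is at most $\sqrt{32n^2dD^2\eps/\sigma^4}$ and $|F|\ge\ell/2$, the product is at most $\bigl(32n^2dD^2\eps/\sigma^4\bigr)^{\ell/4}$, which together with the first factor proves the lemma.

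The main obstacle is precisely this circular dependence, together with the combinatorial bookkeeping needed to pin down $F$: a point sometimes has to be ``sacrificed'' (conditioned on) in order to fix a bisector for other points --- for instance the single point of a singleton cluster into which several points migrate --- and one must verify that each sacrifice still leaves enough free points, so that $|F|\ge\ell/2$ in every case. This charging argument, rather than any individual probability bound, is where the loss from $\ell$ to $\ell/4$ in the exponent comes from: one factor of two from it, and one from the square root in the singleton case.
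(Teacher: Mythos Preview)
Your plan is structurally the paper's own argument: condition on the sums (you use barycenters, which is equivalent) to pin down the centers, harvest $m-1$ free Gaussians from each cluster with $|B_i|=m\ge 2$, treat singletons via a spherical-shell (the paper's Lemma~\ref{lemma:SinglePointDeterminesCenter}), and break singleton cycles with a graph/charging argument to keep at least half of them. The final counting $m'+k'/2\ge\ell/4$ is exactly the paper's.

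Two concrete problems, though. First, your accounting for the factor $(3D/\sigma)^{kd}$ is wrong on both ends. Each coordinate of $\bar b_i$ has density $\sqrt{|B_i|}/(\sigma\sqrt{2\pi})$, not $1/(\sigma\sqrt{2\pi})$; and in any case $\int f_{\bar b_i}\le 1$, so ``integrating out the barycenters'' costs nothing, not $(3D/\sigma)^{kd}$. In the paper this factor arises from a different place: for each $m\ge2$ cluster one bounds the density of the last point by $\sigma^{-d}$ (this is the step $f_m(\cdot)\le\sigma^{-d}$), and the leftover integrals over $\cube$ contribute the $(3D)^d$ volume factors.

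Second, and more seriously, conditioning only on the barycenters of clusters with $|B_i|\ge2$ does \emph{not} fix all the bisectors you need. If some $b\in B_{i,j}$ with $|B_i|\ge2$ has $|B_j|=1$, then $c_j$ still depends on the single random point of $B_j$, so $H_{i,j}$ is not fixed and your ``$2\eps/(\sigma\sqrt\pi)$ per free point'' bound is unjustified for that $b$. The paper avoids this by conditioning on \emph{all} sums $g_i$, singletons included; for singletons this turns the conditional probability into an indicator, and the subsequent integration of $f_{g_i}\cdot\indi_i$ over $\cube$ is exactly where Lemma~\ref{lemma:SinglePointDeterminesCenter} and the cycle-breaking graph enter. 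Your charging sketch handles singleton$\to$singleton edges but is silent on $m\ge2\to$singleton, which is the case that forces one to condition on the singletons up front.
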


\begin{proof}
Let $g_i$ be the random vector that equals the sum of the points
in $B_i$, i.e.,
\[
  g_i := \sum_{b\in B_i}b\enspace.
\]
Due to the union bound, the influence of $A_i$ and $A_j$ on the
hyperplane $H_{i,j}$ is fixed. Since the union bound also
fixes the number of points in $B_i$ and $B_j$, it suffices to know the
sums $g_i$ and $g_j$ to deduce the exact position of the
hyperplane $H_{i,j}$. Hence, once all sums $g_i$ are fixed, all
hyperplanes are fixed as well. The drawback is, of course, that
fixing the sums $g_i$ has an impact on the distribution of the random
positions of the points in $B_i$. We circumvent this problem as follows:
We basically show that if $B_i$ contains $m$ points and the sum
$g_i$ is fixed, then we can still use the randomness of $m-1$ of
these points. For sets $B_i$ that contain at least two points, this
means that we can use the randomness of at least half of its points.
Complications are only caused by sets $B_i$ that consist of a single
point. For such sets, fixing $g_i$ is equivalent to fixing the position of
the point, and we give a more direct analysis without fixing $g_i$.

For $y_i,y_j\in\RR^d$, we denote by $H_{i,j}(y_i,y_j)$ the bisector of
the clusters $\cluster i$ and $\cluster j$ that is obtained for $g_i=y_i$ and
$g_j=y_j$.
Let $\kactive$ be the number of clusters $\cluster i$ with $|B_i|>0$.
Without loss of generality, these are the clusters $\cluster
1,\ldots,\cluster{\kactive}$.
This convention allows
us to rewrite the probability of $\calE\wedge\neg\calF$ as
\begin{multline*}
  \Pr{\forall i,j\colon \forall b\in B_{i,j}\colon
       \dist(b,H_{i,j})\leq \eps\wedge\neg\calF}
 \le \int_{\cube}\cdots\int_{\cube} 
     \left(\prod_{i=1}^{\kactive} f_{g_i}(y_i)\right)\\
 \cdot \condPr{\forall i,j\colon \forall b\in B_{i,j}\colon
       \dist(b, H_{i,j}(y_i,y_j)) \leq \eps}
       {\forall i\colon g_i=y_i}\, \dint{y_{\kactive}}\ldots \dint{y_1}\:,
\end{multline*}
where $f_{g_i}$ is the density of the random vector $g_i$.
We admit that our notation is a bit sloppy: If $|B_{i,j}|>0$ and
$j\notin\{1,\ldots,\kactive\}$, then $H_{i,j}$ depends only on $y_i$.
In this case, we should actually write $H_{i,j}(y_i)$ instead
of $H_{i,j}(y_i,y_j)$ in the formula above. In order to keep the
notation less cumbersome, we ignore this subtlety and assume that 
$H_{i,j}(y_i,y_{j_i})$ is implicitly replaced by $H_{i,j}(y_i)$ whenever
necessary.
Points from different sets $B_i$ and $B_j$ are independent even under
the assumption that the sums $g_i$ and $g_j$ are fixed. Hence, we can
further rewrite the probability as
\begin{multline}
 \label{eqn:IntegrationIndependent}
  \int_{\cube}\cdots\int_{\cube}
  \left(\prod_{i=1}^{\kactive} f_{g_i}(y_i)\right) \\
 \cdot \left(\prod_{i=1}^{\kactive}
     \condPr{\forall j\colon \forall b\in B_{i,j}\colon
       \dist(b, H_{i,j}(y_i,y_j)) \leq \eps}
       {g_i=y_i}\right)\,\dint{y_{\kactive}}\ldots \dint{y_1}\:.
\end{multline}
Now let us consider the probability
\begin{equation}\label{eqn:ProbFixedI}
  \condPr{\forall j\colon \forall b\in B_{i,j}\colon
       \dist(b, H_{i,j}(y_i,y_j)) \leq \eps}
       {g_i=y_i}
\end{equation}
for a fixed $i$ and for fixed values $y_i$ and $y_j$. To simplify the
notation, let $B_i=\{b_1,\ldots,b_m\}$, and let the
corresponding hyperplanes (which are fixed because $y_i$ and the $y_j$'s
are given) be $H_1,\ldots,H_m$. (A hyperplane may occur several times
in this list if more than one point goes from $\cluster i$ to some
cluster $\cluster j$.) Then the probability simplifies to
\[
  \condPr{\forall j\colon \dist(b_j, H_{j}) \leq \eps}
  {g_i=y_i}\enspace.
\]
We distinguish between the cases $m=1$ and $m>1$.

\paragraph{Case 1: $m=1$.} The probability degenerates to
\begin{equation}
\label{eqn:Indicator}
    \condPr{\dist(b_1, H_{1}) \leq \eps}{g_i=b_1=y_i}
  = \begin{cases}
      1 & \text{if $y_i$ is $\eps$-close to $H_1$,} \\
      0 & \text{otherwise.}     
    \end{cases}
\end{equation}
So, given $g_i = y_i$, there is no randomness in the event that
$y_i$ is $\eps$-close to $H_1$.
Choose $j_i$ such that $b_1\in B_{i,j_i}$ and denote by $\indi_i(y_i,y_{j_i})$ the
indicator variable defined in~\eqref{eqn:Indicator}. We replace the corresponding
probability in~\eqref{eqn:IntegrationIndependent} by $\indi_i(y_i,y_{j_i})$.

\paragraph{Case 2: $m>1$.}
Let $H_j(\eps)$ be the slab of width $2\eps$ around $H_j$, i.e.,
$H_j(\eps) = \{x \in \RR^d \mid \dist(x,H_j) \leq \eps\}$.
Let $f$ be the joint density of the
random vectors $b_1,\ldots,b_{m-1},g_i$. Then the
probability~\eqref{eqn:ProbFixedI} can be bounded from above by
\[
  \int_{z_1\in H_1(\eps)}\cdots \int_{z_{m-1}\in H_{m-1}(\eps)}
  \frac{f(z_1,\ldots,z_{m-1},y_i)}{f_{g_i}(y_i)}\, \dint{z_{m-1}}\ldots
  \dint{z_1}\enspace.
\]
Now let $f_i$ be the density of the random vector $b_i$. This allows us
to rewrite the joint density, and we obtain the upper bound
\begin{align*}
  &\quad\int_{z_1\in H_1(\eps)}\cdots \int_{z_{m-1}\in H_{m-1}(\eps)}
  \frac{f_1(z_1)\cdot\ldots\cdot f_{m-1}(z_{m-1})\cdot
  f_{m}(y_i-\sum_{j=1}^{m-1}z_j)} {f_{g_i}(y_i)}\, \dint{z_{m-1}}\ldots
  \dint{z_1}\\
  &\le \frac{1}{f_{g_i}(y_i)\cdot\sigma^d} 
  \int_{z_1\in H_1(\eps)}\cdots \int_{z_{m-1}\in H_{m-1}(\eps)}
  f_1(z_1)\cdot\ldots\cdot f_{m-1}(z_{m-1})\, \dint{z_{m-1}}\ldots
  \dint{z_1}\\
  &= \frac{1}{f_{g_i}(y_i)\cdot\sigma^d}
  \left(\prod_{i=1}^{m-1}
    \int_{z_i\in H_i(\eps)}f_i(z_i)\,\dint{z_i}
  \right)\\  
  & \le \frac{1}{f_{g_i}(y_i)\cdot\sigma^d}
  \cdot\left(\frac{\eps}{\sigma}\right)^{m-1}\enspace.
\end{align*}
The first inequality follows from $f_m(\cdot) \leq \sigma^{-d}$, and the
last inequality follows because the probability that a
Gaussian random vector assumes a position within distance $\eps$
of a given hyperplane is at most $\eps/\sigma$.

Now we plug the bounds derived in Cases~1 and~2
into~\eqref{eqn:IntegrationIndependent}. Let $\kone$ be the
number of clusters $\cluster i$ with $|B_i|=1$, and let
us assume that these are the clusters $\cluster 1,\ldots,\cluster{\kone}$.
Let
\[
   \ktwo = |\{i\mid |B_i|>1\}|
   \text{\quad and\quad}
   m'  = \sum_{i,|B_i|>1}(|B_i|-1)\:,
\]
that is, $\ktwo$ is the number of clusters that lose more than
one point, and $m'$ is the number of points leaving those clusters
minus $\ktwo$, i.e., the number of points whose randomness we exploit.
Note that $\kactive = \ktwo + \kone$.
Then~\eqref{eqn:IntegrationIndependent}
can be bounded from above by
\begin{equation}\label{eqn:RemainingProbability}
  \frac{1}{\sigma^{\ktwo d}}
  \cdot\left(\frac{\eps}{\sigma}\right)^{m'}
  \int_{\cube}\cdots\int_{\cube}
  \left(\prod_{i=1}^{\kone}
     f_{g_i}(y_i)\cdot\indi_i(y_i,y_{j_i})\right)\,\dint{y_{\kactive}}\ldots
     \dint{y_1}\:,
\end{equation}
where $f_{g_i}$ cancels out for $i > \kone$.
Observe that for fixed $y_{j_i}$ the term
\begin{equation}\label{eqn:IsolatedProb}
  \int_{\cube}f_{g_i}(y_i)\cdot\indi_i(y_i,y_{j_i})\,\dint{y_i}
\end{equation}
describes the probability that the point $b$ with $B_i=\{b\}$ lies in
$\cube$ and is at distance at most $\eps$ from the bisector of $\cluster
i$ and $\cluster {j_i}$. For $y_{j_i}\in\cube$, the point $b$ can only
lie in the hypercube $\cube$ if it has a distance of at most
$\sqrt{d}(1+2D)\le 3\sqrt{d}D$ from $y_{j_i}$. Hence, we can use 
Lemma~\ref{lemma:SinglePointDeterminesCenter} with $\delta = 3\sqrt d D$
to obtain that the probability in~\eqref{eqn:IsolatedProb} is upper bounded by
\[
  \frac{2\sqrt{n3\sqrt{d}D\eps}}{\sigma}
  \le\frac{4\sqrt{n\sqrt{d}D\eps}}{\sigma}\:.
\]
Since there can be circular dependencies like, e.g., $j_i = i'$ and
$j_{i'} = i$, it might not be possible to reorder the integrals
in~\eqref{eqn:RemainingProbability} such that all terms become isolated
as in~\eqref{eqn:IsolatedProb}. We resolve these dependencies by only
considering a subset of the clusters. To make this more precise, consider
a graph whose nodes are the clusters and that has a directed edge
from $\cluster i$ to $\cluster j$ if $|B_i|=|B_{i,j}|=1$, i.e., for every $i$ with
$|B_i|=1$, there is an edge from $\cluster i$ to $\cluster{j_i}$. This graph
contains exactly $\kone$ edges and we can identify a subset of
$k' \geq \kone/2$ edges that is cycle-free. The subset $\clusterSet'$ of
clusters that we consider consists of the tails of these edges. Since every node in
the graph has an out-degree of at most one, $\clusterSet'$ consists of exactly $k'$
clusters.
For each cluster $\cluster i$ not contained in $\clusterSet'$, we replace the
corresponding $\indi_i(y_i,y_{j_i})$ by the trivial upper bound of $1$.
Without loss of generality, the identified subset $\clusterSet'$ consists of the clusters
$\cluster 1,\ldots,\cluster{k'}$ and it is topologically sorted in the sense that
$i<j_i$ for all $i\in\{1,\ldots,k'\}$. Given this,
\eqref{eqn:RemainingProbability} can be bounded from above by
\begin{equation*}
  \frac{1}{\sigma^{\ktwo d}}
  \cdot\left(\frac{\eps}{\sigma}\right)^{m'}
  \underbrace{\int_{\cube}
  \cdots\int_{\cube}}_{\kactive - k' \text{ integrals}}
  \underbrace{\int_{\cube}f_{g_{k'}}(y_{k'})\cdot\indi_{k'}(y_{k'},y_{j_{k'}})
  \cdots\int_{\cube}}_{k' \text{ integrals}}
  f_{g_1}(y_1)\cdot\indi_1(y_1,y_{j_1})
  \,\dint{y_1}\ldots \dint{y_{\kactive}}\enspace.
\end{equation*}
Evaluating this formula from right to left according
to Lemma~\ref{lemma:SinglePointDeterminesCenter} yields
\[
  (3D)^{dk} \cdot \frac{1}{\sigma^{\ktwo d}}
  \cdot\left(\frac{\eps}{\sigma}\right)^{m'}
  \left(\frac{4\sqrt{n\sqrt{d}D\eps}}{\sigma}\right)^{k'}
  = (3D)^{dk} \cdot \left(4\sqrt{n\sqrt{d}D}\right)^{k'}
  \frac{\eps^{m'+k'/2}}{\sigma^{\ktwo d+m'+k'}}
  \:,
\]
where the term $(3D)^{dk}$ comes from the $\kactive - k' \leq k$ integrals
over $y_{k'+1},\ldots, y_{\kactive}$: Each of these integrals is over
the hypercube $\cube$, which has a volume of $(2D+1)^d\le(3D)^d$.
The definitions directly yield that $\ktwo \le k$ and
$m'+k'\leq m' + \kone \leq \ell$. Furthermore,
\[
   m'\ge\frac{\sum_{i,|B_i|>1}|B_i|}{2}
   \text{\quad and\quad}
   k'\ge \frac{\kone}2 = \frac{\sum_{i,|B_i|=1}|B_i|}{2}
\]
implies $m'+k'/2\ge\ell/4$. Altogether, this yields the desired upper
bound of
\[
 (3D)^{dk} \left(4\sqrt{n\sqrt{d}D}\right)^{\ell}\cdot
  \frac{\eps^{\ell/4}}{\sigma^{kd+\ell}}
  = \left(\frac{3D}{\sigma}\right)^{kd}\cdot
  \left(\frac{32n^2dD^2\eps}{\sigma^4}\right)^{\ell/4}
\]
for the probability of the event $\calE \wedge \neg\calF$.
\end{proof}

\begin{lemma}\label{lemma:SinglePointDeterminesCenter}
Let $o\in\RR^d$ and $p\in\RR^d$ be arbitrary points and
let $r$ denote a random point chosen according to a $d$-dimensional
normal distribution with arbitrary mean and standard deviation~$\sigma$.
Moreover, let $\ell\in\{0,\ldots,n-1\}$, and let
\[
  q = \frac{\ell}{\ell+1} \cdot p + \frac{1}{\ell+1} \cdot r
\]
be a convex combination of $p$ and $r$. Then the probability that $r$ is
\begin{enumerate}[(i)]
\item at a distance of at most $\delta>0$ from $o$ and
\item at a distance of at most $\eps>0$ from the bisector of $o$ and $q$
\end{enumerate}
is bounded from above by
\[
  \frac{2\sqrt{n\delta\eps}}{\sigma}\enspace. 
\]
\end{lemma}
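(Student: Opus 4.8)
The plan is to reduce the $d$-dimensional statement to a one-dimensional calculation by conditioning appropriately, and then to bound the resulting one-dimensional probability by splitting on whether $r$ is close to $o$. Write $r$ in coordinates adapted to the segment $\overline{op}$: let $u = (p-o)/\norm{p-o}$ be the unit vector from $o$ towards $p$ (if $p = o$ the two bisector conditions become trivial and can be handled separately), and decompose $r = o + t u + w$, where $t \in \RR$ and $w \perp u$. The key geometric observation is that the bisector of $o$ and $q$ is the hyperplane through the midpoint $(o+q)/2$ orthogonal to $q - o$, and since $q = \frac{\ell}{\ell+1} p + \frac{1}{\ell+1} r$ lies in the two-dimensional plane spanned by $u$ and $w$, the vector $q - o$ also lies in that plane. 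Hence the distance of $r$ from this bisector depends only on the pair $(t, \norm{w})$ — in fact only on the projection of $r - (o+q)/2$ onto the direction $q-o$.

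Next I would condition on $w$, i.e.\ on the component of $r$ orthogonal to $u$; once $w$ is fixed, $t$ is a one-dimensional Gaussian with standard deviation $\sigma$, and both events (i) and (ii) become events about $t$ lying in certain subsets of $\RR$. Condition (i), $\norm{r - o} \le \delta$, forces $t^2 + \norm{w}^2 \le \delta^2$, so $t$ must lie in an interval of length at most $2\delta$ (and the event is empty unless $\norm{w} \le \delta$). For condition (ii), I would compute $\dist(r, \text{bisector of } o, q)$ as a function of $t$ (with $\norm{w}$ as a parameter) and show that the set of $t$ for which this distance is at most $\eps$ is contained in a union of few intervals whose total length is $O(n \eps)$; the factor $n$ (really $\ell+1 \le n$) enters because $q$ depends on $r$, so moving $r$ by an amount $\Delta$ moves $q$ by only $\Delta/(\ell+1)$, which makes the derivative of the distance-to-bisector function with respect to $t$ bounded below by a constant over $n$ on the relevant region. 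Combining: conditioned on $w$, the probability of (i)~and~(ii) is at most $\frac{1}{\sigma\sqrt{2\pi}}$ times the Lebesgue measure of the intersection of an interval of length $\le 2\delta$ with a set of measure $O(n\eps)$, which is at most $\min\{2\delta, O(n\eps)\} \le 2\sqrt{n\delta\eps}$ (using $\min\{a,b\}\le\sqrt{ab}$); integrating out $w$ preserves this bound, giving $\frac{2\sqrt{n\delta\eps}}{\sigma}$ after tracking the constants.

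The main obstacle I expect is the explicit analysis in step two: writing down $\dist(r, H(o,q))$ as a function of $t$ and verifying that its sublevel set $\{t : \dist \le \eps\}$ has measure $O(n\eps)$. The subtlety is that as $t$ varies, the bisector hyperplane $H(o,q)$ both translates and rotates (because $q$ moves), so the distance is not an affine function of $t$; one must check that the "bad" region where the derivative could be small (near where $r$, $o$, $q$ are nearly collinear with the bisector passing close to $r$) is itself small, or argue via the product structure $\dist = $ (something) $\times$ (something) and bound each factor. Once this one-dimensional estimate is in hand, the reduction via conditioning on $w$ and the final $\min\{\cdot,\cdot\}\le\sqrt{\cdot\,\cdot}$ trick are routine. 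I would also double-check the degenerate cases $p = o$, $\ell = 0$ (where $q = r$ and the bisector of $o$ and $r$ always passes at distance $\norm{r-o}/2$ from $r$, making (ii) nearly equivalent to (i)), and $r = o$, to be sure the stated bound still holds.
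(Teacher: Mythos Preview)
Your high-level strategy---condition on all but one coordinate of $r$ and analyse a one-dimensional probability---is exactly what the paper does. The gap is in your step two: the claim that $\{t:\dist(r,H(o,q))\le\eps\}$ has Lebesgue measure $O(n\eps)$ is false, and the derivative heuristic cannot be repaired. When the distance $g(t)=\dist(r,H)$ is small, one computes $g'(t)\approx\tfrac{2\ell+1}{2(\ell+1)}\,\hat q\cdot u$, the cosine of the angle between $q-o$ and your chosen direction $u=(p-o)/\|p-o\|$; this vanishes at $t=-\ell\|p-o\|$, where $q-o$ becomes orthogonal to $u$. Near that critical point the sublevel set has width $\Theta\bigl(\sqrt{(\ell+1)\|q\|\eps}\bigr)$, not $O(n\eps)$, so the ``bad region'' you flag as the main obstacle is genuinely large and your $\min\{2\delta,O(n\eps)\}\le\sqrt{\,\cdot\,}$ combination never gets off the ground.

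The paper sidesteps this with two choices. First, it takes the free coordinate $r_1$ \emph{perpendicular} to $p-o$ (after translating $o$ to the origin and rotating $p$ onto a coordinate axis), not along it; since $p$ then has zero first coordinate, so does the contribution of $p$ to $q$. Second---and this is the idea you are missing---it does not argue via derivatives at all: writing the bisector condition as $q\cdot(q/2-r)\in[-\|q\|\eps,\|q\|\eps]$, the left-hand side becomes a pure quadratic $2\lambda(\lambda-1)r_1^2+\text{const}$ in $r_1$ (with $2\lambda=1/(\ell+1)$), so the condition confines $r_1^2$ to a fixed interval of length at most $4n\|q\|\eps\le 4n\delta\eps$. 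The factor $\sqrt{n\delta\eps}$ then appears directly from inverting the square, with no $\min$-trick needed. Your direction $u$ would in fact yield a quadratic with the same leading coefficient $\tfrac{2\ell+1}{2(\ell+1)^2}$ (plus a linear term removable by completing the square), so your plan can be salvaged---but only by replacing the derivative/$O(n\eps)$ step with this direct algebraic computation.
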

\begin{proof}
For ease of notation, we assume that $o$ is the origin of the
coordinate system, i.e., $o=(0, \ldots, 0)$. Due to rotational symmetry, we can
also assume that $p = (0,p_2,0,\ldots,0)$ for some $p_2 \in\RR$.
Let $r=(r_1,\ldots,r_d)$, and assume that the
coordinates $r_2,\ldots,r_d$ are fixed arbitrarily. We only exploit the
randomness of the coordinate $r_1$, which is a one-dimensional Gaussian
random variable with standard deviation $\sigma$. The condition that $r$
has a distance of at most $\eps$ from the bisector of $o$ and $q$ 
can be expressed algebraically as
\[
  \frac{q-o}{\norm{q-o}}\cdot\left(\frac{q+o}{2}-r\right) 
  \in\left[-\eps,\eps\right]\:.
\]
Since $o = (0,\ldots, 0)$, this simplifies to
\[
  \frac{q}{\norm{q}}
  \cdot\left(\frac{q}{2}-r\right)\in\left[-\eps,\eps\right]
  \:\iff\: q\cdot\left(\frac{q}{2}-r\right)
  \in\bigl[-\norm{q}\eps, \norm{q}\eps\bigr]\:.
\]
Since $r_i$ is fixed for $i\neq1$, also the coordinates $q_i$ of $q$ are fixed
for $i\neq1$. Setting $2\lambda=1/(\ell+1)$ and
exploiting that the first coordinate of $p$ is $0$, we
can further rewrite the previous expression as
\[
  \begin{pmatrix}
    2\lambda r_1 \\
    q_2 \\
    \vdots \\
    q_d 
  \end{pmatrix}
  \cdot
  \begin{pmatrix}
  (\lambda-1) r_1\\
  q_2/2-r_2 \\
  \vdots\\
  q_d/2-r_d
  \end{pmatrix}
  \in\bigl[-\norm{q}\eps,\norm{q}\eps\bigr]
  \:,
\]
which is equivalent to
\[
  2\lambda(\lambda-1) r_1^2
  +\sum_{i=2}^d q_i(q_i/2-r_i)
  \in\bigl[-\norm{q}\eps,\norm{q}\eps\bigr]\:.
\]
Since the coordinates $q_i$ and $r_i$ are fixed for $i\neq 1$, this
implies that $r$ can only be at distance $\eps$ from the bisector of $p$ and $q$ if $r_1^2$
falls into a fixed interval of length
\[
  \frac{2\norm{q}\eps}{2\lambda(1-\lambda)}
  = \frac{2(\ell+1)\norm{q}\eps}{1-\frac{1}{2(\ell+1)}}
  \le 4n\norm{q}\eps \: .
\]
As we only consider the event that $q$ has a distance of at most $\delta$ from $o$,
we can replace $\norm{q}$ by $\delta$ in the expression above, leaving us
with the problem to find an upper bound for the probability that the random variable
$r_1^2$ assumes a value in a fixed interval of length at most $4n\delta\eps$. For this
to happen, $r_1$ has to assume a value in one of two intervals, each of length at most
$2\sqrt{n\delta\eps}$. Now $r_1$ follows a Gaussian distribution with standard
deviation $\sigma$. This means that the corresponding density is bounded from above
by $(\sqrt{2 \pi} \sigma)^{-1}$. Thus, the probability of this event is
at most
\[
   \frac{4\sqrt{n\delta\eps}}{\sqrt{2\pi}\sigma}
   < \frac{2\sqrt{n\delta\eps}}{\sigma}\:. \qedhere
\]
\end{proof}

\section{An Upper Bound}
\label{sec:upper}

Lemma~\ref{lemma:NumberOfClosePoints} yields an upper bound on the number of
iterations that $k$-means needs: Since there are only few points close to
hyperplanes, eventually a point switches from one cluster to another that
initially was not close to a hyperplane. The results of this section lead to the
proof of Theorem~\ref{thm:o1} in Section~\ref{ssec:o1}.

First, we bound the number of iterations in terms of the distance $\Delta$ of
the closest cluster centers that occur during the run of $k$-means.

\begin{lemma}
\label{lem:upper}
For every $a\in[k]$, with a probability of at least $1-3W^{-1}$, every sequence
of $k^{kd/a}+1$ consecutive steps of the $k$-means algorithm (not including the
first one) reduces the potential by at least 
\[
  \frac{\eps^2\cdot \min\{\Delta^2,1\}}{36dD^2k^{kd/a}}\:,
\]
where $\Delta$
denotes the smallest distance of two cluster centers that occurs during the
sequence and $\eps$ is defined as in
Lemma~\ref{lemma:NumberOfClosePoints}.
\end{lemma}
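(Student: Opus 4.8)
The plan is to combine Lemma~\ref{lemma:NumberOfClosePoints} with a standard "progress" argument for $k$-means: whenever a point $x$ switches from one cluster to another, the decrease of the potential can be lower-bounded in terms of how far $x$ is from the bisecting hyperplane it crosses (and the distance between the two relevant centers). So first I would recall or re-derive the elementary geometric fact that if $x$ is reassigned from the cluster with center $c_i$ to the cluster with center $c_j$, and $\dist(x, H_{i,j}) \ge \eps$, then reassigning $x$ alone decreases $\sum \|x - c\|^2$ by at least $2\eps \cdot \|c_i - c_j\| \ge 2\eps\Delta$ (since $\|x-c_i\|^2 - \|x-c_j\|^2 = 2\dist(x,H_{i,j})\cdot\|c_i-c_j\|$). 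More care is needed because in one $k$-means step many points move simultaneously and the centers are recomputed afterwards; but the potential after recomputing centers is no larger than the potential with the new assignment and the old centers, so it suffices to reason about one well-chosen point's contribution while treating the rest of the reassignment as not increasing the potential.

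The second ingredient handles the case distinction in Lemma~\ref{lemma:NumberOfClosePoints}. Fix $a \in [k]$ and consider a window of $k^{kd/a}+1$ consecutive steps. In each such step, either fewer than $kd/a$ points change assignment, or at least $kd/a$ do. If some step in the window moves at least $kd/a$ points, then by Lemma~\ref{lemma:NumberOfClosePoints} (which holds with probability $1-2W^{-1}$) one of those points is at distance $> \eps$ from its bisector, and the single-point argument above gives a potential drop of at least roughly $\eps\Delta$ (then one squares / divides to match the stated bound). If, on the other hand, \emph{every} step in the window moves fewer than $kd/a$ points: the assignment changes in every step (otherwise $k$-means would have terminated), and with at most $kd/a$ points moving per step there are at most $\binom{n}{<kd/a} k^{2kd/a}$-ish — more simply, at most $k^{kd/a}$ — distinct "small-change" configurations reachable, so within $k^{kd/a}+1$ steps we cannot have all steps be small-change without repeating a configuration, contradicting the monotone strict decrease of the potential. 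Hence at least one step in the window is a "large-change" step and the first case applies. I would need to double-check the counting that bounds the number of configurations reachable via small changes by $k^{kd/a}$; the cleanest route is probably to bound the number of possible Voronoi partitions consistent with a bounded-size symmetric difference, or to invoke the pigeonhole directly on the sequence of partitions.

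The third ingredient is quantitative bookkeeping to get exactly the claimed bound $\dfrac{\eps^2 \cdot \min\{\Delta^2,1\}}{36 d D^2 k^{kd/a}}$. Here I would be slightly worried that the naive single-switch bound gives something like $2\eps\Delta$, not $\eps^2\Delta^2/(\text{stuff})$, so the extra factors of $\eps$, $\Delta$, $1/(dD^2)$ and $1/k^{kd/a}$ must come from somewhere: presumably from the fact that the relevant point may switch back and forth, or from needing $\Delta$ itself to be bounded below in terms of $\eps/(dD^2)$ using that centers are averages of points in the cube $\cube$ of side $\le 3D$, and from amortizing one good step's drop over the whole window of length $k^{kd/a}+1$. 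I expect this amortization-and-constants step — reconciling the clean geometric drop with the precise form of the stated bound, including why $\min\{\Delta^2,1\}$ rather than $\Delta$ appears — to be the main obstacle; the geometry and the pigeonhole are routine, but matching the exact exponents and the $36dD^2$ constant is where the real work (and the risk of an off-by-a-factor error) lies.
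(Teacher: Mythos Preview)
Your case distinction (some step moves $\ge kd/a$ points vs.\ every step moves $< kd/a$ points) does not lead to a valid pigeonhole. In the second case you claim that at most $k^{kd/a}$ configurations are reachable by ``small-change'' steps, but this is false: each step may move a \emph{different} set of $< kd/a$ points, so after many steps an unbounded number of points may have changed clusters, and the number of reachable partitions is not bounded by $k^{kd/a}$. Neither of your suggested fixes (bounded symmetric difference, pigeonhole on the sequence of partitions) repairs this, because there is no fixed small set to pigeonhole into under your decomposition.

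The paper's pigeonhole is set up differently and this is the missing idea. One applies Lemma~\ref{lemma:NumberOfClosePoints} to the \emph{initial} configuration of the window: with probability $\ge 1-2W^{-1}$ fewer than $kd/a$ points are within $\eps$ of the initial bisectors. This is a \emph{fixed} set of $< kd/a$ points, which admits at most $k^{kd/a}$ assignments; hence over $k^{kd/a}+1$ steps some point $x$ \emph{outside} this set must switch, say from $\cluster 1$ to $\cluster 2$ with initial centers $c_1,c_2$ at distance $\delta\ge\Delta$. Now there are two cases. If $x$ is already closer to $c_2$ at the start, it switches in the first step and your single-switch estimate gives a drop $\ge 2\eps\delta$, which dominates the stated bound. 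If instead $x$ is initially closer to $c_1$, then $\|x-c_2\|^2-\|x-c_1\|^2\ge 2\eps\delta$, so $\|x-c_2\|-\|x-c_1\|\ge \eps\delta/(3\sqrt{d}D)$ (using $\|x-c_1\|+\|x-c_2\|\le 6\sqrt{d}D$ inside $\cube$); for $x$ to later prefer $\cluster 2$ the two centers must together move by at least $\eps\delta/(6\sqrt{d}D)$. By Lemma~\ref{lemma:MovementImprovement} each center move of $\xi$ drops the potential by $\xi^2$, and by concavity the worst case spreads this movement evenly over the $k^{kd/a}$ steps, yielding a drop of at least $k^{kd/a}\cdot\bigl(\eps\delta/(6\sqrt{d}D k^{kd/a})\bigr)^2 = \eps^2\delta^2/(36dD^2 k^{kd/a})$. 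This second case is exactly the source of the $\eps^2\Delta^2$, the $36dD^2$, and the $k^{kd/a}$ in the denominator that puzzled you; it is not an amortization artifact but a genuine center-movement argument that your outline does not contain.
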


\begin{proof}
Consider the configuration directly before the sequence of steps is
performed. Due to Lemma~\ref{lemma:NumberOfClosePoints}, the probability
that more than $kd/a$ points are within distance $\eps$ of one of the
bisectors is at most $2W^{-1}$. Additionally, only with a
probability of at most $W^{-1}$ there exists a point from $\points$ that does
not lie in the hypercube $\cube$. Let us assume in the following
that none of these failure events occurs, which is the case with
a probability of at least $1-3W^{-1}$.

These points can assume
at most $k^{kd/a}$ different configurations. Thus, during the considered sequence, at least
one point that is initially not within distance $\eps$ of one of the
bisectors must change its assignment. Let us call this point $x$, and
let us assume that it changes from cluster $\cluster 1$ to cluster $\cluster 2$.
Furthermore, let $\delta$ be the distance of the centers of $\cluster 1$
and $\cluster 2$ before the sequence, and let $c_1$ and $c_2$ be the
positions of the centers before the sequence. We distinguish two 
cases. First, if $x$ is closer to $c_2$ than to $c_1$ already in the
beginning of the sequence, then
$x$ will change its assignment in the first step. Let $v = \frac{c_2 - c_1}{\|c_2 - c_1\|}$
be the unit vector in $c_2 - c_1$ direction. We have $c_2 - c_1 = \delta v$ and
$(2x - c_1 - c_2) \cdot v = \alpha v$ for some $\alpha \geq 2\eps$. Then the switch of
$x$ from $\cluster 1$ to $\cluster 2$ reduces the potential by at least
\[
       \norm{x-c_1}^2-\norm{x-c_2}^2
   =   (2x-c_1-c_2) \cdot (c_2-c_1)
  \geq 2 \eps\delta \geq 2 \eps \Delta
  \ge \frac{\eps^2\cdot \min\{\Delta^2,1\}}{4D^2dk^{kd/a}}\:,
\]
where the last inequality follows from $\eps\le 1$ and $D\ge1$.
This completes the first case. Second, if $x$
is closer to $c_1$ than to $c_2$, then
\[
   \norm{x-c_2}^2-\norm{x-c_1}^2 \ge 2\eps\delta \:,
\]
and hence,
\[
   \norm{x-c_2}-\norm{x-c_1} \ge
   \frac{2\eps\delta}{\norm{x-c_2}+\norm{x-c_1}}
   \ge \frac{\eps\delta}{36D \sqrt{d}}\:.
\]
In this case, $x$ can only change to cluster $\cluster 2$ after at least
one of the centers of $\cluster 1$ or $\cluster 2$ has moved. Consider the centers
of \cluster 1 and \cluster 2 immediately before the reassignment of $x$
from $\cluster 1$ to $\cluster 2$. Let $c_1'$ and
$c_2'$ denote these centers. Then
\[
   \norm{x-c_1'}-\norm{x-c_2'} > 0 \:.
\]
By combining these observations with the triangle inequality, we obtain
\begin{align*}
   \norm{c_1-c_1'}+\norm{c_2-c_2'}
   & \ge \bigl(\norm{x-c_1'} - \norm{x-c_1}\bigr)
          + \bigl(\norm{x-c_2} - \norm{x-c_2'}\bigr)\\
   & = \bigl(\norm{x-c_1'} - \norm{x-c_2'}\bigr)
          + \bigl(\norm{x-c_2} - \norm{x-c_1} \bigr)
   \ge \frac{\eps\delta}{\sqrt{d}3D}\:.
\end{align*}
This implies that one of the centers must
have moved by at least $\eps\delta/(6\sqrt{d}D)$ during the considered
sequence of steps. Each time the center moves by some amount $\xi$, the potential
drops by at least $\xi^2$ (see Lemma~\ref{lemma:MovementImprovement}).
Since this function is concave, the smallest potential drop is obtained if the center moves by
$\eps\delta/(6\sqrt{d}Dk^{kd/a})$ in each iteration. Then the decrease of the
potential due to the movement of the center is at least
\[
   k^{kd/a}\cdot \left(\frac{\eps\delta}{6\sqrt{d}Dk^{kd/a}}\right)^2
   \ge \frac{\eps^2\Delta^2}{36dD^2k^{kd/a}}\:,
\]
which concludes the proof.
\end{proof}

In order to obtain a bound on the number of iterations that $k$-means
needs, we need to bound the distance $\Delta$ of the closest cluster
centers. This is done in the following lemma, which exploits
Lemma~\ref{lem:upper}.
The following lemma is the crucial ingredient of the proof of
Theorem~\ref{thm:o1}.

\begin{lemma}
\label{lem:GeneralDropBy1}
Let $a \in [k]$ be arbitrary.
Then the expected number of steps until the potential drops by
at least $1$ is bounded from above by
\[
 \gamma\cdot k^{2kd/a}\cdot
  nkd\left(\frac{d^2n^4D}{\sigma\eps}\right)^2
\]
for a sufficiently large absolute constant $\gamma$.
\end{lemma}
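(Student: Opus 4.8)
The plan is to combine the per-block potential-drop bound of Lemma~\ref{lem:upper} with an upper bound on how often the ``smallest cluster-center distance'' $\Delta$ in a block can be tiny. Fix $a \in [k]$ and group the iterations into consecutive blocks of $k^{kd/a}+1$ steps each. By Lemma~\ref{lem:upper}, with probability at least $1-3W^{-1}$ every such block reduces the potential by at least $\eps^2 \min\{\Delta^2,1\}/(36dD^2 k^{kd/a})$, where $\Delta$ is the smallest distance of two cluster centers occurring in that block. So if in a block we always had $\Delta \geq \Delta_0$ for some threshold $\Delta_0 \le 1$, then each block would drop the potential by $\Omega(\eps^2\Delta_0^2/(dD^2 k^{kd/a}))$, and since the total potential at the start is polynomially bounded (all points lie in $\cube$ of side $3D$, so the potential is at most $n \cdot d(3D)^2$), only $O\bigl(ndD^2 \cdot dD^2 k^{kd/a}/(\eps^2\Delta_0^2)\bigr)$ such ``good'' blocks can occur. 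The task is therefore to control the ``bad'' blocks in which two cluster centers come within distance $\Delta_0$.

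For the bad blocks I would choose $\Delta_0$ to be a suitable polynomial in $\sigma/(d^2 n^4 D)$ and argue that, with high probability, no two distinct cluster centers are \emph{ever} simultaneously within distance $\Delta_0$ during the entire run, so that bad blocks simply do not occur. This is the step I expect to be the main obstacle, and it should be handled by the same union-bound-over-Voronoi-partitions technique used in Lemma~\ref{lemma:NumberOfClosePoints}: a cluster center $c_i$ is the center of mass of a set of points that forms one cell of a Voronoi partition, so there are at most $W$ choices for each, hence at most $W^2 \le n^{2\kappa kd}$ choices for the pair of point sets defining $c_i$ and $c_j$. For a fixed pair of point sets, $c_i - c_j$ is an affine function of the underlying Gaussians with at least one point whose coefficient is nonzero (the sets are distinct), so $c_i-c_j$ is itself a non-degenerate Gaussian vector in $\RR^d$, and the probability that $\norm{c_i-c_j} \le \Delta_0$ is at most $(\Delta_0 \sqrt{n}/\sigma)^d$ or so (using that a coordinate has standard deviation at least $\sigma/\sqrt{n}$). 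Choosing $\Delta_0$ polynomially small in $\sigma/(n^{O(1)}D)$ makes the union bound over $W^2$ pairs smaller than $W^{-1}$, so that together with the failure events of Lemma~\ref{lem:upper} and $\calF$ everything fails with probability at most $O(W^{-1})$.

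Conditioned on none of these failure events, every block is good, so the number of blocks until the potential drops by $1$ is at most $O\bigl(d^2 n D^4 k^{kd/a}/(\eps^2 \Delta_0^2)\bigr)$; multiplying by the block length $k^{kd/a}+1 \le 2k^{kd/a}$ and substituting the chosen value of $\Delta_0 = \poly(\sigma/(d^2n^4D))$ gives a bound of the form $k^{2kd/a} \cdot nkd \cdot (d^2n^4D/(\sigma\eps))^2$ up to an absolute constant, which is exactly the claimed expression (the generous powers of $n$, $d$, $D$ absorb the union-bound overhead and the constants). Finally, to pass from ``with probability $\ge 1-O(W^{-1})$ the bound on the number of steps holds'' to a bound on the \emph{expected} number of steps, I would use the worst-case bound $W \le n^{\kappa kd}$ on the total number of iterations: the contribution of the failure event to the expectation is at most $O(W^{-1}) \cdot W = O(1)$, which is dominated by the stated bound, so the expectation satisfies the same estimate up to adjusting the absolute constant $\gamma$.
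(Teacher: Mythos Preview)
Your high-level strategy---blocks of length $k^{kd/a}+1$, control the minimum center distance $\Delta$, absorb failure events via the worst-case bound $W$---matches the paper's. The gap is in how you bound $\Pr{\Delta\le\Delta_0}$. You propose a union bound over ``$W^2$ choices for the pair of point sets'' and then use the Gaussian randomness of all points. This is invalid: the collection of at most $W$ Voronoi partitions of $\points$ is itself a function of the random point positions, so you cannot union over it and still exploit that same randomness inside each term. (The paper's $W$-term union bound in Lemma~\ref{lemma:NumberOfClosePoints} is different: there one first conditions on the positions of $\points\setminus B$, after which the set of Voronoi partitions of those points is deterministic, and only the randomness of $B$ is used.) The only deterministic index set available to you is the set of all $k^n$ partitions, which is far too large. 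And even granting the $W^2$ count, solving $W^2\cdot(\Delta_0\sqrt{n}/\sigma)^d\le W^{-1}$ forces $\Delta_0\le\sigma\, n^{-\Theta(k)}$, not the polynomial $\Delta_0=\poly(\sigma/(d^2n^4D))$ you assert; the resulting bound would pick up an extra $n^{\Theta(k)}$ factor and would not match the statement.

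The paper sidesteps this with a geometric observation you are missing: if two centers are within $\delta$, then since each center is the center of mass of its cluster and sits at distance $\delta/2$ from their bisector $H$, at least half of the points of the two clusters must lie within distance $\delta$ of $H$. Hence either some hyperplane has more than $2d$ points of $\points$ within distance $\delta$ (probability at most $n^{2d}(4d\delta/\sigma)^d$ by Lemma~\ref{lem:sepaprob}), or the two clusters together contain at most $4d$ points, which requires a union bound over only $(2n)^{4d}$ pairs of small subsets. Both cases give $\Pr{\Delta\le\delta}\le\bigl((4d+16)n^4\delta/\sigma\bigr)^d$, a bound polynomial in $n^d$ rather than $n^{kd}$. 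With this in hand the paper finishes not by a single threshold but by the tail integral $\Ex{T}\le t'+\int_{t'}^{W}\Pr{\Delta\le f(t)}\,\dint t$, which converges for $d\ge2$ and yields exactly the stated bound.
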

\begin{proof}
With a probability of at least $1-3W^{-1}$, the number of iterations
until the potential drops by at least
\[
  \frac{\eps^2\cdot \min\{\Delta^2,1\}}{36dD^2k^{kd/a}}
\]
is at most $k^{kd/a}+1$ due to Lemma~\ref{lem:upper}. We estimate the
contribution of the failure event, which occurs only with probability
$3W^{-1}$, to the expected running time by $3$ and ignore it in the
following. Let $T$ denote the random variable that equals the
number of sequences of length $k^{kd/a}+1$ until the potential has
dropped by one.

The random variable $T$ can only exceed $t$ if
\[
   \min\{\Delta^2,1\} \le \frac{36dD^2k^{kd/a}}{\eps^2\cdot t}\:, 
\]
leading to the following bound on the
expected value of $T$:
\begin{align*}
  \Ex{T} & = \sum_{t=1}^{W}\Pr{T\ge t}
   \le \int_{0}^{W}\PrB{\min\{\Delta^2,1\} \le
   \frac{36dD^2k^{kd/a}}{\eps^2\cdot t}}\,\dint t\\
    & \le t' + 
   \int_{t'}^{W}\PrB{\Delta \le 
   \frac{6\sqrt{d}Dk^{kd/(2a)}}{\eps\cdot \sqrt{t}}}\,\dint t\:,
\end{align*}
for
\[
   t' =
   \left(\frac{(24d+96)n^4\sqrt{d}Dk^{kd/(2a)}}{\sigma\eps}\right)^2\:.
\]

Let us consider a situation reached by $k$-means in which there are two
clusters $\cluster 1$ and $\cluster 2$ whose centers are at a distance
of $\delta$ from each other. We denote the positions of these centers by
$c_1$ and $c_2$. Let $H$ be the bisector between $c_1$ and $c_2$.
The points
$c_1$ and $c_2$ are the centers of mass of the points assigned to
\cluster 1 and \cluster 2, respectively. From this, we can conclude
the following: for every
point that is assigned to $\cluster 1$ or $\cluster 2$ and that has
a distance of at least $\delta$ from the bisector $H$,
as compensation another point must be assigned to $\cluster 1$ or
$\cluster 2$ that has a distance of at most $\delta/2$ from $H$.
Hence, the total number of points assigned to $\cluster 1$ or $\cluster 2$
can be at most twice as large as the total number of points assigned
to $\cluster 1$ or $\cluster 2$ that are at a distance of at
most $\delta$ from $H$. Hence, there can only exist two
centers at a distance of at most $\delta$ if one of the following two
properties is met:
\begin{enumerate}
  \item There exists a hyperplane from which more than $2d$ points have
   a distance of at most $\delta$.
  \item There exist two subsets of points whose union has cardinality
  at most $4d$ and whose centers of mass are at a distance of at most $\delta$.
\end{enumerate}
The probability that one of these events occurs can be bounded as follows
using a union bound and Lemma~\ref{lem:sepaprob} (see also Arthur and
Vassilvitskii~\cite[Proposition~5.6]{ArthurVassilvitskii:ICP:2006}):
\[
  \Pr{\Delta\le\delta} \le 
   n^{2d}\left(\frac{4d\delta}{\sigma}\right)^{2d-d}
   + (2n)^{4d}\cdot\left(\frac{\delta}{\sigma}\right)^d
   \le \left(\frac{(4d+16)n^4\delta}{\sigma}\right)^{d}\enspace.
\]
Hence,
\[
   \PrB{\Delta \le 
   \frac{6\sqrt{d}Dk^{kd/(2a)}}{\eps\cdot \sqrt{t}}}
   \le \left(\frac{\sqrt{t'}}{\sqrt{t}}\right)^{d}
\]
and, for $d\ge3$, we obtain
\begin{align*}
  \Ex{T} & \le t'+
   \int_{t'}^{W}\left(\frac{\sqrt{t'}}{\sqrt{t}}\right)^{d} \dint t \\
   & \le
   t'+t'^{d/2}\left[\frac{1}{(-d/2+1)\cdot
   t^{d/2-1}}\right]_{t'}^{\infty} = \frac{d}{d-2}\cdot t' \le 2\kappa nkd\cdot t'\:.
\end{align*}
For $d=2$, we obtain
\[
  \Ex{T}  \le t'+
   \int_{t'}^{W}\left(\frac{\sqrt{t'}}{\sqrt{t}}\right)^{d} \dint t
    \le t'+t'\cdot\bigl[\ln(t)\bigr]_{1}^{W}
   = t'\cdot\left(1+\ln(W)\right)
   \le 2\kappa nkd\cdot t'\:.
\]

Altogether, this shows that the expected number of steps until the
potential drops by at least $1$ can be bounded from above by
\[
 2+\left(k^{kd/a}+1\right)\cdot
 2\kappa nkd \cdot
 \left(\frac{(24d+96)n^4\sqrt{d}Dk^{kd/(2a)}}{\sigma\eps}\right)^2\:,
\]
which can, for a sufficiently large absolute constant $\gamma$, be
bounded from above by
\[
  \gamma\cdot k^{2kd/a}\cdot
  nkd \cdot \left(\frac{d^2n^4D}{\sigma\eps}\right)^2\:.\qedhere
\] 
\end{proof}

\section[Iterations with at most sqrt(k) Active Clusters]%
        {Iterations with at most $\sqrt{k}$ Active Clusters}
\label{sec:atmostsqrtk}

In this and the following section, we aim at proving the main lemmas that lead
to Theorem~\ref{thm:sqrtk}, which we will prove in Section~\ref{ssec:sqrtk}.
To do this, we distinguish two cases: In this section, we deal with the case
that at most $\sqrt k$ are active. In this case, either few points change
clusters, which yields a potential drop caused by the movement of the centers.
Or many points change clusters. Then, in particular, many points switch between
two clusters, and not all of them can be close to the hyperplane bisecting
the corresponding centers, which yields the potential drop in this case.

We define an \emph{epoch} to be a sequence of consecutive iterations in which no
cluster center assumes more than two different positions.
Equivalently, there are at most two different sets $\cluster{i}',
\cluster{i}''$ that every cluster $\cluster i$ assumes.
The obvious upper bound for the length of an epoch is $2^k$, which is stated
also by Arthur and Vassilvitskii~\cite{ArthurVassilvitskii:ICP:2006}:
After that many iterations, at least one cluster must have assumed a
third position. For our analysis, however, $2^k$ is too big, and we
bring it down to a constant.

\begin{lemma}
\label{lem:epochelength}
The length of any epoch is less than four.
\end{lemma}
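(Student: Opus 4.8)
The plan is to show that an epoch of length four would force some cluster center to occupy three distinct positions, contradicting the definition of an epoch. Recall that an epoch is a sequence of consecutive iterations in which each cluster $\cluster i$ takes at most two distinct values (as a set), equivalently each center $c_i$ takes at most two distinct positions. Suppose for contradiction that there is an epoch consisting of (at least) four iterations; label the configurations $C_0, C_1, C_2, C_3, C_4$ where $C_{t}$ is obtained from $C_{t-1}$ by one $k$-means step (reassign points to nearest center, then recompute centers of mass).

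The key structural observation I would exploit is the following determinism: the configuration after a step is completely determined by the Voronoi partition induced by the centers \emph{at the start} of that step (the new centers are just the centers of mass of the Voronoi cells). Hence if two configurations $C_s$ and $C_t$ induce the same Voronoi partition — equivalently assign the points to clusters in the same way — then $C_{s+1} = C_{t+1}$, and from then on the algorithm would cycle; but since the potential strictly decreases until termination, the only way to have $C_{s+1}=C_{t+1}$ within a run is the trivial one where the algorithm has already terminated. So within a genuine epoch (before termination), the partitions at the starts of the steps must be pairwise distinct. Now each center $c_i$ takes at most two positions during the epoch; so across the five configurations $C_0,\dots,C_4$, the vector of centers takes at most ... this is where I have to be careful: a priori the number of distinct center-\emph{vectors} could be as large as $2^k$. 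The real content is to argue it cannot exceed three given only four steps.

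The argument I would push is a ``first-time-change'' counting argument. Consider the sequence of partitions $P_0, P_1, P_2, P_3$ at the starts of the four steps; these are four \emph{distinct} partitions (by the paragraph above), hence the center-vectors $\bar c^{(0)}, \bar c^{(1)}, \bar c^{(2)}, \bar c^{(3)}$ at times $0,1,2,3$ (which determine $P_0,\dots,P_3$) are pairwise distinct, giving already four distinct center-vectors — but more is true: we also have $\bar c^{(4)}$, and I claim $\bar c^{(4)}$ differs from all of $\bar c^{(0)},\dots,\bar c^{(3)}$. Indeed $\bar c^{(4)}$ determines $P_4$; if $\bar c^{(4)} = \bar c^{(t)}$ for some $t \le 3$ then $P_4 = P_t$, so $C_{t+1} = C_5$, again forcing a cycle. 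Thus $C_0,\dots,C_4$ have five pairwise-distinct center-vectors. On the other hand, each of the $k$ coordinates $c_i$ of the center-vector takes at most two values over these five configurations; say $c_i \in \{u_i, v_i\}$ with $u_i = c_i^{(0)}$. Associate to each configuration $C_t$ the subset $S_t \subseteq [k]$ of indices $i$ for which $c_i^{(t)} = v_i \ne u_i$ (the ``flipped'' coordinates). The five sets $S_0 = \emptyset, S_1, S_2, S_3, S_4$ are pairwise distinct. I would now argue using monotonicity: once a coordinate $c_i$ has been set to $v_i$, it cannot return to $u_i$ and then go to $v_i$ again (that would be three values for some neighboring consideration) — more carefully, I want to show the sequence $S_0, S_1, \dots$ is monotone nondecreasing once we are inside the epoch, which is false in general, so instead the cleanest route is: there are only finitely many subsets, and combined with the strict potential decrease and the two-values-per-coordinate constraint, a run of length five is impossible. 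The honest version of this step is a short case analysis on how $S_t$ can evolve: since $|\{u_i,v_i\}| = 2$, any coordinate that changes between $C_{t}$ and $C_{t+1}$ toggles, and I would show that after the second toggle of any coordinate that coordinate would be forced to a third value by the center-of-mass update (because the partition that produced $u_i$ the second time would have to differ from the partition that produced it the first time, yet both produce the same center $c_i$ — pushing the contradiction down one level).

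**The main obstacle**, and where I expect the real work to be, is precisely converting the ``at most two positions per center'' hypothesis into ``at most three center-vectors in four steps'': the naive bound is exponential, and the improvement to a constant must come from the rigidity of the center-of-mass map together with strict monotonicity of the potential. I would structure it as: (1) distinctness of consecutive partitions within a non-terminating epoch (from strict potential decrease); (2) the center-of-mass / Voronoi determinism linking partitions to center-vectors bijectively along the run; (3) a toggle-counting argument on the flip-sets $S_t$ showing that a fifth distinct center-vector under the two-values-per-coordinate constraint forces some coordinate to a third value. If the toggle argument in (3) turns out to need the geometry (the specific positions of points, not just combinatorics), I would fall back on analyzing the two candidate sets $\cluster i', \cluster i''$ per cluster directly: in an epoch every cluster oscillates between $\cluster i'$ and $\cluster i''$, so the assignment of \emph{every} point is eventually periodic with period dividing $2$ — but a period-$2$ oscillation of the whole configuration means $C_t = C_{t+2}$, contradicting strict potential decrease unless the epoch has length at most three. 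That last sentence is essentially the whole proof, so I would lead with it and use the toggle bookkeeping only to justify why the oscillation must have period at most $2$.
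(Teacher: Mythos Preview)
Your proposal has a genuine gap at exactly the place you flag as ``the main obstacle.'' The purely combinatorial toggle argument in (3) cannot work: for $k\ge 3$ there is no contradiction in having five pairwise-distinct center-vectors each of whose coordinates ranges over a two-element set (e.g.\ $000,100,110,111,011$). So knowing that $\bar c^{(0)},\dots,\bar c^{(4)}$ are distinct and that each $c_i$ takes at most two values yields nothing by itself. Your fallback sentence is also unjustified: the epoch hypothesis says each cluster assumes at most two \emph{sets}, not that it \emph{alternates} between them. Cluster $\cluster 1$ could follow the pattern $\cluster 1',\cluster 1',\cluster 1'',\cluster 1''$ while $\cluster 2$ follows $\cluster 2',\cluster 2'',\cluster 2'',\cluster 2'$, so ``two values per coordinate'' does not force a global period dividing~$2$, and hence does not give $C_t=C_{t+2}$.

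What is missing is geometry. The paper does not argue at the level of center-vectors or flip-sets; it tracks a single moving data point $x$ and the sequence of clusters $\cluster{i_1},\cluster{i_2},\dots$ it visits. Each transition records a strict distance inequality of the form $\|x-c_{i_{j+1}}'\|<\|x-c_{i_{j-1}}'\|$ (comparing the \emph{non-$x$} centers of the clusters involved, which are the only candidates once each cluster has just two states). Chaining these inequalities rules out visiting four distinct clusters and rules out a cyclic pattern of period three; a separate small argument shows $x$ cannot ``pause'' in a cluster without forcing a third configuration somewhere. Hence every moving point has period at most two, and \emph{this} is what forces the global clustering to repeat within four steps. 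The distance-comparison idea is the missing ingredient in your sketch; the determinism and strict-decrease observations you make are correct but insufficient on their own.
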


\begin{proof}
Let $x$ be any data point that changes from one cluster to another during
an epoch, and let $i_1, i_2, \ldots, i_\ell$ be the indices of the
different clusters to which $x$ belongs in that order. (We have $i_j \neq
i_{j+1}$, but $x$ can change back to a cluster it has already visited.
So, e.g., $i_j = i_{j+2}$ is allowed.) For every $i_j$, we then have two
different sets $\clp j$ and $\clpp j$ with centers $\ccp j$ and $\ccpp j$
such that $x \in \clpp j \setminus \clp j$.
Since $x$ belongs always to at exactly one cluster, we have $\cl j = \clp j$ for
all except for one $j$ for which $\cl j = \clpp j$.
Now assume that $\ell \geq 4$. Then, when changing from $\cl 1$ to $\cl 2$, we have
$\|x-\cc 2'\| < \|x-\cc 4'\|$ since $x$ prefers $\cl 2$ over $\cl 4$ and, when changing
to $\cl 4$, we have $\|x-\cc 4'\| < \|x-\cc 2'\|$. This contradicts the assumption
that $\ell \geq 4$.

Now assume that $x$ does not change from $\cl j$ to
$\cl{j+1}$ for a couple of steps,
i.e., $x$ waits until it eventually changes clusters. Then the reason for eventually
changing to $\cl{j+1}$ can only be that either $\cl{j}$ has changed to some
$\clt j$, which makes $x$ prefer $\cl{j+1}$. But, since
$\clt j \neq \clpp j$ and $x \in \clt j$, we have a third cluster for $\cl j$.
Or $\cl{j+1}$ has changed to $\clt{j+1}$, and $x$ prefers $\clt{j+1}$.
But then $\clt{j+1} \neq \clp j$ and $x \notin \clt{j+1}$, and we have
a third cluster for $\cl{j+1}$.

We can conclude that $x$ visits at most three different clusters, and changes its
cluster in every iteration of the epoch. Furthermore, 
the order in which $x$ visits its clusters is periodic with
a period length of at most three. Finally, even a period length of three is impossible: Suppose $x$ visits
$\cl 1$, $\cl 2$, and $\cl 3$. Then, to go from $\cl j$ to $\cl{j+1}$ (arithmetic
is modulo $3$), we have $\|x - \cc{j+1}'\| < \|x -
\cc{j-1}'\|$. Since this holds for $j = 1,2,3$, we have a contradiction.

This holds for every data point. Thus, after at most four iterations
either $k$-means terminates, which is fine, or some cluster assumes a third configuration,
which ends the epoch, or some clustering repeats, which is impossible.
\end{proof}

Similar to Arthur and Vassilvitskii~\cite{ArthurVassilvitskii:ICP:2006},
we define a \emph{key-value} to be an expression of the form
$K = \frac st \cdot \mass(S)$, where $s, t \in \NN$, $s \leq n^2$, $t < n$,
and $S \subseteq \points$ is a set of at most $4 d \sqrt k$ points.
(Arthur and Vassilvitskii allow up to $4dk$ points.)
For two key-values $K_1, K_2$, we write $K_1 \equiv K_2$ if and only if
they have identical coefficients for every data point.

We say that $\points$ is \emph{$\delta$-sparse} if, for every key-values
$K_1, K_2, K_3, K_4$ with $\|K_1+K_2-K_3-K_4\| \leq \delta$, we have
$K_1 + K_2 \equiv K_3 + K_4$.

\begin{lemma}
\label{lem:sparseprob}
The probability that the point set $\points$ is not $\delta$-sparse is
at most
\[
  n^{16d\sqrt k+12} \cdot \left(\frac{n^4 \delta}{\sigma}\right)^d.
\]
\end{lemma}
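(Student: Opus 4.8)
The plan is to bound the probability that $\points$ fails to be $\delta$-sparse by a union bound over all ``shapes'' of the defining inequality, exploiting that whenever $K_1+K_2\not\equiv K_3+K_4$, the vector $K_1+K_2-K_3-K_4$ is a non-degenerate affine combination of the data points, hence a Gaussian random vector with a standard deviation that is not too small, so the probability it lands in a ball of radius $\delta$ is small. First I would enumerate the combinatorial data: each key-value $K_\nu=\frac{s_\nu}{t_\nu}\mass(S_\nu)$ is specified by the set $S_\nu\subseteq\points$ of at most $4d\sqrt k$ points and by the integers $s_\nu\le n^2$, $t_\nu<n$. The number of choices for the four sets is at most $\bigl(n^{4d\sqrt k}\bigr)^4=n^{16d\sqrt k}$, and the number of choices for the eight integers $s_1,t_1,\dots,s_4,t_4$ is at most $(n^2)^4\cdot n^4=n^{12}$. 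This accounts for the prefactor $n^{16d\sqrt k+12}$.

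Next, fix one such choice of shape for which $K_1+K_2\not\equiv K_3+K_4$, i.e.\ there is at least one data point $x_p$ whose total (rational) coefficient $\alpha_p$ in $K_1+K_2-K_3-K_4$ is non-zero. Each coefficient is a sum of terms of the form $\pm\frac{s_\nu}{t_\nu|S_\nu|}$; clearing denominators, $\alpha_p$ is a non-zero rational with denominator dividing $\prod_\nu t_\nu|S_\nu|\le \bigl(n\cdot 4d\sqrt k\bigr)^4\le\poly(n)$ (using $d,k\le n$), so $|\alpha_p|\ge 1/\poly(n)$; a more careful bookkeeping gives $|\alpha_p|$ at least of order $n^{-4}$ up to constants, which is where the $n^4$ in the bound comes from. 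Condition on all data points other than $x_p$. Then $K_1+K_2-K_3-K_4 = \alpha_p x_p + (\text{fixed vector})$ is, as a function of the Gaussian $x_p$, a $d$-dimensional Gaussian with standard deviation $|\alpha_p|\sigma\ge\sigma/\poly(n)$ in each coordinate. The probability that such a vector has norm at most $\delta$ is bounded by the probability that each of its $d$ coordinates lies in an interval of length $2\delta$, hence by $\bigl(\tfrac{2\delta}{|\alpha_p|\sigma\sqrt{2\pi}}\bigr)^d\le\bigl(\tfrac{n^4\delta}{\sigma}\bigr)^d$ after absorbing constants into the polynomial factor. This holds for every fixing of the other points, so it holds unconditionally.

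Finally I would combine the two steps by a union bound: the probability that there exist key-values $K_1,\dots,K_4$ with $\|K_1+K_2-K_3-K_4\|\le\delta$ but $K_1+K_2\not\equiv K_3+K_4$ is at most (number of shapes) $\times$ (probability per shape) $\le n^{16d\sqrt k+12}\cdot\bigl(\tfrac{n^4\delta}{\sigma}\bigr)^d$, as claimed. The main obstacle is the second step: one must argue cleanly that the net coefficient $\alpha_p$ of the witnessing point is bounded away from $0$ by an explicit $1/\poly(n)$ factor that fits inside the stated $n^4$, and that one may single out a suitable coordinate and condition on everything else so that genuine one-dimensional Gaussian anti-concentration applies. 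Handling the rational arithmetic of the $\frac{s_\nu}{t_\nu}$ coefficients carefully — in particular checking that $K_1+K_2\not\equiv K_3+K_4$ really does force some individual coefficient to be non-zero and not merely the sum to be non-zero as a vector — is the delicate point; this is exactly the role of the $\equiv$ relation being defined coefficient-wise, so that the obstruction is a genuinely algebraic (shape-dependent but point-independent) condition on which the union bound can be taken.
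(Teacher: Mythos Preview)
Your proposal is correct and takes the same approach as the paper: enumerate the at most $n^{16d\sqrt k+12}$ quadruples of key-values, then for each fixed quadruple with $K_1+K_2\not\equiv K_3+K_4$ apply Gaussian anti-concentration by singling out a data point whose rational coefficient is non-zero and bounded away from zero. The paper's own proof is in fact terser than yours---it carries out the identical counting and then defers the entire anti-concentration step to Proposition~5.3 of Arthur and Vassilvitskii~\cite{ArthurVassilvitskii:ICP:2006}, whose content is precisely the argument you sketch (and whose details supply the exact $n^4$ factor you flag as the delicate point).
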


\begin{proof}
Let us first bound the number of possible key-values:
There are at most $n^3$ possibilities for choosing $s$ and $t$
and $n^{4 d \sqrt k}$ possibilities for choosing the set $S$.
Thus, there are at most $n^{16d\sqrt k + 12}$ possibilities
for choosing four key-values $K_1, \ldots, K_4$. We fix
$K_1, \ldots, K_4$ arbitrarily. The rest follows from
a union bound and the proof of Proposition~5.3 of
Arthur and Vassilvitskii~\cite{ArthurVassilvitskii:ICP:2006}.
\end{proof}

After four iterations, one cluster has assumed a third center or $k$-means
terminates. This yields the following lemma (see also Arthur
and Vassilvitskii~\cite[Corollary~5.2]{ArthurVassilvitskii:ICP:2006}).

\begin{lemma}
\label{lem:deltasparsedrop}
Assume that $\points$ is $\delta$-sparse. Then, in every sequence of
four consecutive iterations that do not lead to termination and such
that in every of these iterations
\begin{itemize}
\item at most $\sqrt k$ clusters are active and
\item each cluster gains or loses at most $2d\sqrt k$ points,
\end{itemize}
the potential decreases by at least
$\frac{\delta^2}{4n^4}$.
\end{lemma}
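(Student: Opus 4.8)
The plan is to argue that within four consecutive iterations satisfying the stated hypotheses, either $k$-means terminates (excluded) or some cluster $\cluster i$ must assume a \emph{third} configuration, and that the transition to this third configuration forces a potential drop of at least $\delta^2/(4n^4)$ via the movement of its center. The first part is exactly Lemma~\ref{lem:epochelength}: an epoch — a maximal run of iterations in which no cluster center takes more than two positions — has length less than four. Hence among four consecutive non-terminating iterations the epoch must end, i.e. some cluster center moves to a third position. So I would begin by fixing such a cluster $\cluster i$ and letting $c_i^{(1)}, c_i^{(2)}, c_i^{(3)}$ be its three distinct center positions in the order they occur.

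Next I would express each of these centers as a key-value. A cluster $\cluster i$ with $|\cluster i|$ points has center of mass $\frac{1}{|\cluster i|}\sum_{x\in\cluster i}x$; but I cannot directly write this as a key-value since $\cluster i$ may be large. The trick — following Arthur and Vassilvitskii — is to track only the \emph{change}. Between consecutive configurations in the epoch, at most $\sqrt k$ clusters are active and each active cluster gains or loses at most $2d\sqrt k$ points, so the symmetric difference of $\cluster i$ across any two of its (at most two) configurations has size at most $2d\sqrt k \le 4d\sqrt k$, and lies in a set $S$ of at most $4d\sqrt k$ points. Writing $n_1, n_2$ for the two cluster sizes (so $n_1, n_2 < n$) and $\mathrm{mass}(\cdot)$ for the sum over a point set, the difference $c_i^{(r)} - c_i^{(r')}$ can be written, after clearing denominators to a common denominator $t < n^2$... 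I would instead use the cleaner bookkeeping: $c_i^{(r)}$ differs from $c_i^{(r')}$ by a quantity of the form $\frac{s_1}{t_1}\mathrm{mass}(S_1) - \frac{s_2}{t_2}\mathrm{mass}(S_2)$ with the $S_j$ of size $\le 4d\sqrt k$ and $s_j \le n^2$, $t_j < n$; grouping appropriately, the statement ``$c_i^{(1)} = c_i^{(2)}$'' up to error $\le\delta$ becomes ``$\|K_1 + K_2 - K_3 - K_4\| \le \delta$'' for four key-values.

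Now $\delta$-sparsity kicks in: if $\|K_1+K_2-K_3-K_4\|\le\delta$ then $K_1+K_2\equiv K_3+K_4$, i.e. the two centers are in fact \emph{identical}. But $c_i^{(1)}, c_i^{(2)}, c_i^{(3)}$ are pairwise distinct by assumption, so the distance between any two consecutive ones is $>\delta$. Actually I would be careful here: distinctness of the \emph{sets} $\cluster i$ was established in Lemma~\ref{lem:epochelength}, and distinct sets give distinct centers of mass unless a sparsity-type coincidence occurs — which sparsity rules out. So the center moves by more than $\delta$ at some step within the four iterations. Finally, by Lemma~\ref{lemma:MovementImprovement} (``each time a center moves by $\xi$, the potential drops by at least $\xi^2$''), this single step already drops the potential by more than $\delta^2$; the weaker claimed bound $\delta^2/(4n^4)$ then follows with room to spare (the extra slack absorbs the denominator adjustments in turning the center-equality into a key-value equality).

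\textbf{Main obstacle.} The delicate step is the bookkeeping that turns the two candidate positions of a center $\cluster i$ into key-values with the right parameters ($s\le n^2$, $t<n$, $|S|\le 4d\sqrt k$), and in particular verifying that ``the two centers differ by $\le\delta$'' really does translate into ``$\|K_1+K_2-K_3-K_4\|\le\delta$'' for \emph{four} key-values — matching exactly the definition of $\delta$-sparse — rather than for two, or for key-values with slightly larger parameters. This is where the hypothesis ``each cluster gains or loses at most $2d\sqrt k$ points'' is used (to bound $|S|$) together with ``at most $\sqrt k$ clusters active'' (which is what made the $4d\sqrt k$ bound in the key-value definition, rather than $4dk$, affordable). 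Everything else is an application of Lemmas~\ref{lem:epochelength} and~\ref{lemma:MovementImprovement} plus the definition of sparsity.
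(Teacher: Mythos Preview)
Your plan is correct and matches the paper's own treatment: the paper does not give a self-contained proof of this lemma but simply observes that Lemma~\ref{lem:epochelength} forces a third center within four steps and then cites Arthur and Vassilvitskii's Corollary~5.2 for the key-value/sparsity bookkeeping, exactly as you outline. Your identification of the main obstacle --- verifying that the comparison of two center positions reduces to a four-key-value inequality with the parameters $s\le n^2$, $t<n$, $|S|\le 4d\sqrt{k}$, and that the $1/(4n^4)$ slack absorbs the scaling --- is precisely the content deferred to~\cite{ArthurVassilvitskii:ICP:2006}.
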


We say that $\points$ is $\eps$-separated if, for every hyperplane $H$,
there are at most $2d$ points in $\points$ that are within
distance $\eps$ of $H$. The following lemma, due to Arthur and
Vassilvitskii~\cite[Proposition 5.6]{ArthurVassilvitskii:ICP:2006}, shows that
$\points$ is likely to be $\eps$-separated.

\begin{lemma}[Arthur and Vassilvitskii~\cite{ArthurVassilvitskii:ICP:2006}]
\label{lem:sepaprob}
The point set $\points$ is not $\eps$-separated with a
probability of at most
\[
  n^{2d}\cdot \left(\frac{4d\eps}{\sigma}\right)^d.
\]
\end{lemma}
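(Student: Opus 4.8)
The plan is to prove the contrapositive directly by a union bound over small subsets of $\points$, reducing the existential statement ``some hyperplane has too many nearby points'' to ``a fixed $2d$-tuple of points lies close to a common hyperplane'', and then bounding the latter probability. This is precisely Proposition~5.6 of Arthur and Vassilvitskii~\cite{ArthurVassilvitskii:ICP:2006}, so what follows is a sketch of their argument.

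First I would observe that if $\points$ is not $\eps$-separated, then there is a hyperplane $H$ with at least $2d+1$ points of $\points$ within distance $\eps$ of $H$; picking any $2d$ of them shows that \emph{some} $2d$-element subset of $\points$ lies in the slab of width $2\eps$ around a common hyperplane. There are at most $\binom{n}{2d}\le n^{2d}$ such subsets, so by a union bound it suffices to show that for an arbitrary but fixed $2d$-tuple $X_1,\dots,X_{2d}$ of points of $\points$ we have $\Pr{\exists\,H\colon \dist(X_i,H)\le\eps\ \text{for all}\ i\in[2d]}\le\bigl(4d\eps/\sigma\bigr)^{d}$, up to constant and low-order polynomial factors that are harmless for the final statement.

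To bound this probability for a fixed tuple, I would split it into a set $P=\{X_1,\dots,X_d\}$ of ``pinning'' points and a set $Q=\{X_{d+1},\dots,X_{2d}\}$ of ``free'' points. Writing $H=\{x:\langle u,x\rangle=b\}$ with a unit normal $u$, there is a coordinate $\ell\in[d]$ with $|u_\ell|\ge 1/\sqrt d$, and I would union-bound additionally over $\ell$. Fixing $\ell$, reveal all of $X_1,\dots,X_d$ together with all coordinates of $X_{d+1},\dots,X_{2d}$ except the $\ell$-th one; the $\ell$-th coordinates $(X_{d+1})_\ell,\dots,(X_{2d})_\ell$ remain independent one-dimensional Gaussians of standard deviation $\sigma$. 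Given everything revealed, any admissible pair $(u,b)$ is constrained by $\dist(X_i,H)\le\eps$ for $X_i\in P$, and for each such $(u,b)$ the condition $\dist(X_{d+j},H)\le\eps$ forces $(X_{d+j})_\ell$ into an interval of length at most $4\eps/|u_\ell|\le 4\sqrt d\,\eps$. If $(u,b)$ were uniquely determined, anti-concentration of a one-dimensional Gaussian would immediately give the bound $\bigl(4\sqrt d\,\eps/(\sigma\sqrt{2\pi})\bigr)^{d}$, and the union over the $d$ choices of $\ell$ would finish the estimate.

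The main obstacle — and the only genuinely delicate point — is that $(u,b)$ is \emph{not} uniquely determined: the admissible hyperplanes form a continuum, the forced intervals for $(X_{d+j})_\ell$ move as $(u,b)$ ranges over it, and a priori the bad region could be much larger than a single box. The size of this continuum is controlled by how far the pinning points are from being affinely degenerate: if $X_1,\dots,X_d$ nearly lie in a $(d-2)$-flat, then $\dist(X_i,H)\le\eps$ barely restricts the normal $u$. I would resolve this as in~\cite{ArthurVassilvitskii:ICP:2006}: choose the $d$ pinning points (out of the $2d$) to form a well-conditioned affine basis, quantify how the admissible $(u,b)$ depends on that condition number, and deal with the low-probability event that no well-conditioned $d$-subset among the $2d$ points exists as a separate case — its probability is itself of the form $n^{O(d)}(\eps/\sigma)^{\Omega(d)}$ and gets absorbed. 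Everything else is a routine Gaussian tail computation, and after collecting the union-bound factors the stated bound $n^{2d}(4d\eps/\sigma)^{d}$ follows.
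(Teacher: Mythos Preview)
The paper does not give its own proof of this lemma: it is stated with attribution to Arthur and Vassilvitskii~\cite{ArthurVassilvitskii:ICP:2006} and used as a black box. Your proposal is explicitly a sketch of their Proposition~5.6, so there is no divergence to discuss --- the union bound over $2d$-subsets, the split into $d$ pinning and $d$ free points, and the handling of the continuum of admissible hyperplanes via the conditioning of the pinning simplex is exactly the argument the paper is deferring to.
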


Given that $\points$ is $\eps$-separated, every iteration with at most $\sqrt k$
active clusters in which one cluster gains or loses at least $2d \sqrt
k$ points yields a significant decrease of the potential.

\begin{lemma}
\label{lem:epssepdrop}
Assume that $\points$ is $\eps$-separated. For every iteration with at most
$\sqrt k$ active clusters, the following holds: If a cluster gains or loses more
than $2d \sqrt k$ points, then the potential drops by at least $2\eps^2/n$.
\end{lemma}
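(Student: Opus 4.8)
The plan is to localize the potential decrease to a single pair of clusters and a single well-separated point. Consider an iteration with at most $\sqrt k$ active clusters in which some cluster, say $\cluster i$, gains or loses more than $2d\sqrt k$ points; by symmetry assume it gains them, the other case being analogous with the roles of ``from'' and ``to'' interchanged. Every point entering $\cluster i$ leaves some other cluster, which is thus active as well, so these more than $2d\sqrt k$ points originate from at most $\sqrt k - 1$ other clusters. By the pigeonhole principle there is a cluster $\cluster j$ from which more than $2d$ of them come; let $B$ denote this set and let $c_i,c_j$ be the centers of mass of $\cluster i$ and $\cluster j$ at the start of the iteration. Every $b\in B$ is reassigned from $\cluster j$ to $\cluster i$, hence $\norm{b-c_i}\le\norm{b-c_j}$, so all of $B$ lies on the $c_i$-side of the bisecting hyperplane $H_{i,j}$ of $c_i$ and $c_j$. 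Since $\points$ is $\eps$-separated, at most $2d$ points of $\points$ are within distance $\eps$ of $H_{i,j}$; because $|B|>2d$, some $x\in B$ satisfies $\dist(x,H_{i,j})>\eps$.

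The second step turns this separation into a potential drop. In the reassignment step, moving $x$ from $\cluster j$ to $\cluster i$ changes its contribution to the potential from $\norm{x-c_j}^2$ to $\norm{x-c_i}^2$; expanding the squares and using that $H_{i,j}$ is the perpendicular bisector of $c_i$ and $c_j$ shows that this is a decrease of $\norm{x-c_j}^2-\norm{x-c_i}^2 = 2\,\norm{c_i-c_j}\cdot\dist(x,H_{i,j})$, which exceeds $2\eps\norm{c_i-c_j}$ since $x$ lies on the $c_i$-side. Every other reassignment in the step is also non-increasing for the potential, as is the subsequent recomputation of the centers, so the whole iteration reduces the potential by more than $2\eps\norm{c_i-c_j}$. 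To reach the claimed bound $2\eps^2/n$ it then remains to bound $\norm{c_i-c_j}$ from below by $\eps/n$.

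This last point is where I expect the real work. A natural approach is a case distinction on $\norm{c_i-c_j}$: if $\norm{c_i-c_j}\ge\eps/n$ we are done immediately. Otherwise $c_i$ and $c_j$ are almost coincident, and the presence of $x\in\cluster j$ at distance more than $\eps$ on the \emph{wrong} (the $c_i$-) side of $H_{i,j}$ must be paid for; since $x$ and $c_j$ lie on opposite sides of $H_{i,j}$ we have $\norm{x-c_j}>\eps$, so when $x$ (together with the rest of $B$) leaves $\cluster j$ the recomputed center of $\cluster j$ is displaced by $\Omega(\eps/n)$, and by Lemma~\ref{lemma:MovementImprovement} (a center moving by $\xi$ reduces the potential by at least $\xi^2$) this already gives a drop of order $\eps^2/n^2$. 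Sharpening this accounting---or, equivalently, extracting a direct lower bound $\norm{c_i-c_j}\ge\eps/n$, perhaps by also tracing $x$ back to the previous iteration, where it was still Voronoi-assigned to $\cluster j$ so that the bisector must have swept past it---to reach the stated $2\eps^2/n$ is the delicate part; the rest is elementary geometry.
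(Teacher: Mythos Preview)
Your first two paragraphs are correct and coincide with the paper's argument: pigeonhole gives a pair $\cluster i,\cluster j$ exchanging more than $2d$ points, $\eps$-separatedness applied to the current bisector $H_{i,j}$ yields a switching point $x$ with $\dist(x,H_{i,j})>\eps$, and the reassignment of $x$ alone decreases the potential by at least $2\eps\norm{c_i-c_j}$. The remaining task, which you correctly isolate, is the lower bound $\norm{c_i-c_j}\ge\eps/n$.

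Here your proposal has a genuine gap. Your fallback (use the movement of the center of $\cluster j$) does not work as stated: besides giving only $\Omega(\eps^2/n^2)$, the claim that ``when $x$ (together with the rest of $B$) leaves $\cluster j$ the recomputed center of $\cluster j$ is displaced by $\Omega(\eps/n)$'' ignores that $\cluster j$ may simultaneously gain points from, and lose points to, other clusters, which can cancel the contribution of $B$. Your parenthetical hint about the previous iteration is the right direction, but the execution is different from what you sketch: one does not trace the same point $x$, one applies $\eps$-separatedness a second time, to a \emph{different} hyperplane. Concretely, let $H$ be the bisector of $\cluster i$ and $\cluster j$ from the \emph{previous} iteration. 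By construction $H$ separates the current $\cluster i$ from the current $\cluster j$, so $c_i,c_j$ lie on opposite sides of $H$ and $\norm{c_i-c_j}\ge\dist(c_i,H)+\dist(c_j,H)$. Now apply $\eps$-separatedness to $H$ (not to $H_{i,j}$): among the $>2d$ switching points, some $y$ has $\dist(y,H)\ge\eps$. Since $y$ lies in $\cluster i$ or $\cluster j$ at the start of the iteration and all points of that cluster are on the same side of $H$, the corresponding center of mass is at distance at least $\eps/n$ from $H$, whence $\norm{c_i-c_j}\ge\eps/n$. This is the step your outline is missing; once you have it, your first two paragraphs finish the proof exactly as the paper does.
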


This lemma is similar to Proposition 5.4 of Arthur and
Vassilvitskii~\cite{ArthurVassilvitskii:ICP:2006}. We present here a corrected
proof based on private communication with Vassilvitskii.

\begin{proof}
If a cluster $\cluster i$ gains or loses more than $2d \sqrt k$ points in a
single iteration with at most $\sqrt k$ active clusters, then there exists
another cluster $\cluster j$ with which $C_i$ exchanges at least $2d+1$ points.
Since $\points$ is $\eps$-separated, one of these points, say, $x$, must be at a
distance of at least $\eps$ from the hyperplane bisecting the cluster centers
$c_i$ and $c_j$. Assume that
$x$ switches from $\cluster i$ to $\cluster j$.

Then the potential decreases by at least $\|c_i-x\|^2 - \|c_j-x\|^2
= (2x - c_i - c_j) \cdot (c_j - c_i)$. Let $v$ be the unit vector
in $c_j - c_i$ direction. Then $(2x - c_i - c_j) \cdot v \geq 2\eps$.
We have $c_j - c_i = \alpha v$ for $\alpha = \|c_j - c_i\|$, and hence,
it remains to bound $\|c_j - c_i\|$ from below.
If we can prove $\alpha \geq \eps/n$,
then we have a potential drop of at least
$(2x-c_i-c_j) \cdot \alpha v \ge \alpha 2 \eps \geq 2\eps^2/n$ as claimed.

Let $H$ be the hyperplane bisecting the centers of
$\cluster i$ and $\cluster j$ in the previous iteration. While $H$
does not necessarily bisect $c_i$ and $c_j$, it divides the data
points belonging to $\cluster i$ and $\cluster j$ correctly.
In particular, this implies that $\|c_i - c_j\| \geq \dist(c_i,
H)+\dist(c_j, H)$.

Consider the at least $2d+1$ data points switching between $\cluster i$
and $\cluster j$. One of them must be at a distance of at least $\eps$
of $H$ since $\points$ is $\eps$-separated. Let us assume w.l.o.g.\
that this point switches to $\cluster i$. This yields $\dist(c_i, H)
\geq \eps/n$ since $\cluster i$ contains at most $n$ points. Thus, $\|c_i - c_j\| \geq \eps/n$,
which yields $\alpha \geq \eps/n$ as desired.
\end{proof}

Now set $\delta_i = n^{-16 - (16+i) \cdot \sqrt k } \cdot \sigma$
and $\eps_i = \sigma \cdot n^{-4-i\sqrt k}$.
Then the probability that the instance is not $\delta_i$-sparse
is bounded from above by
\[
  n^{16d\sqrt k+12 + 4d - 16d - (16+i) d\cdot \sqrt k} \leq
  n^{- i d \sqrt k}\:.
\]
The probability that the instance is not $\eps_i$-separated is
bounded from above by (we use $d \leq n$ and $4 \leq n$)
\[
  n^{4d-4d-i d\sqrt k} = n^{-id\sqrt k}\:.
\]
We abbreviate the fact that an instance is $\delta_i$-sparse and
$\eps_i$-separated by \emph{$i$-nice}. Now Lemmas~\ref{lem:deltasparsedrop}
and~\ref{lem:epssepdrop} immediately yield the following lemma.

\begin{lemma}
Assume that $\points$ is $i$-nice. Then the number of sequences of at most four
consecutive iterations, each of which with at most $\sqrt k$ active clusters,
until the potential has dropped by at least $1$ is bounded from above by
\[
  \left(\min\left\{
   \frac14\cdot n^{-36 - (32+2i) \sqrt k}\cdot
   \sigma^2,
   2\sigma^2 \cdot n^{-9-i2\sqrt k}
   \right\}\right)^{-1}
   \leq \frac{n^{(c+2i) \cdot \sqrt k}}{\sigma ^2} =: S_i
\]
for a suitable constant $c$.
\end{lemma}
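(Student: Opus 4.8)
The plan is to read off this lemma directly from Lemmas~\ref{lem:deltasparsedrop} and~\ref{lem:epssepdrop} together with the definitions of $\delta_i$, $\eps_i$, and $i$-niceness. Since $\points$ is $i$-nice it is simultaneously $\delta_i$-sparse and $\eps_i$-separated, so both lemmas are applicable with $\delta=\delta_i$ and $\eps=\eps_i$. First I would fix an arbitrary block of four consecutive iterations, each with at most $\sqrt k$ active clusters, that does not lead to termination, and split into two cases. In the first case every cluster gains or loses at most $2d\sqrt k$ points in every iteration of the block; then Lemma~\ref{lem:deltasparsedrop} applies verbatim and the potential drops by at least $\delta_i^2/(4n^4)$ over the block. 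In the second case some iteration of the block has a cluster that gains or loses more than $2d\sqrt k$ points; then Lemma~\ref{lem:epssepdrop}, applied to that single iteration, already yields a drop of at least $2\eps_i^2/n$, and since the potential is nonincreasing this is a lower bound for the drop over the whole block. Hence every such block of four decreases the potential by at least $\min\{\delta_i^2/(4n^4),\,2\eps_i^2/n\}$.

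Consequently, the number of complete blocks of four consecutive iterations of the described type that can occur before the potential has dropped by at least $1$ (or $k$-means terminates) is at most the reciprocal of that minimum; blocks shorter than four only arise at the ends of maximal runs of iterations with at most $\sqrt k$ active clusters and contribute nothing to this count. Substituting $\delta_i=n^{-16-(16+i)\sqrt k}\sigma$ gives $\delta_i^2/(4n^4)=\frac14 n^{-36-(32+2i)\sqrt k}\sigma^2$, and substituting $\eps_i=\sigma n^{-4-i\sqrt k}$ gives $2\eps_i^2/n=2\sigma^2 n^{-9-2i\sqrt k}$; this is exactly the minimum appearing in the statement, which proves the first inequality.

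For the final inequality I would note that the reciprocal of this minimum equals $\max\{4\sigma^{-2}n^{36+(32+2i)\sqrt k},\;\frac12\sigma^{-2}n^{9+2i\sqrt k}\}$, and since $\sqrt k\ge1$ the first term is the larger one. Using $\log_n 4\le 2$ for $n\ge 2$, one checks that $4n^{36+32\sqrt k}\le n^{c\sqrt k}$ already for, say, $c=70$, so the whole quantity is at most $n^{(c+2i)\sqrt k}/\sigma^2=S_i$. There is no genuine difficulty here beyond the routine arithmetic and the bookkeeping in the case distinction; the only point that needs a word of care is that a potential drop witnessed in a single iteration still counts toward the block's total drop, so that the second case is indeed covered by Lemma~\ref{lem:epssepdrop} even though that lemma speaks about one iteration rather than four.
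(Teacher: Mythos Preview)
Your proposal is correct and follows essentially the same approach as the paper: the paper simply remarks that the lemma follows immediately from Lemmas~\ref{lem:deltasparsedrop} and~\ref{lem:epssepdrop}, with the first term in the minimum coming from the $\delta_i$-sparseness drop $\delta_i^2/(4n^4)$ and the second from the $\eps_i$-separatedness drop $2\eps_i^2/n$. You have just spelled out the case distinction and the arithmetic in more detail than the paper does.
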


The first term comes from $\delta_i$,
which yields a potential drop of at least $\delta_i^2/(4 n^4)$.
The second term comes from $\eps_i$, which yields a drop of at least
$2\eps_i^2/n$. 

Putting the pieces together yields the main lemma of this section.

\begin{lemma}
\label{lem:smallsqrtk}
The expected number of sequences of at most four consecutive iterations, each of
which with at most $\sqrt k$ active clusters, until the potential has dropped by
at least $1$ is bounded from above by
\[
  \poly\left(n^{\sqrt k}, \frac 1{\sigma}\right).
\]
\end{lemma}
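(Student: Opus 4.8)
The plan is to combine the deterministic counting result of the previous (unnamed) lemma, which states that on an $i$-nice instance the number of length-$\le 4$ sequences with at most $\sqrt k$ active clusters needed to drop the potential by $1$ is at most $S_i = n^{(c+2i)\sqrt k}/\sigma^2$, with the probability bounds for an instance failing to be $i$-nice. Since the event ``not $i$-nice'' has probability at most $n^{-id\sqrt k} + n^{-id\sqrt k} = 2n^{-id\sqrt k} \le 2n^{-i\sqrt k}$ (using $d\ge 1$), the idea is to sum over $i$: the instance is $i$-nice for some smallest value of $i$ with only polynomially small failure probability, and on that event the number of sequences is bounded by $S_i$, which is polynomial in $n^{\sqrt k}$ and $1/\sigma$ for each fixed $i$.

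Concretely, I would write the expectation of the number $N$ of such sequences (until the potential drops by $1$) as a telescoping sum over the ``niceness level.'' Let $\calN_i$ denote the event that $\points$ is $i$-nice. Note that $i$-niceness is monotone: $\delta_{i+1} \le \delta_i$ and $\eps_{i+1}\le\eps_i$, so $\calN_i \subseteq \calN_{i+1}$; hence the events $\calN_i \setminus \calN_{i-1}$ partition the sample space together with $\bigcap_i \neg\calN_i$, and $\Pr{\neg\calN_i} \le 2n^{-i\sqrt k}$. On the event $\calN_i$ we have $N \le S_i = n^{(c+2i)\sqrt k}/\sigma^2$, and we always have the trivial worst-case bound $N \le W \le n^{\kappa k d}$. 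Therefore
\begin{align*}
 \Ex{N} &\le \sum_{i\ge 1} \Pr{\neg\calN_{i-1}\wedge\calN_i}\cdot S_i
        \;+\; \Pr{\textstyle\bigcap_{i} \neg\calN_i}\cdot W \\
        &\le S_1 + \sum_{i\ge 2} \Pr{\neg\calN_{i-1}}\cdot S_i
        \;+\; \Pr{\neg\calN_{\kappa kd}}\cdot W \\
        &\le S_1 + \sum_{i\ge 2} 2n^{-(i-1)\sqrt k}\cdot\frac{n^{(c+2i)\sqrt k}}{\sigma^2}
        \;+\; 2n^{-\kappa k d\sqrt k}\cdot n^{\kappa k d}\:.
\end{align*}
The key point is that in the $i$-th summand the exponent of $n$ is $(c+2i)\sqrt k - (i-1)\sqrt k = (c+1+i)\sqrt k$, so the summand grows with $i$ and the series does not converge as written — which is why one must cut it off at $i = \kappa k d$ and use the trivial bound $W$ for the tail. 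For $1 \le i \le \kappa kd$ the exponent $(c+1+i)\sqrt k \le (c+1+\kappa kd)\sqrt k$, and there are at most $\kappa k d \le n^2$ terms, so the whole sum is bounded by $n^{(c+1+\kappa kd)\sqrt k + 2}/\sigma^2$; the last term is at most $2n^{\kappa kd - \kappa kd\sqrt k} \le 2$. Both $S_1$ and these quantities are polynomial in $n^{\sqrt k}$ and $1/\sigma$ (treating $k,d\le n$, so $\kappa kd\le\kappa n^2$ is absorbed into ``$\poly$'').

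The main obstacle — and the reason a naive union bound fails — is precisely the tension between $S_i$ growing in $i$ and the failure probability only shrinking like $n^{-i\sqrt k}$: if $S_i$ grew faster than $n^{i\sqrt k}$ the sum would diverge even after truncation, but since $S_i$ grows only like $n^{2i\sqrt k}$ we need the truncation at $i \approx kd$ together with the worst-case bound $W$ to control the tail, and we need $W^{-1}$-type slack in the failure probabilities (which $\Pr{\neg\calN_i}\le 2n^{-i\sqrt k}$ provides once $i$ is comparable to $\kappa kd$, since then $n^{-i\sqrt k}\cdot W \le n^{-\kappa kd\sqrt k}\cdot n^{\kappa kd}\le 1$). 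Once this bookkeeping is set up, everything else is routine: collecting exponents and invoking $k,d\le n$ to write the final bound as $\poly(n^{\sqrt k}, 1/\sigma)$.
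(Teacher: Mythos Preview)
Your overall structure matches the paper's, but the final step contains a genuine error that breaks the bound. You claim that $n^{(c+1+\kappa kd)\sqrt k}/\sigma^2$ is $\poly(n^{\sqrt k}, 1/\sigma)$ because ``$\kappa kd \le \kappa n^2$ is absorbed into $\poly$''. It is not: $\poly(n^{\sqrt k}, 1/\sigma)$ means $(n^{\sqrt k})^{O(1)}\cdot\sigma^{-O(1)}$ with \emph{absolute} constants in the exponents, whereas your exponent of $n^{\sqrt k}$ is $c+1+\kappa kd$, which grows with $k$ and $d$. What you have actually shown is a bound of order $n^{O(kd\sqrt k)}/\sigma^2$, which is even weaker than the known $\poly(n^k,1/\sigma)$ bound.

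The loss originates earlier, when you weaken $\Pr{\text{not $i$-nice}}\le 2n^{-id\sqrt k}$ to $2n^{-i\sqrt k}$. Do not discard the factor $d$: the paper works under the assumption $d\ge 2$ here (the case $d=1$ is handled separately), and with the full bound the $i$-th summand has $n$-exponent at most $(c+2+2i)\sqrt k - id\sqrt k = \bigl(c+2-i(d-2)\bigr)\sqrt k \le (c+2)\sqrt k$, uniformly in $i$. Every term is then at most $2n^{(c+2)\sqrt k}/\sigma^2$, so the sum no longer grows with $i$. Moreover you can truncate much earlier, at $i=\kappa\sqrt k$ rather than $i=\kappa kd$, since $\Pr{\text{not $\kappa\sqrt k$-nice}}\cdot W \le 2n^{-\kappa\sqrt k\cdot d\sqrt k}\cdot n^{\kappa kd}=2$. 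The sum then has only $O(\sqrt k)$ terms, each $\poly(n^{\sqrt k},1/\sigma)$, and the lemma follows.
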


\begin{proof}
The probability that it takes more than $S_i$ such sequences is bounded from
above by the probability that the instance is not $i$-nice, which is bounded
from above by $2  n^{- i d\cdot \sqrt k}$. Let $T$ be the random variable of the
number of sequences of at most four consecutive iterations, each with at most
$\sqrt k$ active clusters, that it takes until we have a potential drop of at
least $1$.

We observe that $k$-means runs always at most $W \leq n^{\kappa kd}$
iterations. This yields that we have to consider $i$ only up to $\kappa\sqrt k$. We assume
further that $d \geq 2$. Putting all observations together yields the lemma:
\begin{align*}
 \Ex T & \leq S_0+\sum_{i=0}^{\kappa \sqrt k} \Pr{\text{not $i$-nice, but
 $(i+1)$-nice}} \cdot S_{i+1}+W\cdot\Pr{\text{not $\kappa\sqrt{k}$-nice}}\\ 
  & \leq S_0+\sum_{i=0}^{\kappa \sqrt k} \Pr{\text{not $i$-nice}}
         \cdot S_{i+1}+n^{\kappa kd}\cdot2n^{-\kappa dk} \\
  &\leq \frac{n^{c \cdot \sqrt k}}{\sigma ^2}+ 
  \sum_{i=0}^{\kappa \sqrt k} 2  n^{- i d\cdot \sqrt k}
         \cdot \frac{n^{(c+2+2i) \cdot \sqrt k}}{\sigma^2}+2\\ 
  & \leq \frac{n^{c \cdot \sqrt k}}{\sigma ^2}+ 
  \sum_{i=0}^{\kappa \sqrt k} 2 \cdot \frac{n^{(c+2) \sqrt
  k}}{\sigma^2}+2 \leq \poly\left(n^{\sqrt k}, \frac 1{\sigma}\right) \: .
  \qedhere
\end{align*}
\end{proof}

\section[Iterations with at least sqrt(k) Active Clusters]
        {Iterations with at least $\sqrt{k}$ Active Clusters}
\label{sec:atleastsqrtk}

In this section, we consider steps of the $k$-means algorithm in which at least
$\sqrt{k}$ different clusters gain or lose points. The improvement yielded by
such a step can only be small if none of the cluster centers changes its
position significantly due to the reassignment of points, which, intuitively,
becomes increasingly unlikely the more clusters are active. We show that,
indeed, if at least $\sqrt{k}$ clusters are active, then with high probability
one of them changes its position by $n^{-O(\sqrt{k})}$, yielding a potential
drop in the same order of magnitude. 

The following observation, which has also been used by Arthur and
Vassilvitskii~\cite{ArthurVassilvitskii:ICP:2006}, relates the movement of a
cluster center to the potential drop.

\begin{lemma}
\label{lemma:MovementImprovement}
If in an iteration of the $k$-means algorithm a cluster center changes
its position from $c$ to $c'$, then the potential drops by at least
$\norm{c-c'}^2$.
\end{lemma}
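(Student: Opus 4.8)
The plan is to use the standard variance/centroid identity that underlies the whole $k$-means method: for any finite point set $S \subseteq \RR^d$ and any point $z \in \RR^d$,
\[
  \sum_{x \in S} \norm{x - z}^2 \;=\; \sum_{x \in S} \norm{x - c(S)}^2 \;+\; |S|\cdot\norm{z - c(S)}^2,
\]
where $c(S) = \frac{1}{|S|}\sum_{x \in S} x$ is the center of mass of $S$. This is immediate by expanding $\norm{x-z}^2 = \norm{(x-c(S)) + (c(S)-z)}^2$ and observing that the cross term $\sum_{x \in S}(x - c(S))\cdot(c(S)-z)$ vanishes because $\sum_{x\in S}(x-c(S)) = 0$.

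Now I would apply this to the cluster $\cluster i$ whose center moves from $c$ to $c'$ during the iteration. The move happens in step~3 of the algorithm: after the reassignment in step~2, the cluster $\cluster i$ consists of some fixed set $S$ of points, and step~3 replaces the old center $c$ by the new center $c' = c(S)$. The contribution of $\cluster i$ to the potential changes from $\sum_{x \in S}\norm{x-c}^2$ to $\sum_{x \in S}\norm{x-c'}^2$, while the contributions of all other clusters are untouched by this center update. Applying the identity above with $z = c$ and using $c' = c(S)$ gives
\[
  \sum_{x \in S}\norm{x-c}^2 \;-\; \sum_{x \in S}\norm{x-c'}^2 \;=\; |S|\cdot\norm{c - c'}^2 \;\ge\; \norm{c-c'}^2,
\]
since $|S| = |\cluster i| \ge 1$. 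Hence the potential drops by at least $\norm{c-c'}^2$, as claimed.

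There is no real obstacle here; the only point requiring a word of care is that the lemma should be read as attributing the drop to the center-update step alone (step~3), so that the decreases caused by different cluster centers in the same iteration are accounted separately and do not interact — which is exactly what is needed in Lemma~\ref{lem:upper} when summing potential drops over a sequence of iterations. If one prefers to compare the potential at the start of one iteration with the potential at the start of the next, the reassignment step~2 only decreases the potential further, so the bound $\norm{c-c'}^2$ still holds a fortiori.
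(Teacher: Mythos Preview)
Your proof is correct and essentially identical to the paper's: both invoke the centroid identity $\sum_{x\in S}\norm{x-z}^2 = \sum_{x\in S}\norm{x-c(S)}^2 + |S|\cdot\norm{z-c(S)}^2$, apply it with $z=c$ and $c(S)=c'$, and conclude via $|S|\ge 1$. Your added remark about the reassignment step only helping is a nice clarification but not needed for the lemma as used.
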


\begin{proof}
The potential is defined as
\[
  \sum_{x\in \points}\norm{x-c_x}^2\:,
\]
where $c_x$ denotes the center that is closest to $x$. We can rewrite
this as
\[
  \sum_{c\in\clusterSet}\sum_{x\in X_c}\norm{x-c}^2
  = \sum_{c\in\clusterSet}\left(
       \sum_{x\in X_c}\norm{x-\mass(X_c)}^2+|X_c|\cdot\norm{\mass(X_c)-c}^2
       \right)\:,
\]
where $X_c\subseteq \points$ denotes all points from $\points$ whose
closest center is $c$ and where $\mass(X_c)$ denotes the center of mass of $X_c$.

Let us consider the case that one center changes its position from $c$
to $c'$. Then $c'$ must be the center of mass of $X_c$. Furthermore,
$|X_c| \geq 1$. Hence, the potential drops by at least
\[
    \norm{\mass(X_c)-c}^2-\norm{\mass(X_c)-c'}^2
  = \norm{c'-c}^2-\norm{c'-c'}^2
  = \norm{c'-c}^2\:.\qedhere
\]
\end{proof}

Now we are ready to prove the main lemma of this section.

\begin{lemma}
\label{lem:largesqrtk}
The expected number of steps with at least $\sqrt{k}$ active clusters
until the potential drops by at least $1$ is bounded from above by
\[
   \poly\left(n^{\sqrt k}, \frac 1{\sigma}\right)\enspace.
\]
\end{lemma}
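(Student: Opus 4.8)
The plan is to bound the expected number of steps with at least $\sqrt{k}$ active clusters that occur before the potential has dropped by~$1$. By Lemma~\ref{lemma:MovementImprovement}, a step in which some cluster center moves by at least $\xi$ decreases the potential by at least $\xi^2$, so it suffices to exhibit an event of probability $1-o(W^{-1})$ on which \emph{every} step with at least $\sqrt{k}$ active clusters moves some center by at least $\eps' := \sigma\cdot n^{-c'\sqrt{k}}$ for a suitable constant $c'$. On this event at most $(\eps')^{-2} = n^{2c'\sqrt{k}}/\sigma^2$ such steps can precede a potential drop of~$1$, which is $\poly(n^{\sqrt{k}},1/\sigma)$; and since $k$-means never runs for more than $W$ steps, the complementary failure event adds only $W\cdot o(W^{-1}) = o(1)$ to the expectation. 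So the lemma reduces to the claim: with probability $1-o(W^{-1})$, for every pair of Voronoi partitions $P=(P_1,\ldots,P_k)$ and $P'=(P'_1,\ldots,P'_k)$ of $\points$ that differ in at least $\sqrt{k}$ classes there is an index $i$ with $\norm{\mass(P'_i)-\mass(P_i)}\ge\eps'$. This implies the statement because, after the first step, both the clustering at the start of a step and the one at its end are Voronoi partitions, and the old and new positions of center $i$ in that step are exactly $\mass(P_i)$ and $\mass(P'_i)$.

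To prove the claim I would union bound over $P$ and over $P'$ only --- at most $W$ choices each, hence at most $W^2\le n^{2\kappa kd}$ combined --- and, for each fixed pair, fix deterministically a subset $\clusterSet'$ of the active clusters $\{i:P_i\ne P'_i\}$ and one \emph{witness point} $b_i\in P_i\triangle P'_i$ per $i\in\clusterSet'$. The decisive point is that conditioning on $(P,P')$ already pins down the full sets $P_i$ and $P'_i$, so $\mass(P_i)$ and $\mass(P'_i)$ are explicit affine functions of the point positions; one therefore never enumerates the (possibly linear-size) set of points that actually change clusters, in contrast with the union bound over $B$ in Lemma~\ref{lemma:ProbAllPointsClose}. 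This is exactly what keeps the union bound at $n^{O(kd)}$ instead of $n^{\Omega(|B|)}$.

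For a fixed pair, I would bound the probability that $\norm{\mass(P'_i)-\mass(P_i)}<\eps'$ holds for all $i\in\clusterSet'$. Choose $\clusterSet'$ so that the witnesses $b_i$ are pairwise distinct: since at least $\sqrt{k}$ clusters are active and each changing point lies in the symmetric differences of exactly two clusters, a greedy matching argument yields such a $\clusterSet'$ of size at least $\sqrt{k}/2$. Conditioning on the positions of all points except the witnesses, each $\mass(P'_i)-\mass(P_i)$ becomes an affine function of $b_i$ with an invertible coefficient (a nonzero scalar times the identity), plus a dependence on at most one further witness $b_j$ --- the point $b_j$ whose move has cluster $i$ as its other endpoint. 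This dependency digraph has out-degree at most one, so, just as in the resolution of circular dependencies in the proof of Lemma~\ref{lemma:ProbAllPointsClose}, it admits an acyclic subset of at least $\sqrt{k}/4$ clusters; integrating out their witnesses in reverse topological order, each integration contributes at most $\Pr[r\text{ lies in a fixed ball of radius }\eps' n]\le(c\eps' n/\sigma)^{d}$ for a $d$-dimensional Gaussian $r$ of standard deviation $\sigma$. Hence the probability for the fixed pair is at most $(c\eps' n/\sigma)^{d\sqrt{k}/4}$, and the union bound gives $W^2\cdot(c\eps' n/\sigma)^{d\sqrt{k}/4}$, which is $o(W^{-1})$ once $c'$ is large enough relative to $\kappa$.

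The crux I expect is the balancing act behind the last two paragraphs: one must extract roughly $d\sqrt{k}$ almost independent factors of $\eps' n/\sigma$ out of a step that is only guaranteed to have $\ge\sqrt{k}$ active clusters --- hence each witness must furnish a full $d$-dimensional constraint, hence the matching and cycle-breaking steps --- while simultaneously keeping the union bound at $n^{O(kd)}$ --- hence one enumerates the partitions before \emph{and} after the step rather than the moving points. Only when both succeed at once is the affordable slack $\eps'$ as large as $\sigma\,n^{-O(\sqrt{k})}$ with an exponent independent of $d$ and $k$, as the target $\poly(n^{\sqrt{k}},1/\sigma)$ requires; for steps with fewer than $\sqrt{k}$ active clusters the same argument yields only $\eps'=\sigma\,n^{-O(k)}$, which is why those are handled by the separate machinery of Section~\ref{sec:atmostsqrtk}.
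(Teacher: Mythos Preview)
Your proposal has a genuine gap in the union bound. You claim at most $W$ choices each for $P$ and $P'$, but the bound $W$ on the number of Voronoi partitions holds only for a \emph{fixed} point set; the set of Voronoi partitions of $\points$ is itself a random object depending on all point positions, including your witnesses. You cannot simultaneously enumerate only $W^2$ pairs and still treat the witnesses as fresh Gaussians: once you integrate over witness positions, different positions yield different collections of Voronoi partitions, so summing $\Pr[\text{bad}(P,P')]$ over the ``$W^2$ pairs'' is not a well-defined operation. The honest union bound over all combinatorial partitions has $k^{2n}$ terms, which swamps your $(c\eps' n/\sigma)^{d\sqrt{k}/4}$ factor. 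The paper avoids this by first invoking Lemma~\ref{lemma:NumberOfClosePoints} with $a=1$: either some reassigned point is $\eps$-far from its bisector (giving a drop of $2\eps\Delta$, later handled via the bound on $\Pr[\Delta\le\delta]$), or fewer than $kd$ points move. In the latter case one union-bounds over the $<kd$ movers and their assignments ($n^{3kd}$ choices), fixes the remaining points \emph{adversarially}, and only then enumerates the at most $W$ Voronoi partitions of that now-deterministic set. This is precisely the separation you said you wanted to avoid, but it is what makes the enumeration legitimate.

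There is a second, smaller gap: your claim that $\mass(P'_i)-\mass(P_i)$ depends on ``at most one further witness $b_j$'' is false. The constraint for cluster $i$ involves every witness whose move has $i$ as its other endpoint, and several $b_j$'s can all point into the same $i$; your digraph has out-degree $\le 1$ but unbounded in-degree. The paper handles this with a dominating-set argument: take a dominating set $D$ of size at most $\ell/2$ in the exchange graph, and for each $i\notin D$ choose as witness a point that moves between $i$ and some neighbour in $D$. Then that witness lies in $B_{i'}\cup B'_{i'}$ for no other $i'\notin D$, so the constraints for $i\notin D$ become genuinely independent given the non-witness positions, yielding the clean product $(n\beta/\sigma)^{d\ell/2}$. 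Your acyclic-subgraph idea could be repaired along these lines, but not with the dependency structure you describe.
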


\begin{proof}
We consider one step of the $k$-means algorithm with at least
$\sqrt{k}$ active clusters. Let $\eps$ be defined as in
Lemma~\ref{lemma:NumberOfClosePoints} for $a=1$. We distinguish
two cases: Either one point that is reassigned during the considered
iteration has a distance of at least $\eps$ from the bisector that it crosses,
or all points are at a distance of at most $\eps$ from their respective bisectors.
In the former case, we immediately get a potential drop of at least
$2\eps\Delta$, where $\Delta$ denotes the minimal distance of two
cluster centers. In the latter case,
Lemma~\ref{lemma:NumberOfClosePoints} implies that with high
probability less than $kd$ points are reassigned during the considered
step. We apply a union bound over the choices for these points.
In the union bound, we fix not only these points but also the clusters
they are assigned to before and after the step. We denote by $A_i$ the
set of points that are assigned to cluster $\cluster i$ in both
configurations and we denote by $B_i$ and $B_i'$ the sets of points assigned to
cluster $\cluster i$ before and after the step, respectively,
except for the points in $A_i$. Analogously to
Lemma~\ref{lemma:NumberOfClosePoints}, we assume that the positions of the points in
$A_1\cup\ldots\cup A_k$ are fixed adversarially, and we apply
a union bound on the different partitions $A_1,\ldots,A_k$ realizable.
Altogether, we have a union bound over less than
\[
  n^{\kappa kd}\cdot n^{3kd} \le n^{(\kappa+3) \cdot kd}
\]
events.
Let $c_i$ be the position of the cluster center of
$\cluster i$ before the reassignment,
and let $c_i'$ be the position after the reassignment. Then
\[
   c_i = \frac{|A_i|\cdot \mass(A_i)+|B_i|\cdot
   \mass(B_i)}{|A_i|+|B_i|}\enspace,
\]
where $\mass(\cdot)$ denotes the center of mass of a point set. Since
$c_i'$ can be expressed analogously, we can write the change of
position of the cluster center of $C_i$ as
\[
   c_i - c_i' = 
   |A_i|\cdot
   \mass(A_i)\left(\frac{1}{|A_i|+|B_i|}-\frac{1}{|A_i|+|B_i'|}\right)
   +\frac{|B_i|\cdot \mass(B_i)}{|A_i|+|B_i|} -\frac{|B_i'|\cdot
   \mass(B_i')}{|A_i|+|B_i'|}\enspace.
\]

Due to the union bound, $\mass(A_i)$ and $|A_i|$ are fixed. Additionally,
also the sets $B_i$ and $B_i'$ are fixed but not the positions of
the points in these two sets. If we considered only a single center, then
we could easily estimate the probability that
$\norm{c_i-c_i'}\le\beta$. For this, we additionally fix all positions
of the points in $B_i\cup B_i'$ except for one of them, say $b_i$.
Given this, we can express the event $\norm{c_i-c_i'}\le\beta$ as the event
that $b_i$ assumes a position in a ball whose position depends on the
fixed values and whose radius, which depends on the number of points in 
$|A_i|$, $|B_i|$, and $|B_i'|$, is not larger than $n\beta$. Hence,
the probability is bounded from above by
\[
  \left(\frac{n\beta}{\sigma}\right)^d\enspace.
\]

However, we are interested in the probability that this is true for all
centers simultaneously. Unfortunately, the events are not independent for
different clusters. We estimate this probability by identifying a set of $\ell/2$
clusters whose randomness is independent enough, where
$\ell\ge\sqrt{k}$ is the number of active clusters. To be more
precise, we do the following: Consider a graph whose nodes are the active
clusters and that contains an edge between two nodes if and only if the
corresponding clusters exchange at least one point. We identify a
dominating set in this graph, i.e., a subset of nodes that covers the
graph in the sense that every node not belonging to this subset has at
least one edge into the subset. We can assume that the dominating
set, which we identify, contains at most half of the active
clusters. (In order to find such a dominating set, start with
the graph and throw out edges until the remaining graph is a tree.
Then put the nodes on odd layers to the left side and the nodes on even layers
to the right side, and take the smaller side as the dominating set.)

For every active center $C$ that is not in the dominating set, we do the
following: We assume that all the positions of the points in $B_i\cup
B_i'$ are already fixed except for one of them. Given this, we can use
the aforementioned estimate for the probability of
$\norm{c_i-c_i'}\le\beta$. If we iterate this over all points not in the
dominating set, we can always use the same estimate; the reason is that
the choice of the subset guarantees that, for every node not in the subset,
we have a point whose position is not fixed yet. This yields an upper bound of
\[
  \left(\frac{n\beta}{\sigma}\right)^{d\ell/2}\enspace.
\]
Combining this probability with the number of choices in the union
bound yields a bound of 
\[
  n^{(\kappa +3) \cdot
  kd}\cdot\left(\frac{n\beta}{\sigma}\right)^{d\ell/2}
  \le
  n^{(\kappa +3) \cdot
  kd}\cdot\left(\frac{n\beta}{\sigma}\right)^{d\sqrt{k}/2}\enspace.
\]
For
\[
  \beta = \frac{\sigma}{n^{(4\kappa +6)\cdot \sqrt{k}+1}}
\]
the probability can be bounded from above by $n^{-\kappa kd}\le W^{-1}$.

Now we also take into account the failure probability of $2W^{-1}$ from
Lemma~\ref{lemma:NumberOfClosePoints}. This yields that, with a probability
of at least $1-3W^{-1}$, the potential drops in
every iteration, in which at least $\sqrt k$ clusters are active, by at least
\begin{align*}
   \Gamma & := \min\{2\eps\Delta,\beta^2\}
   \ge \min\left\{
   \frac{\sigma^{8}\Delta}{1296n^{14+8\kappa}D^{6}d},
   \frac{\sigma^2}{n^{(8 \kappa + 12)\cdot\sqrt{k}+2}}
   \right\} \\
& \geq \min\left\{\Delta \cdot \poly\left(n^{-1}, \sigma\right), \poly\left(n^{-\sqrt k}, \sigma
\right)\right\} 
\end{align*}
since $d \leq n$ and $D$ is polynomially bounded in $\sigma$ and $n$.
The number $T$ of steps with at least $\sqrt{k}$ active clusters until
the potential has dropped by one can only exceed $t$ if $\Gamma\le
1/t$. Hence,
\begin{align*}
   \Ex{T} & \le \sum_{t=1}^{\infty}\Pr{T\ge t} +3W^{-1}\cdot W
   \le 3+\int_{t=0}^{\infty}\Pr{T\ge t}\,\dint t \\
   & \le 4+\int_{t=1}^{\infty}\PrB{\Gamma\le\frac{1}{t}}\,\dint t
   \: \le
   4+\beta^{-2}+\int_{t=\beta^{-2}}^{\infty}\PrB{\Gamma\le\frac{1}{t}}\,\dint t\\ 
   & \le 4+\beta^{-2}+\int_{t=\beta^{-2}}^{\infty}
   \PrB{\Delta \cdot \poly\left(\frac 1n, \sigma\right)
   \le\frac{1}{t}}\,\dint t\\
   & \le 4+\beta^{-2}+\int_{t=\beta^{-2}}^{\infty}
   \PrB{\Delta\le 
   \frac 1t \cdot \poly\left(n, \frac 1\sigma\right)
}\,\dint t\\
   & \le 4+\beta^{-2}+\int_{t=\beta^{-2}}^{\infty}
   \min\left\{1, 
   \left(\frac{(4d+16) \cdot n^4 \cdot \poly\bigl(n, \sigma^{-1}\bigr)}{t \cdot \sigma}\right)^d
\right\}\,\dint t
   \: = \poly\left(n^{\sqrt k}, \frac 1\sigma\right)\enspace,
\end{align*}
where the integral is upper bounded as in the proof of Lemma~\ref{lem:GeneralDropBy1}.
\end{proof}

\section{A Polynomial Bound in One Dimension}
\label{sec:1d}

In this section, we consider a one-dimensional set $\points \subseteq \RR$
of points. The aim of this section is to prove that the expected number of steps
until the potential has dropped by at least $1$ is bounded by a polynomial in
$n$ and $1/\sigma$.

We say that the point set $\points$ is \emph{$\eps$-spreaded} if the following
conditions are fulfilled:
\begin{itemize}
\item There is no interval of length $\eps$ that contains three or more points
      of $\points$.
\item For any four points $x_1, x_2, x_3, x_4$, where $x_2$ and $x_3$ may denote the same point,
      we have $|x_1-x_2| > \eps$
      or $|x_3-x_4| > \eps$.
\end{itemize}
The following lemma justifies the notion of $\eps$-spreadedness.

\begin{lemma}
Assume that $\points$ is $\eps$-spreaded. Then the potential drops
by at least $\frac{\eps^2}{4n^2}$ in every iteration.
\end{lemma}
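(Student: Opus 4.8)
The plan is to follow the same two-pronged structure that appears in Lemmas~\ref{lem:epssepdrop} and~\ref{lem:deltasparsedrop}: distinguish an iteration in which some reassigned point is ``far'' (distance more than $\eps$) from the bisector of the two centers it crosses, from an iteration in which all reassigned points are ``close.'' In the first case I would argue exactly as before: if a point $x$ switches from cluster $\cluster i$ to $\cluster j$ and lies at distance at least $\eps$ from the bisector of $c_i$ and $c_j$, then the potential drops by at least $\|x-c_i\|^2-\|x-c_j\|^2 = (2x-c_i-c_j)\cdot(c_j-c_i) \ge 2\eps\cdot\|c_i-c_j\|$, and it remains to bound the distance $\|c_i-c_j\|$ from below. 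In one dimension the centers $c_i$ and $c_j$ are separated by the dividing point, and since each cluster holds at most $n$ points and at least one point is at distance $\ge\eps$ from it (using the first spreadedness condition, which forbids three points in a short interval, to make sure not all of $\cluster i\cup\cluster j$ is crammed next to the bisector), I get $\|c_i-c_j\| \ge \eps/n$ or so, giving a drop of order $\eps^2/n^2$.

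In the second case, all reassigned points are within $\eps$ of their respective bisectors, so every dividing point of an active pair of clusters is within $\eps$ of the corresponding boundary of the old Voronoi partition; by the second $\eps$-spreadedness condition this forces the configuration to be very rigid, and in particular the centers of mass of the affected clusters cannot move much between the old and the new configuration. I would make this precise by expressing the new center of each active cluster as the old center plus a correction term that is a difference/quotient of sums of the reassigned points, and then bound each coordinate of the displacement in terms of $\eps$ using the spreadedness hypothesis (this is the one-dimensional analogue of the key-value machinery). If no center moves, the clustering does not change and $k$-means has terminated; so some center moves by at least a quantity of order $\eps/n^2$, and Lemma~\ref{lemma:MovementImprovement} then yields a potential drop of at least $(\eps/n^2)^2 = \eps^2/n^4$, which is at least $\frac{\eps^2}{4n^2}$ up to the constant — so I expect the cleanest bookkeeping to give $\frac{\eps^2}{4n^2}$ as stated.

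The main obstacle I anticipate is the second case: turning ``all reassigned points are $\eps$-close to their bisectors'' into a quantitative lower bound on how far some center of mass must move. Unlike the general-dimension argument, here I want to avoid a union bound over key-values inside the lemma (that probability bound should already be folded into the definition of $\eps$-spreadedness, analogously to $\delta$-sparseness), so the argument has to be purely combinatorial/geometric: I must show that if every active center's displacement were smaller than some threshold $\tau(\eps)$, then the two spreadedness conditions together would force $K_1+K_2 \equiv K_3+K_4$ for the relevant combinations of point-sum expressions, i.e.\ the partition did not actually change — contradicting the assumption that the iteration is not a termination step. Getting the constant in $\tau(\eps)$ right so that the final drop comes out as exactly $\frac{\eps^2}{4n^2}$ (rather than merely $\poly(\eps/n)$) is the delicate part, and I would handle it by carefully tracking denominators $|A_i|+|B_i| \le n$ and using $|x_1-x_2|>\eps$ to separate the two offending sums.
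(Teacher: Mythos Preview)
Your two-case split (reassigned point far from vs.\ close to its bisector) tries to transplant the higher-dimensional machinery into one dimension, but Case~2 does not close. You yourself compute a center displacement of order $\eps/n^2$ and hence a drop of $\eps^2/n^4$, then write that this ``is at least $\frac{\eps^2}{4n^2}$ up to the constant'' --- these differ by a factor of $n^2$, not a constant, so the bound you actually reach is too weak for the lemma as stated. More fundamentally, you never explain how the second spreadedness condition, which concerns four \emph{data points} $x_1,x_2,x_3,x_4$, yields a quantitative lower bound on some center's movement; ``if no center moves the partition did not change'' gives only a strictly positive displacement, not one of the required order. The key-value analogy is misleading here: $\eps$-spreadedness is not a statement about linear combinations of points, so the ``$K_1+K_2\equiv K_3+K_4$'' contradiction you aim for has no obvious meaning.

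The paper's proof is more direct and exploits the one-dimensional order instead of bisector distances. Look at the leftmost active cluster $\cluster i$ (and symmetrically the rightmost $\cluster j$): since it is leftmost, it exchanges only with clusters to its right, hence it purely gains or purely loses, and the gained/lost block lies entirely to one side of the retained points. If the old and new points of $\cluster i$ together number at least three, the first spreadedness condition forces the incoming block's mass to sit at distance $\ge\eps/2$ from $\mass(A_i)$, so the center moves by $\ge\eps/(2n)$ and Lemma~\ref{lemma:MovementImprovement} gives the drop. Otherwise both $\cluster i$ and $\cluster j$ involve exactly two data points each, say $\{a_i,b_i\}$ and $\{a_j,b_j\}$ (with $b_i=b_j$ possible); now the second spreadedness condition is applied \emph{directly to these four points}, giving $|a_i-b_i|>\eps$ or $|a_j-b_j|>\eps$, so one of those centers jumps by $\ge\eps/2$. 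No far/close case split is needed.
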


\begin{proof}
Let $\cluster i$ be the left-most active cluster, and let 
$\cluster j$ be the right-most active cluster.

We consider $\cluster i$ first. $\cluster i$ exchanges only
points with the clusters to its right, for otherwise it would not be the
leftmost active cluster. Thus, it cannot gain and lose points simultaneously.
Assume that it gains points. Let $A_i$ be the set of points of $\cluster i$
before the iteration, and let $B_i$ be the set of points that it gains.
Obviously, $\min B_i > \max A_i$.
If $B_i \cup A_i$ contains at least three points, then we are done:
If $|A_i|\ge2$, then we consider the two rightmost points $x_1\le x_2$
of $A_i$ and the leftmost point $x_3$ of $B_i$.
These points are not within a common interval of size $\eps$.
Hence, $x_3$ has a distance of at least $\eps/2$ from the center of mass
$\mass (A_i)$ because $\dist(x_1,x_3)\ge\eps$, $x_1\le x_2\le x_3$, and
$\mass (A_i)\le (x_1+x_2)/2$. Hence,
\[
  \mass(B_i) \geq \mass (A_i) + \frac{\eps}{2}.
\]
Thus, the cluster center moves to the right from $\mass(A_i)$ to
\[
  \mass(A_i \cup B_i)  =
  \frac{|A_i| \cdot \mass(A_i) + |B_i| \cdot \mass(B_i)}{|A_i \cup B_i|} 
  \geq \frac{|A_i \cup B_i| \cdot \mass(A_i)  +
  |B_i|\cdot \frac{\eps}2}{|A_i \cup B_i|} \geq \mass(A_i) +
  \frac{\eps}{2n}.
\]
The case $|A_i|=1$ and $|B_i|\ge 2$ is analogous. 
The same holds if cluster $\cluster j$ switches from $A_j$ to $A_j \cup B_j$
with $|A_j \cup B_j| \geq 3$, or if $\cluster i$ or $\cluster j$ lose
points but initially have at least three points. Thus, in these cases,
a cluster moves by at least $\eps/(2n)$, which causes a potential drop
by at least $\eps^2/(4n^2)$.

It remains to consider the case that $|A_i \cup B_i| = 2 = |A_j \cup B_j|$.
Thus, $A_i = \{a_i\}$, $B_i = \{b_i\}$, and also
$A_j = \{a_j\}$, $B_j = \{b_j\}$.
We restrict ourselves to the case that $\cluster i$ consists
only of $a_i$ and gains $b_i$ and that $\cluster j$ has $a_j$ and
$b_j$ and loses $b_j$.
If only two clusters are active, we have $b_i = b_j$, and we have only
three different points. Otherwise, all four points are distinct. This
allows us to bring $\eps$-spreadedness into play.
We have either $|a_i - b_i| \geq \eps$
or $|a_j - b_j| \geq \eps$. But then either the center of $\cluster i$
or the center of $\cluster j$ moves by at least $\eps/2$, which
implies that the potential decreases by at least $\eps^2/4 \geq
\eps^2/(4n^2)$.
\end{proof}

Assume that $\points$ is $\eps$-spreaded. Then the number of iterations
until the potential has dropped by at least $1$ is at most $4n^2/\eps^2$
by the lemma above. Let us estimate the probability that
$\points$ is $\eps$-spreaded.

\begin{lemma}
The probability that $\points$ is not $\eps$-spreaded is bounded from above
by $\frac{2n^4 \eps^2}{\sigma^2}$.
\end{lemma}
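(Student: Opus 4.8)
The plan is to bound the probability of each of the two ways in which $\points$ can fail to be $\eps$-spreaded separately, and then combine them by a union bound. For the first condition, I would estimate the probability that some interval of length $\eps$ contains three or more of the points. Since it suffices to consider intervals whose left endpoint is one of the $n$ points, I would union-bound over the $n$ choices of a ``first'' point $x_i$ and over the $\binom{n-1}{2}$ choices of two further points $x_j,x_\ell$; for a fixed triple, the event that $x_j$ and $x_\ell$ both land within distance $\eps$ of $x_i$ has probability at most $(2\eps/(\sqrt{2\pi}\sigma))^2 \le (\eps/\sigma)^2$, since each of $x_j,x_\ell$ independently must fall in a fixed interval of length $2\eps$ once $x_i$ is conditioned on (and a one-dimensional Gaussian density is at most $(\sqrt{2\pi}\sigma)^{-1}$). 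This contributes at most $n^3(\eps/\sigma)^2$, up to constants.

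For the second condition, I would union-bound over the (at most $n^4$, really $n^3$ since $x_2,x_3$ may coincide) choices of the four points. For a fixed choice $x_1,x_2,x_3,x_4$, I want to bound the probability that simultaneously $|x_1-x_2|\le\eps$ and $|x_3-x_4|\le\eps$. If all four points are distinct this is immediate: conditioning on $x_2$ and $x_4$, the events ``$x_1$ within $\eps$ of $x_2$'' and ``$x_3$ within $\eps$ of $x_4$'' are independent, each of probability at most $2\eps/(\sqrt{2\pi}\sigma)\le\eps/\sigma$, giving at most $(\eps/\sigma)^2$. If $x_2=x_3$, then conditioning on this common point we still get two independent events for $x_1$ and $x_4$, again probability at most $(\eps/\sigma)^2$. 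Multiplying by the $n^3$-or-so choices gives again a bound of the same order.

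Adding the two contributions and absorbing the numerical constants (the $\sqrt{2\pi}$ factors and the polynomial factors in $n$, all bounded using $n\ge 2$) into the stated clean bound $2n^4\eps^2/\sigma^2$ finishes the proof. The only mildly delicate point — and the one I would be most careful about — is the bookkeeping in the second condition when $x_2$ and $x_3$ are allowed to be the same point: one must check that the conditioning argument still isolates two genuinely independent Gaussian coordinates so that the two ``within $\eps$'' events multiply, rather than producing a single event of probability only $\eps/\sigma$. Everything else is a routine union bound together with the standard fact that a Gaussian places mass at most $\eps/\sigma$ in any fixed interval of length $2\eps$.
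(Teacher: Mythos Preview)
Your proposal is correct and follows essentially the same approach as the paper: a union bound over triples for the first property (fix one point, bound the probability that two others fall within $\eps$ of it) and a union bound over quadruples for the second (condition on two of the points and use independence of the remaining two Gaussians). The only cosmetic difference is that for the all-distinct case you condition on $x_2,x_4$ whereas the paper conditions on $x_2,x_3$; both choices work, and your separate handling of the $x_2=x_3$ case matches the paper's argument exactly.
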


\begin{proof}
For the first property, let us consider any three points $x_1, x_2, x_3$,
and assume that $x_1$ is fixed
arbitrarily. Then, in order to share an interval of size $\eps$, we must
have $|x_i - x_1| \leq \eps$ for $i = 2,3$. Since $x_2$ and $x_3$ are
independent, this happens with a probability of at most
$\bigl(\frac{2\eps}{\sqrt{2\pi} \sigma}\bigr)^2 \leq \bigl(\frac \eps\sigma\bigr)^2$.
There are at most $n^3 \leq n^4$ choices for $x_1, x_2, x_3$.

For the second property, consider any $x_1, \ldots, x_4$, and assume that
$x_2$ and $x_3$ are fixed. Then the probability that $|x_1 - x_2| \leq \eps$
and $|x_3 - x_4| \leq \eps$ is at most $\bigl(\frac{\eps}{\sigma}\bigr)^2$.
There are at most $n^4$ choices for $x_1, \ldots, x_4$.

Overall, by a union bound, the probability that $\points$ is not $\eps$-spreaded
is at most $\frac{2n^4 \eps^2}{\sigma^2}$.
\end{proof}

Now we have all ingredients for the proof of the main lemma of this section.

\begin{lemma}
\label{lem:d1}
The number of iterations of $k$-means until the potential has dropped by at least $1$
is bounded by a polynomial in $n$ and $1/\sigma$.
\end{lemma}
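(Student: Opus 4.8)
The plan is to combine the two preceding lemmas of this section in a standard ``bad event is polynomially unlikely'' argument. From the first lemma, if $\points$ is $\eps$-spreaded then the potential drops by at least $\eps^2/(4n^2)$ per iteration, so at most $4n^2/\eps^2$ iterations pass before the potential has dropped by $1$. From the second lemma, the probability that $\points$ fails to be $\eps$-spreaded is at most $2n^4\eps^2/\sigma^2$. The only real issue is that these two statements pull $\eps$ in opposite directions: small $\eps$ makes spreadedness likely but the iteration bound weak, while large $\eps$ does the reverse. So, exactly as in the proofs of Lemma~\ref{lem:GeneralDropBy1} and Lemma~\ref{lem:largesqrtk}, I would not fix a single $\eps$ but rather integrate over a family of thresholds.

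Concretely, let $T$ be the number of iterations until the potential has dropped by at least $1$. The first lemma gives that $T > t$ can only happen if $\points$ is not $\eps$-spreaded for $\eps = 2n/\sqrt{t}$ (since otherwise $T \le 4n^2/\eps^2 = t$). Hence $\Pr{T \ge t} \le \Pr{\points \text{ not } \eps\text{-spreaded with } \eps = 2n/\sqrt t} \le 2n^4 \cdot (2n/\sqrt t)^2 / \sigma^2 = 8n^6/(\sigma^2 t)$. This bound is only useful once it drops below $1$, i.e.\ for $t \ge t_0 := 8n^6/\sigma^2$. Splitting the sum at $t_0$ and using the worst-case bound $W$ on the number of iterations, I get
\[
  \Ex{T} = \sum_{t=1}^{W} \Pr{T \ge t}
         \le t_0 + \sum_{t=t_0+1}^{W} \frac{8n^6}{\sigma^2 t}
         \le t_0 + \frac{8n^6}{\sigma^2}\cdot \ln W
         = \frac{8n^6}{\sigma^2}\bigl(1 + \ln W\bigr).
\]
Since $W \le n^{\kappa kd} \le n^{\kappa n^2}$, we have $\ln W = O(n^2 \log n)$, so the whole expression is polynomial in $n$ and $1/\sigma$, which is exactly the claim of Lemma~\ref{lem:d1}. (As elsewhere in the paper, one should also observe $D$ is polynomially bounded, but for $d=1$ the spreadedness argument does not even invoke $D$, so this is not needed here.)

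I do not expect any serious obstacle: the two input lemmas do all the geometric and probabilistic work, and what remains is the routine ``integrate the tail bound over the threshold parameter'' manipulation that has already appeared twice in the paper. The only point requiring a little care is making sure the truncation at $W$ is handled correctly so that the $\ln W$ factor — rather than something unbounded — appears; this is harmless because $\ln W$ is itself polynomial in $n$. If one prefers to avoid even the $\ln W$ term, one can instead bound $\Pr{T \ge t}$ by $\min\{1, 8n^6/(\sigma^2 t)\}$ and note that $\sum_{t \ge 1}\min\{1,c/t\}$ is not finite, so the truncation at $W$ really is needed; thus the clean statement is a polynomial bound that carries a harmless logarithmic-in-$W$, i.e.\ polynomial-in-$n$, factor.
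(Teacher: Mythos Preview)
Your proposal is correct and follows essentially the same argument as the paper: set $\eps = 2n/\sqrt{t}$, bound $\Pr{T\ge t}$ by the non-spreadedness probability $8n^6/(\sigma^2 t)$, and sum up to the worst-case bound $W$ to get $O(n^6\sigma^{-2}\ln W)$, which is polynomial in $n$ and $1/\sigma$. The only cosmetic difference is that you split the sum at $t_0 = 8n^6/\sigma^2$, whereas the paper simply sums $8n^6/(\sigma^2 t)$ from $t=1$ (the terms exceeding $1$ for small $t$ are harmless since the harmonic sum still gives $O(\ln W)$).
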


\begin{proof}
Let $T$ be the random variable of the number of iterations until the potential has
dropped by at least $1$. If $T \geq t$, then $\points$ cannot be $\eps$-spreaded
with $4n^2/\eps^2 \leq t$. Thus, in this case, $\points$ is not
$\eps$-spreaded with $\eps = \frac{2n}{\sqrt t}$. In the worst case,
$k$-means runs for at most $n^{\kappa k}$ iterations. Hence,
\begin{align*}
\Ex T & = \sum_{t = 1}^{n^{\kappa k}} \Pr{T \geq t}
\leq \sum_{t = 1}^{n^{\kappa k}} \PrB{\text{$\points$ is not $\frac{2n}{\sqrt t}$-spreaded}} \\
& \leq \sum_{t = 1}^{n^{\kappa k}} \frac{8n^4 n^2}{t\sigma^2}
\in O\left(\frac{n^6}{\sigma^2} \cdot \log n^{\kappa k}\right) \subseteq
O\left(\frac{n^7}{\sigma^2} \cdot \log n\right)\:.\qedhere
\end{align*}
\end{proof}

Finally, we remark that, by choosing $\eps = \frac{\sigma}{n^{2+c}}$, we obtain that
the probability that the number of iterations until the potential has dropped by at least
exceeds a polynomial in $n$ and $1/\sigma$  is bounded from above by $O(n^{-2c})$.
This yields a bound on the running-time of $k$-means for $d = 1$ that holds with
high probability.

\section{Putting the Pieces Together}
\label{sec:putting}

In the previous sections, we have only analyzed the expected number of
iterations until the potential drops by at least $1$. To bound the expected
number of iterations that $k$-means needs to terminate, we need an upper
on the potential in the beginning. To get this, we use the following lemma.

\begin{lemma}
\label{lem:deviation}
Let $x$ be a one-dimensional Gaussian random variable with standard
deviation $\sigma$ and mean $\mu\in[0,1]$. Then,
for all $t \geq 1$,
\[
  \Pr{x\notin[-t,1+t]} < \sigma \cdot
  \exp\left(-\frac{t^2}{2\sigma^2}\right).
\]
\end{lemma}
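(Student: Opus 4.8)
The plan is to split the failure event into its two tails, $\Pr{x \notin [-t,1+t]} = \Pr{x < -t} + \Pr{x > 1+t}$, and to bound each separately after standardizing. Writing $z = (x-\mu)/\sigma$ for the associated standard normal random variable, I would use the hypothesis $\mu \in [0,1]$ to argue that both thresholds lie at distance at least $t$ from the mean: since $\mu \geq 0$, the event $x < -t$ forces $z < (-t-\mu)/\sigma \leq -t/\sigma$, while since $\mu \leq 1$, the event $x > 1+t$ forces $z > (1+t-\mu)/\sigma \geq t/\sigma$. By the symmetry of the standard normal distribution, each of the two tails is therefore at most $\Pr{z > t/\sigma}$, so that $\Pr{x \notin [-t,1+t]} \leq 2\,\Pr{z > t/\sigma}$.

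It then remains to bound the standard Gaussian tail. The crucial point is that the target bound carries a factor of $\sigma \leq 1$, so the crude Chernoff estimate $\Pr{z > s} \leq e^{-s^2/2}$ is too weak; instead I would invoke the sharper Mills-ratio inequality, obtained from the elementary estimate
\[
  \int_s^\infty e^{-u^2/2}\,\dint u
    \leq \int_s^\infty \frac us\, e^{-u^2/2}\,\dint u
    = \frac1s\, e^{-s^2/2}
\]
valid for $s > 0$ (since $u/s \geq 1$ on the domain), which yields $\Pr{z > s} \leq \frac{1}{s\sqrt{2\pi}}\,e^{-s^2/2}$. Substituting $s = t/\sigma$ gives $\Pr{z > t/\sigma} \leq \frac{\sigma}{t\sqrt{2\pi}}\,e^{-t^2/(2\sigma^2)}$, and hence
\[
  \Pr{x \notin [-t,1+t]} \leq \frac{2}{t\sqrt{2\pi}}\cdot\sigma\, e^{-t^2/(2\sigma^2)}.
\]

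Finally I would use the hypothesis $t \geq 1$: the leading constant satisfies $\frac{2}{t\sqrt{2\pi}} \leq \frac{2}{\sqrt{2\pi}} < 1$, which absorbs the prefactor and delivers the claimed strict inequality $\Pr{x \notin [-t,1+t]} < \sigma\, \exp\bigl(-t^2/(2\sigma^2)\bigr)$. I do not expect any serious obstacle here, since the argument is a routine Gaussian tail estimate; the only point that requires care is matching the exact constant in the statement. The essential observation is that it is precisely the $1/t$ decay of the Mills-ratio bound, combined with $t \geq 1$, that shaves the factor $2/\sqrt{2\pi}$ below $1$: had one used only the $e^{-s^2/2}$ tail bound, the resulting leading constant of $2$ would exceed the desired $\sigma$-prefactor for every $\sigma < 2$, so the sharper tail estimate is genuinely needed.
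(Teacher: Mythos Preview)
Your proof is correct. The paper states this lemma without proof, treating it as a standard Gaussian tail estimate, so there is nothing to compare against; your argument via standardization and the Mills-ratio inequality is exactly the routine derivation one would expect, and the use of $t\ge 1$ to absorb the constant $2/\sqrt{2\pi}<1$ is the right way to obtain the strict inequality with the stated prefactor.
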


For $D = \sqrt{2\sigma^2\ln(n^{1+\kappa kd}d\sigma)}\le\poly(n,\sigma)$,
the probability that any component of any
of the $n$ data points is not contained in the hypercube
$\cube=[-D,1+D]^d$ is bounded from above by $n^{-\kappa kd}\leq
W^{-1}$. This implies that $\points \subseteq \cube$ 
with a probability of at least $1-W^{-1}$.

In the beginning, we made the assumption that $\sigma \leq 1$. While this covers
the small values of $\sigma$, which we consider as more relevant, the assumption
is only a technical requirement, and we can get rid of it: The number
of iterations that $k$-means needs is invariant under scaling of the point set
$\points$. Now assume that $\sigma > 1$. Then we consider $\points$ scaled down
by $1/\sigma$, which corresponds to the following model: The adversary chooses
points from the hypercube $[0,1/\sigma]^d \subseteq [0,1]^d$, and then we add
$d$-dimensional Gaussian vectors with standard deviation $1$ to every
data point. The expected running-time that $k$-means needs on this instance
is bounded from above by the running-time needed for adversarial points chosen from
$[0,1]^d$ and $\sigma = 1$, which is $\poly(n) \leq \poly(n, 1/\sigma)$.

\subsection{Proof of Theorem~\ref{thm:d1}}

We obtain a bound that is polynomial in $n$ and $1/\sigma$
from Lemmas~\ref{lem:d1} and~\ref{lem:deviation}: First, after one
iteration, the potential is bounded from above by $\poly(n, D) = \poly(n)$.
If this is not the case, we bound the number of iterations by $W$, which
adds $W \cdot W^{-1}$ to the expected number of iterations.
Second, the expected number of iterations until the potential has dropped
by at last $1$ is bounded by $\poly(n, 1/\sigma)$, which yields
a bound of $\poly(n, 1/\sigma)$ until the algorithm terminates.
This proves Theorem~\ref{thm:d1}.

The result that the probability that the number of iterations
exceeds a polynomial in $n$ and $1/\sigma$ is at most $O(1/\poly(n))$
follows immediately.

\subsection{Proof of Theorem~\ref{thm:sqrtk}}
\label{ssec:sqrtk}

In the remainder of this section, we restrict ourselves to $d \geq 2$.
For $d = 1$, we already have a polynomial bound according to Theorem~\ref{thm:d1}
and Section~\ref{sec:1d}.

After $\poly\bigl(n^{\sqrt k}, 1/\sigma\bigr)$ iterations, we have
\begin{itemize}
\item at least $\poly\bigl(n^{\sqrt k}, 1/\sigma\bigr)$ sequences of four
      consecutive iterations, each of which with at most $\sqrt k$ active
      clusters, or
\item at least $\poly\bigl(n^{\sqrt k}, 1/\sigma\bigr)$ iterations with at least
      $\sqrt k$ active clusters.
\end{itemize}
Thus, by Lemmas~\ref{lem:smallsqrtk} and~\ref{lem:largesqrtk}, the expected
number of steps until the potential has dropped by one is at most
$\poly\bigl(n^{\sqrt k}, 1/\sigma\bigr)$.

After the first iteration, the potential is at most $nd \cdot (2D+1)^2$. As
argued above, we can restrict ourselves to $\sigma \leq 1$, which implies
$D \leq \poly(n)$. This yields that the expected number of steps until
termination is at most
\[
    nd \cdot (2D+1)^2 \cdot \poly\left(n^{\sqrt k},\frac{1}\sigma\right)
  = \poly\left(n^{\sqrt{k}}, \frac 1{\sigma}\right) \: ,
\]
provided that $\points \subseteq \cube$. If $\points \not\subseteq \cube$, we
bound the number of iterations by the worst-case bound of $W$, which contributes
only $W^{-1} \cdot W = 1$ to the expected number of iterations. This proves
Theorem~\ref{thm:sqrtk}.

\subsection{Proofs of Theorem~\ref{thm:o1} and Corollary~\ref{cor:poly}}
\label{ssec:o1}

We exploit Lemma~\ref{lem:GeneralDropBy1} with $a = 2$.
Then the expected number of iterations until the potential has dropped
by at least $1$ is bounded from above by
\[
 k^{kd} \cdot \poly\left(n, \frac 1\sigma \right) \: .
\]
Again, after the first iteration, the potential is at most
$nd \cdot (2D+1)^2 \leq \poly(n)$ with a probability of at least $1-W^{-1}$.
This shows that the expected number of iterations, provided $\points \subseteq \cube$,
is bounded from above by
\[
k^{kd} \cdot \poly\left(n, \frac 1\sigma\right)\:.
\]
The event $\points\not\subseteq \cube$ contributes only $1$ to the expected number
of iterations as argued in the previous section.
This completes the proof of Theorem~\ref{thm:o1}.

If $k, d \in O\bigl(\sqrt{\log n/\log \log n}\bigr)$, then
$k^{kd} \leq \poly(n)$, which proves Corollary~\ref{cor:poly}.

\section{Conclusions}

We have proved two upper bounds for the smoothed running-time of the $k$-means
method: The first bound is $\poly(n^{\sqrt k}, 1/\sigma)$. The second bound is
$k^{kd} \cdot \poly(n, 1/\sigma)$, which decouples the exponential growth in $k$
and $d$ from the number of points and the standard deviation. In particular,
this yields a smoothed running-time that is polynomial in $n$ and $1/\sigma$
for $k, d \in O(\sqrt{\log n/\log \log n})$.

The obvious question now is whether a bound exists that is polynomial in $n$ and
$1/\sigma$, without exponential dependence on $k$ or $d$.
We believe that such a bound exists. However, we suspect that
new techniques are required to prove it; bounding the smallest possible improvement
from below might not be sufficient. The reason for this is that the number of
possible partitions, and thus the number of possible $k$-means steps, grows
exponentially in $k$, which makes it more likely for small improvements to exist
as $k$ grows.

Finally, we are curious if our techniques carry over to other heuristics. In
particular Lemma~\ref{lemma:NumberOfClosePoints} is quite general, as it bounds the
number of points from above that are close to the boundaries of the Voronoi partitions
that arise during the execution of $k$-means. In fact, we believe that
a slightly weaker version of Lemma~\ref{lemma:NumberOfClosePoints} is
also true for arbitrary Voronoi partitions and not only for those
arising during the execution of $k$-means. This insight might turn out to be
helpful in other contexts as well.

\section*{Acknowledgement}

We thank David Arthur, Dan Spielman, Shang-Hua Teng, and Sergei Vassilvitskii
for fruitful discussions and comments.


\begin{thebibliography}{10}

\bibitem{Arthur:Comm:2008}
David Arthur.
\newblock Smoothed analysis of the $k$-means method.
\newblock Manuscript, 2008.

\bibitem{ArthurVassilvitskii:HowSlow:2006}
David Arthur and Sergei Vassilvitskii.
\newblock How slow is the $k$-means method?
\newblock In Nina Amenta and Otfried Cheong, editors, {\em Proc. of the 22nd
  ACM Symposium on Computational Geometry (SOCG)}, pages 144--153. ACM Press,
  2006.

\bibitem{ArthurVassilvitskii:ICP:2006}
David Arthur and Sergei Vassilvitskii.
\newblock Worst-case and smoothed analysis of the {ICP} algorithm, with an
  application to the $k$-means method.
\newblock In {\em Proc. of the 47th Ann. IEEE Symp. on Foundations of Computer
  Science (FOCS)}, pages 153--164. IEEE Computer Society, 2006.

\bibitem{Badoiu}
Mihai B{\u{a}}doiu, Sariel Har-Peled, and Piotr Indyk.
\newblock Approximate clustering via core-sets.
\newblock In {\em Proc. of the 34th Ann.\ ACM Symposium on Theory of Computing
  (STOC)}, pages 250--257. ACM Press, 2002.

\bibitem{Berkhin}
Pavel Berkhin.
\newblock Survey of clustering data mining techniques.
\newblock Technical report, Accrue Software, San Jose, CA, USA, 2002.

\bibitem{Duda}
Richard~O. Duda, Peter~E. Hart, and David~G. Stork.
\newblock {\em Pattern Classification}.
\newblock John Wiley \& Sons, 2000.

\bibitem{HarPeled}
Sariel Har-Peled and Bardia Sadri.
\newblock How fast is the $k$-means method?
\newblock {\em Algorithmica}, 41(3):185--202, 2005.

\bibitem{InabaEA:WeightedVoronoi:2000}
Mary Inaba, Naoki Katoh, and Hiroshi Imai.
\newblock Variance-based $k$-clustering algorithms by {V}oronoi diagrams and
  randomization.
\newblock {\em IEICE Transactions on Information and Systems},
  E83-D(6):1199--1206, 2000.

\bibitem{Kumar}
Amit Kumar, Yogish Sabharwal, and Sandeep Sen.
\newblock A simple linear time $(1+\varepsilon$)-approximation algorithm for
  $k$-means clustering in any dimensions.
\newblock In {\em Proc. of the 45th Ann. IEEE Symp. on Foundations of Computer
  Science (FOCS)}, pages 454--462, 2004.

\bibitem{Lloyd}
Stuart~P. Lloyd.
\newblock Least squares quantization in {PCM}.
\newblock {\em IEEE Transactions on Information Theory}, 28(2):129--137, 1982.

\bibitem{Matousek}
Ji{\v{r}}{\'i} Matou{\v{s}}ek.
\newblock On approximate geometric $k$-clustering.
\newblock {\em Discrete and Computational Geometry}, 24(1):61--84, 2000.

\bibitem{SpielmanTeng:SmoothedAnalysisWhy:2004}
Daniel~A. Spielman and Shang-Hua Teng.
\newblock Smoothed analysis of algorithms: Why the simplex algorithm usually
  takes polynomial time.
\newblock {\em Journal of the ACM}, 51(3):385--463, 2004.

\bibitem{SpielmanTeng:SmoothedFoCM:2006}
Daniel~A. Spielman and Shang-Hua Teng.
\newblock Smoothed analysis of algorithms and heuristics: Progress and open
  questions.
\newblock In Luis~M. Pardo, Allan Pinkus, Endre S{\"u}li, and Michael~J. Todd,
  editors, {\em Foundations of Computational Mathematics, Santander 2005},
  pages 274--342. Cambridge University Press, 2006.

\end{thebibliography}

%

\end{document}